\definecolor{TUMblue}{HTML}{0065bd}
\DeclarePairedDelimiter{\floor}{\lfloor}{\rfloor}
\newtheorem{theorem}{Theorem}[section]
\newtheorem{corollary}[theorem]{Corollary}
\newtheorem{lemma}[theorem]{Lemma}
\newtheorem{proposition}[theorem]{Proposition}
\newtheorem{definition}[theorem]{Definition}
\newtheorem{remark}[theorem]{Remark}
\newcommand{\generate}[2]{\langle #1 \rangle_{\mathsf{#2}}}
\newcommand{\iu}{\mathrm{i}\mkern1mu}
\newcommand{\C}{\mathbb{C}}
\newcommand{\R}{\mathbb{R}}
\newcommand{\Z}{\mathbb{Z}}
\newcommand{\mf}[1]{\mathfrak{#1}}
\newcommand{\mb}[1]{\mathbf{#1}}
\newcommand{\down}{\shortdownarrow}
\newcommand{\plusdown}{{\mathrel{\ooalign{$\hss\shortdownarrow$\hss\cr\hss\raisebox{2pt}{\scalebox{0.6}{$-$}}\hss}}}}
\newcommand{\diag}{\operatorname{diag}}
\newcommand{\qdiag}{\operatorname{qdiag}}
\newcommand{\conv}{\operatorname{conv}}
\newcommand{\SU}{\operatorname{SU}}
\newcommand{\SO}{\operatorname{SO}}
\newcommand{\U}{\operatorname{U}}
\renewcommand{\Im}{\operatorname{Im}}
\renewcommand{\Re}{\operatorname{Re}}
\newcommand{\Ad}{\operatorname{Ad}}
\newcommand{\ad}{\operatorname{ad}}
\newcommand{\tr}{\operatorname{tr}}
\renewcommand{\det}{\operatorname{det}}
\newcommand{\id}{\mathds1}
\newcommand{\reach}{\mathsf{reach}}
\newcommand{\derv}{\mathsf{derv}}
\renewcommand{\epsilon}{\varepsilon}
\newcommand{\ketbrax}[2]{\ket{#1}\!\bra{#2}}
\newcommand{\dmin}{{d_{\min}}}
\newcommand{\Uloc}{\U_{\mathrm{loc}}}
\newcommand{\SUloc}{\mathrm{SU}_{\mathrm{loc}}}
\newcommand{\Uswap}{U_{\mathrm{swap}}}
\newcommand{\uloc}{\mf u_{\mathrm{loc}}}
\newcommand{\suloc}{\mf{su}_{\mathrm{loc}}}
\newcommand{\vecs}{\mf{H}}
\newcommand{\mad}{\ad^d}
\newcommand{\AIII}{\mathsf{AIII}}
\newcommand{\CI}{\mathsf{CI}}
\newcommand{\DIII}{\mathsf{DIII}}
\begin{document}

\title[Entanglement in Bipartite Systems with Local Unitary Control]{Entanglement in Bipartite Quantum Systems with Fast Local Unitary Control}

\author{Emanuel Malvetti}
\address{School of Natural Sciences, Technische Universit\"at M\"unchen, 85737 Garching, Germany, and Munich Center for Quantum Science and Technology (MCQST) \& Munich Quantum Valley (MQV)}

\begin{abstract}
The well-known Schmidt decomposition, or equivalently, the complex singular value decomposition, states that a pure quantum state of a bipartite system can always be brought into a ``diagonal'' form using local unitary transformations. 
In this work we consider a finite-dimensional closed bipartite system with fast local unitary control. 
In this setting one can define a reduced control system on the singular values of the state which is equivalent to the original control system. 
We explicitly describe this reduced control system and prove equivalence to the original system.
Moreover, using the reduced control system, we prove that the original system is controllable and stabilizable and we deduce quantum speed limits.
We also treat the fermionic and bosonic cases in parallel, which are related to the Autonne--Takagi and Hua factorization respectively. \bigskip

\noindent\textbf{Keywords.} Bipartite entanglement, reduced control system, local unitary control, controllability, stabilizability, quantum speed limits, singular value decomposition, Autonne--Takagi factorization, Hua factorization \medskip

\noindent\textbf{MSC Codes.} 
81Q93, 
15A18, 
81P42, 
93B05 
\end{abstract}

\maketitle



\section{Introduction}

\subsection{Motivation}

Entanglement is one of the distinguishing features of quantum mechanics, and it lies at the core of emerging quantum technologies. 
Generating sufficient entanglement is a prerequisite for pure state quantum computation~\cite{Vidal03}, it is at the root of quantum cryptography~\cite{BB84reprint,Gisin02} and it can act as a resource for increasing the sensitivity of quantum sensors~\cite{Degen17}.
At the same time many basic questions about entanglement remain unanswered, and our theoretical understanding of multipartite and mixed state entanglement is limited~\cite{Bengtsson17}.
One setting in which entanglement is well-understood is the case of a bipartite quantum system in a pure state, where the singular value decomposition (a.k.a.\ the Schmidt decomposition) can be employed.
Similar tools can also be applied to indistinguishable subsystems~\cite{Huckleberry13} (bosons and fermions), where the singular value decomposition is to be replaced with the Autonne--Takagi and Hua factorization respectively. 

In this paper we apply quantum control theory~\cite{DiHeGAMM08,dAless21} to closed bipartite quantum systems subject to fast local unitary control.
Factoring out the fast controllable degrees of freedom leads to a reduced control system defined on the singular values of the state.
Such reduced control systems have been addressed in~\cite[Ch.~22]{Agrachev04} under simplifying assumptions.
In particular, there, the reduced state space is assumed to be a smooth manifold without singularities.
In our case singularities occur where two singular values coincide or one singular value vanishes, and in fact they constitute the main technical difficulty.
In~\cite{Reduced23} this assumption is removed such that we can apply the results therein to the present setting.

Similar ideas have been applied to open Markovian quantum systems with fast unitary control in~\cite{Sklarz04,Yuan10,rooney2018} and by the author et al.\ in~\cite{LindbladReduced23}. 
There, the reduced control system describes the evolution of the eigenvalues of the density matrix (a positive semidefinite matrix of unit trace) representing a mixed quantum state. 
In quantum thermodynamics, a natural simplification of this system has been explored in~\cite{CDC19,vE_PhD_2020,MTNS2020_1,OSID23}.

Finally, in an upcoming paper~\cite{BipartiteOptimal} we will apply the methods derived here combined with optimal control theory to derive explicit control solutions for low dimensional systems.

\subsection{Outline}

We start by defining the bilinear control system describing a closed bipartite quantum system with fast local unitary control in Section~\ref{sec:bilinear-system}. 
We consider the case of distinguishable subsystems, as well as that of bosonic and fermionic subsystems. 
In Section~\ref{sec:matrix-decs} we show that these systems are related to certain matrix diagonalizations (which themselves are related to symmetric Lie algebras) and we derive corresponding reduced control systems in Section~\ref{sec:reduced-system}.
In Section~\ref{sec:global-phase} we show how some controllability assumptions can be weakened if one neglects the global phase of the quantum state.
Then we turn to some applications.
In Section~\ref{sec:ctrl-stab} we use the reduced control system to prove that the full bilinear control system is always controllable and stabilizable.
In Section~\ref{sec:speed-limit} we derive some quantum speed limits for the evolution of the singular values.
The connection between the present quantum setting and symmetric Lie algebras is drawn in Appendix~\ref{app:sym-lie-alg}, and this allows us to apply the results from~\cite{Reduced23}.

\section{Control Systems} 

First we define what we mean by a closed bipartite quantum system with fast local unitary control before we derive the equivalent reduced control system obtained by factoring out the local unitary action.
The reduced control system describes the dynamics of the singular values of the state and due to the normalization of the quantum state the reduced state space turns out to be a hypersphere.
The main result of this section is the equivalence, in a precise sense which will be made clear, of the full bilinear control system and the reduced one.
In particular there is no loss of information incurred by passing to the reduced control system.
A brief discussion of global phases concludes the section.

\subsection{Bilinear Control System} \label{sec:bilinear-system}

Let two finite dimensional Hilbert spaces $\C^{d_1}$ and $\C^{d_2}$ of dimensions $d_1,d_2\geq2$ representing the subsystems be given.
The total Hilbert space of the bipartite system is then $\C^{d_1}\otimes\C^{d_2}$.%
\footnote{Abstractly the symbol $\otimes$ denotes the tensor product, but since we always work with concrete vectors and matrices we interpret $\otimes$ as the Kronecker product. This identifies the vector spaces $\C^{d_1}\otimes\C^{d_2}$ and $\C^{d_1d_2}$ and similarly for matrices.}
We denote by $\mf{u}(d)$ the unitary Lie algebra consisting of skew-Hermitian matrices in $d$ dimensions.
Our goal is to study the full bilinear control system~\cite{Jurdjevic97,Elliott09} defined by the following controlled Schrödinger equation%
\footnote{Throughout the paper we set $\hbar=1$ and thus write the (uncontrolled) Schrödinger equation as $\ket{\dot\psi(t)}=-\iu H_0\ket{\psi(t)}$.} 
on $\C^{d_1}\otimes\C^{d_2}$:
\begin{equation} \label{eq:bilinear} \tag{\sf B} 
\ket{\dot\psi(t)} = -\iu\Big(H_0 
+ \sum_{i=1}^{m_1} u_{i}(t) E_{i}\otimes\id 
+ \sum_{j=1}^{m_2} v_{j}(t) \id\otimes F_{j}
\Big)\ket{\psi(t)},
\end{equation}
where $H_0 \in \iu\mf{u}(d_1)\otimes\iu\mf{u}(d_2) \cong \iu\mf{u}(d_1d_2)$ is the \emph{drift Hamiltonian} (or \emph{coupling Hamiltonian}), $E_{i}\in\iu\mf{u}(d_1)$ and $F_{j}\in\iu\mf{u}(d_2)$ are the \emph{control Hamiltonians}, and $u_{i}$ and $v_{j}$ are the corresponding \emph{control functions}. 
We make the following key assumptions:
\begin{enumerate}[(I)]
\item\label{it:fast-control} The control functions $u_i$ and $v_j$ are locally integrable, in particular they may be unbounded.
\item\label{it:full-control} The control Hamiltonians generate the full local unitary Lie algebra:
$$\generate{\iu E_{i}\otimes\id,\, \id\otimes\iu F_{j} : \,i=1,\ldots,m_1,\,j=1,\ldots,m_2}{\mathsf{Lie}}=\uloc(d_1,d_2),
$$
\end{enumerate}
where $\uloc(d_1,d_2):=(\mf u(d_1)\otimes\id)+(\id\otimes\,\mf u(d_2))$ is the Lie algebra of the Lie group $\Uloc(d_1,d_2):=\U(d_1)\otimes\U(d_2)$ of local unitary transformations.
Put simply, we have fast control over $\Uloc(d_1,d_2)$.

\begin{remark} \label{rmk:control-group}
One may also define the group $\SUloc(d_1,d_2):=\mathrm S(\U(d_1)\otimes\U(d_2))$ of \emph{local special unitary operations}.
Consider the local unitary $U=e^{\iu\phi_1}\id\otimes e^{\iu\phi_2}\id$. 
Then $\det(U)=e^{\iu d_1d_2(\phi_1+\phi_2)}$ and hence, if $U\in\SUloc$, the value of the applied phase $e^{\iu(\phi_1+\phi_2)}$ is restricted to a discrete set.
Thus, if we do not neglect the global phase of the state $\ket\psi$, fast control over $\SUloc(d_1,d_2)$ is not sufficient to generate all local unitary state transfers.
To simplify the exposition we assume fast control over $\Uloc(d_1,d_2)$, but we will revisit this issue in Section~\ref{sec:global-phase} to show how this assumption can be weakened.
\end{remark}

This covers the case of two distinguishable subsystems. 
However, we also wish to treat systems composed of two indistinguishable subsystems. 
In this case both subsystems have the same dimension $d:=d_1=d_2$. 
In the \emph{bosonic} case, the state $\ket\psi$ is unchanged by swapping the two subsystems, i.e., $\Uswap\ket\psi=\ket\psi$, where $\Uswap\ket{\psi_1}\otimes\ket{\psi_2} = \ket{\psi_2}\otimes\ket{\psi_1}$.
These ``symmetric'' states lie in the space $\mathrm{Sym}^2(\C^d)$.
In the \emph{fermionic} case, swapping yields a phase factor of $-1$, i.e., $\Uswap\ket\psi=-\ket\psi$. 
Such ``skew-symmetric'' states are contained in the space $\bigwedge^2(\C^d)$. 
In both cases the set of local unitaries applicable to the system is restricted to symmetric local unitaries $\Uloc^s(d):=\{V\otimes V:V\in\U(d)\}$. 
The corresponding Lie algebra is $\uloc^s(d):=\{\iu E\otimes\id + \id\otimes\iu E: \iu E\in\mf u(d)\}$ and is isomorphic to $\mf u(d)$.\footnote{Note however that the map $\U(d)\to\Uloc^s(d)$ given by $V\mapsto V\otimes V$ is a double cover with kernel $\{\id,-\id\}$.}
The set of all coupling Hamiltonians applicable to such systems is the set
$\mf u^s(d^2) = \{\iu H\in \mf u(d^2) : \Uswap H \Uswap^*=H \}$.
Hence, in the case of two indistinguishable subsystems the bilinear control system takes the form:
\begin{equation} \label{eq:bilinear2} \tag{\sf B'}  
\ket{\dot\psi(t)} = -\iu\Big(H_0
+ \sum_{i=1}^{m} u_{i}(t) (E_i\otimes\id_2+\id_1\otimes E_i)
\Big)\ket{\psi(t)}\,,
\end{equation}
where $H_0\in\iu \mf u^s(d^2)$ and $E_1,\dots,E_m\in\iu\mf u(d)$. 
Assumption~\ref{it:fast-control} remains unchanged, but Assumption~\ref{it:full-control} is slightly modified to state:
\begin{enumerate}[(I),resume]
\item \label{it:full-control-2} The local control Hamiltonians $\iu E_i\otimes\id_2+\id_1\otimes\iu E_i$ for $i=1,\ldots,m$ generate the full Lie algebra $\uloc(d)$.
\end{enumerate}
Note that the only difference between the bosonic and fermionic case is that the initial state of the control system~\eqref{eq:bilinear2} lies in $\mathrm{Sym}^2(\C^d)$ and $\bigwedge^2(\C^d)$ respectively.

\begin{remark} \label{rmk:control-group-sym}
Similarly to Remark~\ref{rmk:control-group}, one might consider control Hamiltonians of the form $E\otimes\id + \id\otimes E$ with the additional restriction of $\tr(E)=0$, defining the Lie algebra $\suloc^s(d)$.
In this case we again lose control over the global phase of the state.
Thus, for simplicity, we do not make this assumption here and refer to Section~\ref{sec:global-phase} for more details.
\end{remark}

\subsection{Related Matrix Decompositions} \label{sec:matrix-decs}
 
In order to derive and understand the reduced control system (which we will introduce in the next section) obtained by factoring out the local unitary action, we must first understand the mathematical structure of this action.
In the case of distinguishable subsystems, the local unitary action corresponds the complex singular value decomposition, an thus the only invariants of a state under this action are its singular values, which therefore are the natural choice for the reduced state.
The bosonic and fermionic cases correspond to less well-known matrix decompositions called the Autonne--Takagi and Hua factorization respectively.
Importantly all of these matrix decompositions also correspond to certain symmetric Lie algebras, and this is the key to applying the results on reduced control systems from~\cite{Reduced23} to the full bilinear control systems~\eqref{eq:bilinear} and~\eqref{eq:bilinear2}.

Let $\{\ket{i}_1\}_{i=1}^{d_1}$ and $\{\ket{j}_2\}_{j=1}^{d_2}$ denote the standard orthonormal bases%
\footnote{In the following we will usually omit the index denoting the subsystem, since it will be clear form the order.} 
of $\C^{d_1}$ and $\C^{d_2}$ respectively, and let $\ket{\psi}\in\C^{d_1}\otimes\C^{d_2}$ be a state vector.
The components $(\psi_{ij})_{i,j=1}^{d_1,d_2}$ of $\ket{\psi}$ are uniquely given by 
$$
\ket{\psi}=\sum_{i,j=1}^{d_1,d_2}\psi_{ij}\ket{i}_1\otimes\ket{j}_2=:\sum_{i,j=1}^{d_1,d_2}\psi_{ij}\ket{ij}.
$$ 
Hence every bipartite state can be represented by a matrix.%
\footnote{Our convention is consistent with~\cite[Sec.~9.2]{Bengtsson17} and~\cite{Huckleberry13}. 
In other contexts one often defines the vectorization operation $\mathrm{vec}(\cdot)$ which turns a matrix into a vector by stacking its columns and satisfies $\mathrm{vec}(AXB)=(B^\top\otimes A)\mathrm{vec}(X)$. 
Our convention is slightly different in that, identifying $\C^{d_1}\otimes\C^{d_2}\cong\C^{d_1d_2}$ via the Kronecker product, we may write $\ket\psi=\mathrm{vec}(\psi^\top)$.}
More precisely, we have used the canonical isomorphism  
\begin{align} \label{eq:matrix-iso}
\C^{d_1}\otimes\C^{d_2} \to \C^{d_1}\otimes(\C^{d_2})'\cong\C^{d_1,d_2}
\quad \ket{i}_1\ket{j}_2\mapsto\ket{i}_1\!\bra{j}_2,
\end{align}
where $(\cdot)'$ denotes the dual space.
We will use $\psi\in\C^{d_1,d_2}$ to denote the matrix corresponding to $\ket{\psi}$ under this isomorphism and vice versa.
For distinguishable subsystems, the matrix representing $\psi$ is an arbitrary complex matrix in $\C^{d_1,d_2}$ (the constraint induced by the normalization of the state $\ket\psi$ will be discussed in Remark~\ref{rmk:sphere} below). 
For indistinguishable subsystems, it holds that $d:=d_1=d_2$,
and the matrix $\psi\in\mf{sym}(\C,d)$ is symmetric in the bosonic case and $\psi\in\mf{asym}(\C,d)$ is skew-symmetric in the fermionic case.

Let $V\otimes W\in\Uloc(d_1,d_2)$ be a local unitary, and set $\ket{\phi}=V\otimes W\ket{\psi}$.
The coordinates of $V$ are then defined by $V=\sum_{k,i=1}^{d_1}V_{ki}\ketbrax{k}{i}$, and similarly $W=\sum_{l,j=1}^{d_2}W_{lj}\ketbrax{l}{j}$.
Then $\phi_{kl}=\sum_{i,j=1}^{d_1,d_2}V_{ki} W_{lj} \psi_{ij}$.
In matrix form this can be rewritten as $\phi=V\psi W^\top$.
Another way to state this is $V\otimes W\ket\psi=\ket{V\psi W^\top}$.
Note that in the case of indistinguishable subsystems we have $V=W$.
This suggests a connection to certain matrix diagonalizations, namely:
\begin{itemize}
\item The \emph{complex singular value decomposition} in the distinguishable subsystems case (often referred to as the Schmidt decomposition in the context of quantum mechanics). It states that for any complex matrix $\psi\in\C^{d_1,d_2}$ there exist unitary matrices $V\in\U(d_1)$ and $W\in\U(d_2)$ such that $V\psi W^*$ is real and diagonal. 
The correspondence is established by
$$
\ket\phi = V\otimes\overline W\ket\psi \iff \phi=V\psi W^*.
$$
\item The \emph{Autonne--Takagi factorization} in the bosonic case. It states that for any complex symmetric matrix $\psi\in\mf{sym}(\C,d)$, there is a unitary $V\in\U(d)$ such that $V\psi V^\top$ is real and diagonal.
The correspondence is then given by
$$
\ket\phi = V\otimes V\ket\psi \iff \phi=V\psi V^\top.
$$
\item The \emph{Hua factorization} in the fermionic case. It states that for any complex skew-symmetric matrix $\psi\in\mf{asym}(\C,d)$, there is a unitary $V\in\U(d)$ such that $V\psi V^\top$ is real and \emph{quasi-diagonal} in the following sense: if $d$ is even, then $V\psi V^\top$ is block diagonal with blocks of size $2\times2$, if $d$ is odd, there is an additional block of size $1\times1$ in the lower right corner. Note that the quasi-diagonal matrix is still skew-symmetric, and so the diagonal is zero. The correspondence to local unitary state transformations is as in the bosonic case.
\end{itemize}
Note that the Autonne--Takagi factorization and the Hua factorization are special cases of singular value decompositions, and hence the resulting (quasi\=/)diagonal matrix will have the singular values on its (quasi\=/)diagonal.
In the first case we denote by $\Sigma\subset\C^{d_1}\otimes\C^{d_2}$ the subspace of ``real diagonal states'' corresponding to the set of real diagonal matrices $\mf{diag}(d_1,d_2,\R)$ under the isomorphism~\eqref{eq:matrix-iso}.
Clearly $\Sigma$ has dimension $\dmin:=\min(d_1,d_2)$.
In the second case we write $\Sigma\subset\mathrm{Sym}^2(\C^d)$ for the $d$-dimensional subspace corresponding to the real diagonal matrices $\mf{diag}(d,\R)$.
Similarly, in the third case we write $\Xi\subset\bigwedge^2(\C^d)$ for the $\floor{d/2}$-dimensional subspace of states corresponding to the real (skew-symmetric) quasi-diagonal matrices $\mf{qdiag}(d,\R)$. 
We will use the following maps to send the singular values to their corresponding (quasi\=/)diagonal state: 
\begin{alignat*}{4}
\diag&:\R^{\dmin}\to\Sigma,   \quad & \,(\sigma_i)_{i=1}^{\dmin}&\mapsto \textstyle\sum_{i=1}^{\dmin}\sigma_i\ket{i}\otimes\ket{i}, \\
\qdiag&:\R^{\floor{d/2}}\to\Xi, \quad & \,(\xi_i)_{i=1}^{\floor{d/2}}&\mapsto \tfrac1{\sqrt2}\textstyle\sum_{i=1}^{\floor{d/2}}\xi_i(\ket{2i-1}\otimes\ket{2i}-\ket{2i}\otimes\ket{2i-1}).
\end{alignat*}
A convenient shorthand notation is $\ket\sigma = \diag(\sigma)$ resp.\ $\ket\xi = \qdiag(\xi)$.
We will always use the standard Euclidean inner product on $\R^n$, and on $\C^n$ we will use the real part $\Re(\braket{\cdot|\cdot})$ of the standard inner product.
Then, due to the inclusion of the factor $1/{\sqrt2}$ it holds that the maps above are $\R$-linear isometric isomorphisms.
Furthermore we denote\footnote{The symbol $\plusdown$ is a combination of $\down$ and $+$.} by $\Sigma_\plusdown\subset\Sigma$ the cone of states $\diag(\sigma)$ where the diagonal elements $(\sigma_i)_{i=1}^\dmin$ are non-negative and arranged in non-increasing order, and analogously we write $\Xi_\plusdown\subset\Xi$ for the quasi-diagonal states $\qdiag(\xi)$ where the $(\xi_i)_{i=1}^{\floor{d/2}}$ are non-negative and arranged in non-increasing order. 
The set $\Sigma_\plusdown$ resp.\ $\Xi_\plusdown$ is called the \emph{Weyl chamber}.

Conversely to the diagonal embeddings we also define the following orthogonal projections: 
\begin{align*}
\Pi_\Sigma &: \begin{cases}
\C^{d_1}\otimes\C^{d_2} \to \R^{\dmin},\,\ket\psi \mapsto (\braket{ii|\psi})_{i=1}^{\dmin} & \text{ for distinguishable subsystems} \\
\mathrm{Sym}^2(\C^d) \to \R^d,\,\ket\psi \mapsto (\braket{ii|\psi})_{i=1}^d  & \text{ for bosonic subsystems} \end{cases} \\
\Pi_\Xi &: \textstyle\bigwedge^2(\C^d) \to \R^{\floor{d/2}},  \, \ket\psi \mapsto \sqrt2 (\bra{2i}_2\bra{2i-1}_1\ket\psi)_{i=1}^{\floor{d/2}}\,.
\end{align*}
More precisely these are the orthogonal projections on $\Sigma$ and $\Xi$ followed by $\diag^{-1}$ and $\qdiag^{-1}$ respectively.

\begin{remark} \label{rmk:sphere}
The normalization of the quantum state entails a normalization of the corresponding singular values.
More precisely, the norm of the quantum state $\ket\psi$ equals the Frobenius norm of the matrix $\psi$, and hence $\ket\psi$ has unit norm if and only if the singular values $\sigma=(\sigma_i)_{i=1}^\dmin$ of $\psi$ satisfy $\sum_{i=1}^\dmin \sigma_i^2=1$.
Hence the singular values define a point on the unit sphere of dimension $\dmin-1$ in the indistinguishable and $d-1$ in the bosonic case.
In the fermionic case there are again $d$ singular values and they lie on the unit sphere of dimension $d-1$. 
However there is an additional restriction as the singular values come in pairs of opposite values. 
Taking only one singular value of each pair and multiplying it by $\sqrt2$ (and ignoring the $0$ singular value in the odd dimensional case)
we find that the resulting vector lies on the unit sphere of dimension $\floor{d/2}-1$. 
In all cases will call this the \emph{Schmidt sphere}, denoted $S^{\dmin-1}$ in the distinguishable case, $S^{d-1}$ in the bosonic case, and $S^{\floor{d/2}-1}$ in the fermionic case.
The maps $\diag$ and $\qdiag$ then yield isometric embeddings of the Schmidt sphere into $\Sigma$ and $\Xi$ respectively.
The Schmidt sphere will be the state space of our reduced control system, which we will define in the following section.
The Weyl chambers $\Sigma_\plusdown$ and $\Xi_\plusdown$ then yield corresponding Weyl chambers in the Schmidt sphere $S^{\dmin-1}_\plusdown$, $S^{d-1}_\plusdown$, and $S^{\floor{d/2}-1}_\plusdown$.
\end{remark}

\begin{remark}
Often one considers the Schmidt values, which are the squares of the singular values, within the standard simplex, which is then called the Schmidt simplex~\cite[Sec.~16.4]{Bengtsson17}. This is easier to visualize, but in our case would lead to unnatural dynamics, which is why we will remain on the sphere. Note also that if one of the systems is a qubit, then the Schmidt sphere is a circle parametrized by the Schmidt angle~\cite[p.~440]{Bengtsson17}.
\end{remark}

\subsection{Reduced Control System} \label{sec:reduced-system}

Due to Assumptions~\ref{it:fast-control} and~\ref{it:full-control} (resp.~\ref{it:full-control-2}), we can move arbitrarily quickly within the local unitary orbits of the system if we ignore the drift term~\cite[Prop.~2.7]{Elliott09}. 
With the drift this is still approximately true.
In the previous section we have shown that using local unitary transformations, we can always obtain a state of (quasi\=/)diagonal form which is completely determined by the singular values of the state. 
In particular, within the bilinear control systems~\eqref{eq:bilinear} and~\eqref{eq:bilinear2} two states are effectively equivalent if and only if they have the same singular values (up to order and sign). 
This strongly suggests that there should exist a ``reduced'' control system, defined on the singular values --- or rather the Schmidt sphere (cf.\ Remark~\ref{rmk:sphere}). 
This is indeed the case. 
The reduced control system is defined in greater generality in~\cite[Sec.~2.1]{Reduced23} using symmetric Lie algebras, which unify many well-known matrix diagonalizations, such as the ones encountered in the previous section. 
No knowledge of symmetric Lie algebras is presupposed here, but the connections are expounded in Appendix~\ref{app:sym-lie-alg}.

Let us briefly motivate the definition of the reduced control system.
Let $\ket\psi$ be a solution to the full control system~\eqref{eq:bilinear}.
Assume that the corresponding matrix $\psi$ can be diagonalized in a differentiable way as $\psi(t)=V(t)\tilde\sigma(t)W^\top(t)$, and that it is regular\footnote{We say that $\psi$ is regular if its singular values are distinct and non-zero.}.
Here $\tilde\sigma(t)$ is the diagonal matrix with diagonal elements $\sigma(t)$. 
Then by differentiating (cf.~\cite[Lem.~2.3]{Diag22}) we obtain that $\dot\sigma=-H_{V\otimes W}\sigma$ (and analogously $\dot\sigma=-H_{V\otimes V}^s\sigma$ or $\dot\xi=-H_{V\otimes V}^a\xi$ in the bosonic and fermionic cases) where
\begin{align*}
-H_{V\otimes W}   &:= -\Pi_\Sigma\circ (V\otimes W)^* \,\iu H_0 (V\otimes W)\circ\diag, \\
-H_{V\otimes V}^s &:= -\Pi_\Sigma\circ (V\otimes V)^* \,\iu H_0 (V\otimes V)\circ\diag, \\
-H_{V\otimes V}^a &:= -\Pi_\Xi\circ (V\otimes V)^* \,\iu H_0 (V\otimes V)\circ\qdiag.
\end{align*}
We call $H_{V\otimes W}$, $H_{V\otimes V}^s$ and $H_{V\otimes V}^a$ the \emph{induced vector fields}.
The collection of induced vector fields is denoted
$$
\mf H := \{-H_U:U\in\Uloc(d_1,d_2)\}, \,
\mf H^s := \{-H_U^s:U\in\Uloc^s(d)\}, \,
\mf H^a := \{-H_U^a:U\in\Uloc^s(d)\}.
$$
Note that these are linear vector fields on $\R^{\dmin}$, $\R^d$ and $\R^{\floor{d/2}}$ respectively, and hence they can be represented as matrices in the respective standard basis.
We will later see that these are indeed skew-symmetric matrices and that the corresponding dynamics preserve the Schmidt sphere.
The following proposition gives the explicit expressions.

\begin{proposition} \label{prop:reduced-hams}
Let $H_0\in\iu\mf u(d_1d_2)$ denote an arbitrary coupling Hamiltonian and let $E_k\in\iu\mf u(d_1)$ and $F_k\in\iu\mf u(d_2)$ for $k=1,\ldots,r$ be given such that $H_0=\sum_{k=1}^r E_k\otimes F_k$. 
Then the induced vector field $H_{V\otimes W}$ on $\R^{\dmin}$ 
takes the form%
\footnote{Here $\circ$ denotes the Hadamard (elementwise) product of two matrices. If the (square) matrices are of different size the resulting matrix will have the size of the smaller one. Similarly $\Im$ denotes the elementwise imaginary part.}
$$
-H_{V\otimes W}=\sum_{k=1}^r \Im(V^*E_kV\circ W^*F_kW).
$$
Now assume additionally that $d:=d_1=d_2$ and that $H_0\in\iu\mf{u}^s(d^2)$ is symmetric.
For bosonic systems, on $\R^d$, 
we obtain
$$
-H^s_{V\otimes V}=\sum_{k=1}^r \Im(V^*E_kV \circ V^*F_kV).
$$
For fermionic systems, on $\R^{\floor{d/2}}$, we obtain
$$
-(H^a_{V\otimes V})_{ij} =\sum_{k=1}^r \Im( (V^*E_kV)_{2i-1,2j-1} (V^*F_kV)_{2i,2j} - (V^*E_kV)_{2i-1,2j} (V^*F_kV)_{2i,2j-1}).
$$
\end{proposition}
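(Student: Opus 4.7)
The plan is to reduce to the case of a rank-one coupling $H_0 = E \otimes F$ by exploiting $\R$-linearity in $H_0$ of both sides: the left-hand side is manifestly linear in $H_0$, and the right-hand side is bilinear in each pair $(E_k, F_k)$ and thus descends to a well-defined linear functional of $H_0 = \sum_k E_k \otimes F_k$. In the bosonic and fermionic cases, where $H_0 \in \iu\mf{u}^s(d^2)$ is symmetric, we may furthermore assume the decomposition itself is swap-invariant (each summand $E_k \otimes F_k$ paired with $F_k \otimes E_k$) by averaging with $\Uswap$; this leaves $H_0$ unchanged and will be essential in the fermionic calculation.

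For the distinguishable case, use $(V \otimes W)^* (E \otimes F)(V \otimes W) = A \otimes B$ with $A := V^* E V$ and $B := W^* F W$. Evaluating on $\diag(\sigma) = \sum_i \sigma_i \ket{i}\otimes\ket{i}$ yields $\sum_{i,j,k} \sigma_i A_{ji} B_{ki} \ket{j}\otimes\ket{k}$. The key point is that $\Pi_\Sigma$ is the orthogonal projection onto $\Sigma$ with respect to the \emph{real} inner product $\Re\braket{\cdot|\cdot}$ followed by $\diag^{-1}$, so it reads off the real part of the $\braket{jj|\cdot}$ coefficient. Hence the $j$-th component of $\iu(A \otimes B)\diag(\sigma)$, post-projection, is $\Re(\iu A_{ji} B_{ji})\sigma_i = -\Im(A_{ji} B_{ji})\sigma_i$ (summed over $i$), and the outer minus sign in $-H_{V \otimes W}$ turns this into $\Im(A \circ B)_{ji}$, matching the claim. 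The bosonic formula is the same calculation with $W = V$.

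For the fermionic case, apply $\iu(A \otimes B)$ to $\qdiag(\xi) = \tfrac{1}{\sqrt 2}\sum_i \xi_i(\ket{2i-1}\otimes\ket{2i} - \ket{2i}\otimes\ket{2i-1})$ and then project using $\Pi_\Xi$, which extracts $\tfrac{1}{\sqrt 2}$ times the real part of the coefficient of $\ket{2j-1}\otimes\ket{2j}$ minus that of $\ket{2j}\otimes\ket{2j-1}$. A direct expansion produces two bracketed expressions. Under the swap-symmetric decomposition, a relabeling of $k$ and exchange $A \leftrightarrow B$ shows that the second bracket is the negative of the first; the two contributions therefore add rather than cancel, the factors of $\tfrac{1}{\sqrt 2}$ multiply with the doubling to produce $1$, and $\Re \circ \iu = -\Im$ combined with the outer minus sign yields the claimed two-term formula.

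The main technical obstacle is the fermionic bookkeeping: the antisymmetry of the $\qdiag$ basis, the two-component structure of $\Pi_\Xi$, the required swap-symmetrization of the decomposition, and the various $\sqrt 2$ normalizations must all be tracked simultaneously. The distinguishable and bosonic formulas, by contrast, follow from a one-line Hadamard-product calculation once the real-part convention of $\Pi_\Sigma$ is made explicit.
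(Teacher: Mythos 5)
Your proof is correct and follows essentially the same route as the paper: reduce to a product Hamiltonian $E\otimes F$ by $\R$-linearity, conjugate by local unitaries, and read off the appropriate Hadamard-product or $2\times2$-minor entries under the projection. The only divergence is in the fermionic bookkeeping, where the paper applies the single-entry form of $\Pi_\Xi$ directly to each (individually non-skew-symmetric) product summand and sums, while you swap-symmetrize the decomposition first and use the genuine two-entry orthogonal projection onto $\Xi$; both arrive at the claimed formula because the total matrix $\iu(V\otimes V)^*H_0(V\otimes V)\qdiag(e_j)$ is skew-symmetric, so the two versions of $\Pi_\Xi$ agree after summing over $k$.
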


\begin{proof}
To simplify notation we first consider a product Hamiltonian $H_0=E\otimes F$. 
Let $\sigma\in\R^{\dmin}$ be given.
Let $i,j\in\{1,\ldots,\dmin\}$ and compute
\begin{align*}
-(H_{V\otimes W})_{ij}
=
(-\Pi_\Sigma (\iu V^*EV e_je_j^\top W^\top F^\top (W^*)^\top))_i
=
\Im( (V^*EV)_{ij} (W^*FW)_{ij} ).
\end{align*}
The case of general $H_0$ for distinguishable subsystems as well as the bosonic case follow by linearity.

Now let us consider the fermionic case. 
Again for simplicity we consider a product Hamiltonian $H=E\otimes F$ since the general result follows by linearity. 
Then $E$ and $F$ can be seen as $n\times n$ block matrices with blocks of size $2\times 2$ denoted $E_{(ij)}$ and $F_{(ij)}$ (and, in the odd-dimensional case, an additional row and column).
Moreover let $J=\big[\begin{smallmatrix}0&1\\-1&0\end{smallmatrix}\big]$ denote the standard symplectic form.
Then we can compute in a similar fashion
\begin{align*}
-(H_{V\otimes V}^a)_{ij}
&=
(-\Pi_\Xi (\iu (V^*EV) \qdiag(e_j) (V^* F^\top V)^\top))_i
\\&=
\Im(((V^*EV)_{(ij)} J ((V^*FV)_{(ij)})^\top)_{12})\,.
\end{align*}
This concludes the proof.
\end{proof}

\begin{remark}
The fermionic case can be interpreted as follows.
We consider the even dimensional case for simplicity. 
First define the matrices $G_k$ obtained from $E_k$ and $F_k$ by choosing all the odd indexed rows from $E_k$ and all the even indexed rows from $F_k$. 
More precisely, $(G_k)_{ij} = (E_k)_{ij}$ if $i$ is odd and $(G_k)_{ij} = (F_k)_{ij}$ if $i$ is even. 
Then we divide the $G_k$ into blocks of size $2\times2$ and we define a new matrix of half the size by replacing each block by the imaginary part of its determinant. 
\end{remark}

We see immediately that if the drift Hamiltonian is local, then the induced vector fields vanish:

\begin{corollary} \label{coro:local-drift}
If the drift Hamiltonian $H_0$ is local, meaning that $H_0\in\iu\uloc(d_1,d_2)$, then $H_{V\otimes W}=0$ for every $V\otimes W\in\Uloc(d_1,d_2)$. The same is true for $H_{V\otimes V}^s$ and $H_{V\otimes V}^a$ whenever $H_0\in\iu\uloc^s(d)$ and $V\otimes V\in\Uloc^s(d)$.
\end{corollary}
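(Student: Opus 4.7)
The plan is to invoke Proposition~\ref{prop:reduced-hams} directly, after writing the local Hamiltonian in a form compatible with the decomposition $H_0=\sum_k E_k\otimes F_k$ used there. A local Hamiltonian $H_0\in\iu\uloc(d_1,d_2)$ can be written as $H_0 = A\otimes\id + \id\otimes B$ for some Hermitian $A\in\iu\mf u(d_1)$ and $B\in\iu\mf u(d_2)$. By the linearity of the formula for $H_{V\otimes W}$ in $H_0$, it suffices to treat the two summands separately. Analogously, for the symmetric case, $H_0\in\iu\uloc^s(d)$ can be written as $H_0=E\otimes\id + \id\otimes E$ for some Hermitian $E\in\iu\mf u(d)$.

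For the distinguishable case, the first summand contributes $\Im(V^*AV\circ W^*\id\,W) = \Im(V^*AV\circ \id)$. The Hadamard product with $\id$ picks out only the diagonal entries, and the diagonal of the Hermitian matrix $V^*AV$ is real, so the imaginary part vanishes. The same reasoning applied to $\id\otimes B$ gives $\Im(\id\circ W^*BW)=0$. Hence $H_{V\otimes W}=0$. The bosonic case $H^s_{V\otimes V}$ is literally the same computation with $W$ replaced by $V$.

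For the fermionic case, I would substitute $H_0 = E\otimes\id + \id\otimes E$ into the block formula for $-(H^a_{V\otimes V})_{ij}$. For the $E\otimes\id$ summand one of the two factors in each product becomes $(V^*\id V)_{kl}=\delta_{kl}$, and the Kronecker deltas $\delta_{2i,2j}$ (respectively $\delta_{2i,2j-1}$) either force $i=j$ or vanish outright by parity. The surviving contribution is proportional to $\Im((V^*EV)_{2i-1,2i-1})$, which is zero since diagonals of Hermitian matrices are real. The $\id\otimes E$ summand is handled symmetrically, with $\delta_{2i-1,2j-1}$ forcing $i=j$ and the surviving term $\Im((V^*EV)_{2i,2i})$ vanishing for the same reason.

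I do not anticipate any genuine obstacle; the main thing to be careful about is bookkeeping in the fermionic case, where one must track which Kronecker delta survives by parity of indices and confirm that the remaining terms always land on a diagonal entry of some $V^*EV$. Once this is done, the conclusion drops out from the Hermiticity of $A$, $B$, and $E$.
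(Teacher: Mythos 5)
Your proof is correct and is exactly the computation the paper has in mind when it states the corollary follows ``immediately'' from Proposition~\ref{prop:reduced-hams}: decompose the local drift as $A\otimes\id+\id\otimes B$ (resp.\ $E\otimes\id+\id\otimes E$), observe that Hadamard-multiplying by $\id$ (resp.\ the Kronecker deltas in the fermionic formula, together with parity) isolates diagonal entries of conjugated Hermitian matrices, which are real, so the imaginary part vanishes. The parity bookkeeping in the fermionic case is handled correctly.
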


Finally we can define the reduced control systems:

\begin{definition}[Reduced control systems]
Let $I\subset\R$ be an interval of the form $[0,T]$ with $T\geq0$ or $[0,\infty)$. 
We define three reduced control systems:

For distinguishable subsystems we set:
\begin{equation} \label{eq:reduced} \tag{\sf R}
\dot\sigma(t) = -H_{U(t)}\sigma(t), \quad \sigma(0)=\sigma_0\in S^{\dmin-1}
\end{equation}
A solution is an absolutely continuous path $\sigma:I\to S^{\dmin-1}$ satisfying~\eqref{eq:reduced} almost everywhere for some measurable control function $U:I\to\Uloc(d_1,d_2)$.

In the bosonic case we set:
\begin{equation} \label{eq:reduced-sym} \tag{${\sf R}^s$}
\dot\sigma(t) = -H^s_{U(t)}\sigma(t), \quad \sigma(0)=\sigma_0\in S^{d-1}
\end{equation}
A solution is an absolutely continuous path $\sigma:I\to S^{d-1}$ satisfying~\eqref{eq:reduced-sym} almost everywhere for some measurable control function $U:I\to\Uloc^s(d)$.

In the fermionic case we set:
\begin{equation} \label{eq:reduced-asym} \tag{${\sf R}^a$}
\dot\xi(t) = -H^a_{U(t)}\xi(t), \quad \xi(0)=\xi_0\in S^{\floor{d/2}-1}
\end{equation}
A solution is an absolutely continuous path $\xi:I\to S^{\floor{d/2}-1}$ satisfying~\eqref{eq:reduced-asym} almost everywhere for some measurable control function $U:I\to\Uloc^s(d)$.
\end{definition}
As mentioned previously, and shown in Lemma~\ref{lemma:skew-sym} below, the induced vector fields preserve the Schmidt sphere, and hence we may define the reduced control systems directly on the Schmidt sphere.

\begin{remark} \label{rmk:other-defs}
There are several slightly different ways of defining the reduced control system which are given in~\cite[Sec.~2.1]{Reduced23}. 
The most intuitive definition, given above, is to consider the control system $\dot{\sigma}(t)=-H_{U(t)}(\sigma(t))$ where the control function $U:[0,T]\to\Uloc(d_1,d_2)$ is measurable and the solution $\sigma:[0,T]\to S^{\dmin-1}$ is absolutely continuous. 
A more geometric definition uses the differential inclusion $\dot{\sigma}(t)\in\derv(\sigma(t))$, where $\derv(\sigma)$ denotes the set of \emph{achievable derivatives} at $\sigma$ defined by $\derv(\sigma)=\{-H_U\sigma:U\in\Uloc(d_1,d_2)\}$.
The differential inclusion is exactly equivalent to our definition by Filippov's Theorem, cf.~\cite[Thm.~2.3]{Smirnov02}.
Often it is convenient to consider a ``relaxed'' version of the differential inclusion where also convex combinations of achievable derivatives are allowed: $\dot{\sigma}(t)\in\conv(\derv(\sigma(t)))$. This slightly enlarges the set of solutions, but every solution to the relaxed system can still be approximated uniformly on compact time intervals by solutions to our system, cf.~\cite[Ch.~2.4, Thm.~2]{Aubin84}.
Analogous remarks also hold for the symmetric cases~\eqref{eq:reduced-sym} and~\eqref{eq:reduced-asym}.
\end{remark}

The main result of~\cite{Reduced23} is the equivalence of the full bilinear control system~\eqref{eq:bilinear} resp.~\eqref{eq:bilinear2} and the reduced control system~\eqref{eq:reduced}, resp.~\eqref{eq:reduced-sym} or~\eqref{eq:reduced-asym}, proven in Theorems 3.8 and 3.14 therein.
In our case this specializes to the following result.

First we need to define the quotient maps $\mathrm{sing}^\plusdown:\C^{d_1}\otimes\C^{d_2}\to\R^{\dmin}$ and $\mathrm{qsing}^\plusdown:\bigwedge^2(\C^d)\to\R^{\floor{d/2}}$.  
Given a (possibly not normalized) vector $\ket\psi\in\C^{d_1}\otimes\C^{d_2}$ (or $\mathrm{Sym}^2(\C^d)$), the map $\mathrm{sing}^\plusdown$ yields the singular values of the corresponding matrix $\psi\in\C^{d_1,d_2}$, chosen non-negative and arranged in non-increasing order.
Similarly, for $\ket\xi\in\bigwedge^2(\C^d)$, the map $\mathrm{qsing}^\plusdown$ yields the singular values of the skew-symmetric matrix $\xi$, except that we keep only one singular value of each pair and multiply it by $\sqrt2$ to keep the normalization. 
Note that when restricting the domain of $\mathrm{sing}^\plusdown$ and $\mathrm{qsing}^\plusdown$ to (normalized) quantum states, the image will lie in the respective Schmidt sphere, and even in the Weyl chamber $S_\plusdown^{\dmin-1}$ resp.\ $S_\plusdown^{d-1}$ and $S_\plusdown^{\floor{d/2}-1}$.

Recall that here and throughout the paper we use Assumptions~\ref{it:fast-control} and~\ref{it:full-control} (resp.~\ref{it:full-control-2}), unless stated otherwise.  

\begin{theorem}[Equivalence Theorem] \label{thm:equivalence}
Let $\ket{\psi(t)}$ be a solution on $[0,T]$ to the bilinear control system~\eqref{eq:bilinear}, and let $\sigma^\plusdown:[0,T]\to S_\plusdown^{\dmin-1}$ be defined by $\sigma^\plusdown=\mathrm{sing}^\plusdown(\ket\psi)$. 
Then $\sigma^\plusdown$ is a solution to the reduced control system~\eqref{eq:reduced}.

Conversely, let $\sigma:[0,T]\to S^{\dmin-1}$ be a solution to the reduced control system~\eqref{eq:reduced} with control function $U:[0,T]\to\Uloc(d_1,d_2)$ and let $\ket\sigma=\diag(\sigma)$ denote the corresponding state. 
Then $U(t)\ket{\sigma(t)}$ can be approximated by solutions to the full control system~\eqref{eq:bilinear} arbitrarily well.
More precisely, for every $\varepsilon>0$ there exists a solution $\ket{\psi_\varepsilon(t)}$ to~\eqref{eq:bilinear} such that $\|U\ket{\sigma}-\ket{\psi_\varepsilon}\|_\infty\leq\varepsilon$, where $\|\cdot\|_\infty$ denotes the supremum norm.

The analogous results, \emph{mutatis mutandis}\footnote{Most results in this paper hold in all three cases with only minimal differences in notation, which are summarized in Table~\ref{tab:notation}.}, also hold in the bosonic and fermionic cases, where the full control system is~\eqref{eq:bilinear2} and the reduced control systems are~\eqref{eq:reduced-sym} and~\eqref{eq:reduced-asym} respectively.
\end{theorem}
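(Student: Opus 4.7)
The plan is to reduce both directions to the general equivalence result of~\cite[Thms~3.8, 3.14]{Reduced23}, which is proved in the language of symmetric Lie algebras. The required dictionary is laid out in Appendix~\ref{app:sym-lie-alg}, where the complex SVD, Autonne--Takagi and Hua factorizations are identified with the Cartan decompositions of specific symmetric pairs; it is precisely this identification that lets us treat the three cases uniformly. Below I sketch the geometric content of each direction directly in the bipartite language, and indicate which step does \emph{not} follow from elementary arguments.

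For the forward direction, I would first restrict to the open dense set of \emph{regular} states, where the singular values are distinct and strictly positive. Standard SVD perturbation theory, as in~\cite[Lem.~2.3]{Diag22}, supplies a local smooth factorization $\psi(t)=V(t)^{*}\tilde\sigma(t)\overline{W(t)}$. Differentiating along the Schr\"odinger equation~\eqref{eq:bilinear} and projecting with $\Pi_\Sigma$ reproduces the ODE $\dot\sigma=-H_{V\otimes W}\sigma$, exactly the computation already sketched before Proposition~\ref{prop:reduced-hams}; the off-diagonal components get absorbed into the dynamics of $V$ and $W$ and do not contribute to $\sigma$. Since singular values are globally Lipschitz in the state and $\ket{\psi(\cdot)}$ is absolutely continuous, $\sigma^\plusdown$ is absolutely continuous as well. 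The real obstacle is the singular locus where two singular values cross or one vanishes: here the smooth local diagonalization fails, and the ordering and sign conventions hidden in $\mathrm{sing}^\plusdown$ start to interact nontrivially with the dynamics. Rather than construct the required measurable selection of $V,W$ by hand on a set of full measure, I would invoke the general result of~\cite{Reduced23}, which uses the differential-inclusion reformulation recalled in Remark~\ref{rmk:other-defs} to handle exactly this pathology.

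For the converse direction, the plan is a Trotter-type approximation argument powered by Assumption~\ref{it:fast-control}. Given a measurable $U:[0,T]\to\Uloc(d_1,d_2)$ and the corresponding reduced solution $\sigma$, I would first approximate $U$ in measure by a piecewise constant $U_n$ on a partition $0=t_0<\cdots<t_N=T$. On each subinterval, Assumption~\ref{it:full-control} together with~\cite[Prop.~2.7]{Elliott09} lets me realize the conjugation of the drift by $U_n(t_k)$ through a rapid local-unitary kick built from the control Hamiltonians, followed by a short drift segment of length $t_{k+1}-t_k$. A standard averaging estimate, combined with Gr\"onwall applied to the difference of the two trajectories, shows that the resulting solution of~\eqref{eq:bilinear} stays within $\varepsilon$ of $U(t)\ket{\sigma(t)}$ uniformly on $[0,T]$ once the partition is fine enough and the kicks fast enough. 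The subtle point is uniformity over the entire interval rather than pointwise convergence, which is where absolute continuity of $\sigma$ and the boundedness of $\iu H_0$ are used.

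The hardest step is clearly the forward direction at the singular locus, since that is where all elementary arguments break down and where the full strength of~\cite{Reduced23} is actually needed; by contrast, the converse is conceptually routine once fast control is available. The bosonic and fermionic statements then follow \emph{mutatis mutandis}: one replaces the SVD by Autonne--Takagi or Hua, replaces $\Pi_\Sigma\circ\diag$ by itself or by $\Pi_\Xi\circ\qdiag$, and uses the corresponding induced vector fields from Proposition~\ref{prop:reduced-hams}. Because Appendix~\ref{app:sym-lie-alg} subsumes all three factorizations under a single symmetric-Lie-algebra framework, no further structural work is required to transfer the argument.
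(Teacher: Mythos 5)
Your high-level approach matches the paper: both reduce the theorem to the general equivalence result of~\cite{Reduced23} via the symmetric-Lie-algebra dictionary of Appendix~\ref{app:sym-lie-alg}. However, your proposal skips the one step that the paper's proof actually has to supply: \cite{Reduced23} establishes equivalence for control systems living \emph{literally} on the $-1$-eigenspace $\mf p$ of a Cartan involution, whereas the bipartite Schr\"odinger equation~\eqref{eq:bilinear} lives on $\C^{d_1}\otimes\C^{d_2}$, which is only \emph{isomorphic} to such a $\mf p$. The paper therefore formulates and proves an intermediate statement (Proposition~\ref{prop:equivalence}) which transports the abstract result through a pair of maps $\imath:V\to\mf p$ (a linear isometry) and $\jmath:\mb L\to\Ad_{\mb K}$ (a surjective group homomorphism) satisfying the intertwining relation $\jmath(L)\imath(v)=\imath(Lv)$, and then the appendix lemmas (\ref{lemma:imath-aiii}, \ref{lemma:jmath-aiii} and their bosonic/fermionic analogues) verify that $\imath^d,\jmath^d$ etc.\ have exactly these properties. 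That verification --- $\imath$ is an isometry, $\mf a=\imath(\Sigma)$ is maximal Abelian, the projections and quotient maps correspond, and the induced vector fields pull back correctly --- is the entire actual content of the paper's proof of Theorem~\ref{thm:equivalence}; your sketch acknowledges the dictionary exists but doesn't carry out this step.

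The long sketches of the forward (SVD perturbation plus singular-locus handling) and converse (Trotter/Gr\"onwall) directions are reasonable descriptions of \emph{why} the cited result from~\cite{Reduced23} is true, but they are not part of the proof here and replicate work already done in that reference. If you want to present this as a complete proof, drop those sketches and instead write out (or at least state precisely) the three compatibility conditions of Proposition~\ref{prop:equivalence} and cite the appendix lemmas that confirm them in each of the AIII, CI, and DIII cases.
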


\begin{proof}
The proof is mostly a technicality, as we simply have to show that the full control systems~\eqref{eq:bilinear} and~\eqref{eq:bilinear2} and their respective reduced versions~\eqref{eq:reduced}, \eqref{eq:reduced-sym} and~\eqref{eq:reduced-asym} can be interpreted as control systems on certain symmetric Lie algebras.
We focus on the case of distinguishable subsystems.
The corresponding symmetric Lie algebra is that of type AIII.
The isomorphisms $\imath^d$ and $\jmath^d$ defined in Appendix~\ref{app:AIII} translate the quantum setting into the Lie algebra setting.
The results of the appendix then show that all of the conditions of Proposition~\ref{prop:equivalence} are satisfied and that the reduced control system~\eqref{eq:reduced} indeed corresponds to the reduced control system in the Lie algebraic setting.
Taken together this proves the equivalence in the distinguishable case.
The other cases are entirely analogous.
\end{proof}

The Equivalence Theorem~\ref{thm:equivalence} shows that the full bilinear control system~\eqref{eq:bilinear} resp.~\eqref{eq:bilinear2} and the reduced control system~\eqref{eq:reduced}, resp.~\eqref{eq:reduced-sym} or~\eqref{eq:reduced-asym}, contain essentially the same information.
Hence for every control theoretic notion, such as controllability and stabilizability, there is a specialized equivalence result, see~\cite[Sec.~4]{Reduced23} for an overview.
As a first consequence we obtain:

\begin{lemma} \label{lemma:skew-sym}
The induced vector fields are skew-symmetric matrices:
$$
\mf H\subset\mf{so}(\dmin,\R), \quad
\mf H^s\subset\mf{so}(d,\R), \quad
\mf H^a\subset\mf{so}(\floor{d/2},\R).
$$
In particular the Schmidt sphere in invariant.
\end{lemma}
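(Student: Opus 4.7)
The plan is to prove skew-symmetry via the characterization that a real matrix $A$ is skew-symmetric if and only if $\langle x, Ax\rangle = 0$ for all $x$ in the underlying real vector space. Invariance of the Schmidt sphere then follows immediately from the fact that skew-symmetric linear flows preserve the Euclidean norm (since $\tfrac{d}{dt}\|\sigma(t)\|^2 = 2\langle \sigma,-H_{U(t)}\sigma\rangle = 0$).

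For the distinguishable case, I would fix $\sigma\in\R^{\dmin}$ and compute
\begin{equation*}
\langle \sigma,\, -H_{V\otimes W}\,\sigma\rangle
= \bigl\langle \sigma,\, -\Pi_\Sigma\,(V\otimes W)^*\iu H_0 (V\otimes W)\diag(\sigma)\bigr\rangle.
\end{equation*}
Since $\diag$ is an $\R$-linear isometric embedding of $\R^{\dmin}$ onto $\Sigma$ (with the real part of the Hermitian inner product on $\C^{d_1}\otimes\C^{d_2}$ restricted to $\Sigma$), and since $\Pi_\Sigma$ is the composition of the orthogonal projection onto $\Sigma$ with $\diag^{-1}$, the above equals
\begin{equation*}
\Re\bigl\langle \diag(\sigma),\,-(V\otimes W)^*\iu H_0 (V\otimes W)\,\diag(\sigma)\bigr\rangle.
\end{equation*}
Now $\iu H_0$ is skew-Hermitian, and skew-Hermitianness is preserved under unitary conjugation by $V\otimes W$, so the operator in the middle is skew-Hermitian. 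For any skew-Hermitian operator $K$ and any vector $\ket\phi$, the scalar $\langle\phi, K\phi\rangle$ is purely imaginary, hence its real part vanishes. This shows $-H_{V\otimes W}\in\mf{so}(\dmin,\R)$.

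The bosonic case $\mf H^s$ is literally the same argument, with $V\otimes W$ replaced by $V\otimes V$ and $\Pi_\Sigma,\diag$ unchanged; the skew-Hermiticity of the conjugated operator is the only fact used. For the fermionic case $\mf H^a$ I would repeat the argument with $\qdiag$ (which is an $\R$-linear isometric isomorphism $\R^{\floor{d/2}}\to\Xi$, where the factor $1/\sqrt2$ in the definition of $\qdiag$ is exactly what makes it isometric) and $\Pi_\Xi$ in place of $\diag$ and $\Pi_\Sigma$. Everything else goes through verbatim.

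I do not expect any serious obstacle: the proof is essentially a bookkeeping exercise around the identifications set up in Section~\ref{sec:matrix-decs}. The one point that requires a moment of care is checking that the isometry constants are consistent — in particular the $1/\sqrt2$ normalization in $\qdiag$, and the definition of $\Pi_\Xi$ with its accompanying $\sqrt2$, are precisely what make $\Pi_\Xi$ the orthogonal projection onto $\Xi$ followed by $\qdiag^{-1}$, so that the transfer of inner products from $\R^{\floor{d/2}}$ to $\bigwedge^2(\C^d)$ is isometric. Once these identifications are acknowledged, skew-symmetry and hence invariance of the Schmidt sphere are immediate from the skew-Hermiticity of $\iu H_0$.
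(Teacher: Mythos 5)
Your proof is correct, and it takes a genuinely different route from the paper. The paper's proof simply invokes the general result \cite[Prop.~4.8]{Reduced23} via the Equivalence Theorem~\ref{thm:equivalence}, and offers as an alternative a ``direct computation using the expressions obtained in Proposition~\ref{prop:reduced-hams}'' (i.e., verifying $-(H_{V\otimes W})_{ij}=(H_{V\otimes W})_{ji}$ entry by entry from the explicit formula). You instead go straight from the defining expression $-H_{V\otimes W}=-\Pi_\Sigma\circ(V\otimes W)^*\,\iu H_0\,(V\otimes W)\circ\diag$ and observe that, because $\diag$ (resp.\ $\qdiag$) is an isometry onto $\Sigma$ (resp.\ $\Xi$) with respect to $\Re\braket{\cdot|\cdot}$ and $\Pi_\Sigma$ (resp.\ $\Pi_\Xi$) is its adjoint, the quadratic form $\langle\sigma,-H_{V\otimes W}\sigma\rangle$ reduces to $\Re\braket{\diag(\sigma)|\,K\,|\diag(\sigma)}$ with $K$ skew-Hermitian, which is zero. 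This is cleaner than the entry-by-entry route and more self-contained than the citation route; it is essentially a specialized rerun of the Lie-algebraic argument underlying \cite[Prop.~4.8]{Reduced23} (where skew-symmetry of $X_K$ comes from $\ad$-invariance of the Killing form, the analogue of your skew-Hermiticity observation). One detail worth making explicit in a final write-up: the fact that $\iu H_0$ and $V\otimes V$ both commute with $\Uswap$ in the bosonic and fermionic cases, so that the conjugated operator indeed restricts to a skew-Hermitian operator on $\mathrm{Sym}^2(\C^d)$ or $\bigwedge^2(\C^d)$ respectively; you say ``everything else goes through verbatim,'' which is true, but this is the one place where the indistinguishable cases quietly use an extra hypothesis ($H_0\in\iu\mf u^s(d^2)$). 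The deduction of sphere invariance from skew-symmetry is standard and correctly handled.
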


\begin{proof}
Due to the Equivalence Theorem~\ref{thm:equivalence} this follows from~\cite[Prop.~4.8]{Reduced23}.
Alternatively this can also be verified by direct computation using the expressions obtained in Proposition~\ref{prop:reduced-hams}.
\end{proof}

Let us also recall the equivalence of reachable sets here, which is arguably the most useful consequence. 
First we give the definitions of reachable sets in the reduced control system~\eqref{eq:reduced}. The definitions for other control systems are entirely analogous. The \emph{reachable set of $\sigma_0$ at time $T$} is defined as 
$$
\reach_{\ref{eq:reduced}}(\sigma_0,T) = \{\sigma(T) : \sigma:[0,T]\to S^{\dmin-1}\text{ solves }\eqref{eq:reduced}, \sigma(0)=\sigma_0 \}
$$
for any $T\geq0$. By $\reach_{\ref{eq:reduced}}(\sigma_0):=\bigcup_{T\geq0}\reach_{\ref{eq:reduced}}(\sigma_0,T)$ we denote the \emph{all time reachable set of $\sigma_0$}, and by $\reach_{\ref{eq:reduced}}(\sigma_0,[0,T]):=\bigcup_{t\in[0,T]}\reach_{\ref{eq:reduced}}(\sigma_0,t)$ we denote the \emph{reachable set of $\sigma_0$ up to time $T$}.

The following result is an immediate consequence of the Equivalence Theorem~\ref{thm:equivalence} and~\cite[Prop.~4.3]{Reduced23}.

\begin{proposition} \label{prop:reach}
Let $T>0$ be given and assume that $\ket{\psi_0}\in\C^{d_1}\otimes\C^{d_2}$ and $\sigma_0\in S^{\dmin-1}$ satisfy\footnote{Here $\sigma_0^\plusdown$ denotes the element of $S^{\dmin-1}_\plusdown$ whose elements are the absolute values of the elements of $\sigma_0$ arranged in non-increasing order.} $\sigma_0^\plusdown=\mathrm{sing}^\plusdown(\psi_0)$.
Then is holds that 
$$
\reach_{\ref{eq:bilinear}}(\ket{\psi_0},T) 
\subseteq 
\{ U\ket{\sigma} : \ket\sigma\in\reach_{\ref{eq:reduced}}(\sigma_0,T), \,U\in\Uloc(d_1,d_2)\}
\subseteq 
\overline{\reach_{\ref{eq:bilinear}}(\ket{\psi_0},T)}\,.
$$
In particular, the closures coincide:
$$
\overline{\reach_{\ref{eq:bilinear}}(\ket{\psi_0},T)} = 
\overline{\{ U\ket{\sigma} : \ket\sigma\in\reach_{\ref{eq:reduced}}(\sigma_0,T), \,U\in\Uloc(d_1,d_2)\}}\,.
$$
The analogous result, \emph{mutatis mutandis}, holds also for the bosonic and the fermionic cases.
\end{proposition}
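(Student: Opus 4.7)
The plan is to derive Proposition~\ref{prop:reach} as an immediate corollary of the Equivalence Theorem~\ref{thm:equivalence} combined with~\cite[Prop.~4.3]{Reduced23}. The Equivalence Theorem provides the bridge between trajectories of~\eqref{eq:bilinear} and solutions of~\eqref{eq:reduced}, and the cited result from~\cite{Reduced23} is the analogous reachable-set statement in the general symmetric Lie algebra framework; stitched together they yield the proposition. I describe the distinguishable case; the bosonic and fermionic cases are entirely parallel.

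For the first inclusion, I would take any $\ket{\psi(T)} \in \reach_{\ref{eq:bilinear}}(\ket{\psi_0},T)$, realized as the endpoint of a solution $\ket{\psi(\cdot)}$ of~\eqref{eq:bilinear}. The forward direction of Theorem~\ref{thm:equivalence} says that $\sigma^\plusdown := \mathrm{sing}^\plusdown(\ket{\psi(\cdot)})$ is a solution of~\eqref{eq:reduced} starting at $\sigma_0^\plusdown$, so $\sigma^\plusdown(T) \in \reach_{\ref{eq:reduced}}(\sigma_0^\plusdown,T)$; meanwhile the complex singular value decomposition of $\psi(T)$ supplies a local unitary $U \in \Uloc(d_1,d_2)$ with $\ket{\psi(T)} = U\ket{\sigma^\plusdown(T)}$. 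To pass from $\sigma_0^\plusdown$ to the hypothesized $\sigma_0$, I would use that signed permutations act on $S^{\dmin-1}$ via explicit elements of $\Uloc(d_1,d_2)$ and leave the family $\mf H$ of induced vector fields invariant (as can be read off the formulas in Proposition~\ref{prop:reduced-hams}), so that $\reach_{\ref{eq:reduced}}(\sigma_0,T)$ and $\reach_{\ref{eq:reduced}}(\sigma_0^\plusdown,T)$ agree up to a signed permutation, which can be absorbed into $U$.

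For the second inclusion, I would take $U\ket\sigma$ with $\ket\sigma \in \reach_{\ref{eq:reduced}}(\sigma_0,T)$, produced by a reduced solution $\sigma(\cdot)$ with control $U(\cdot)$ satisfying $U(T) = U$. The converse part of Theorem~\ref{thm:equivalence} supplies, for every $\varepsilon > 0$, a solution $\ket{\psi_\varepsilon(\cdot)}$ of~\eqref{eq:bilinear} with $\|U(\cdot)\ket{\sigma(\cdot)} - \ket{\psi_\varepsilon(\cdot)}\|_\infty \leq \varepsilon$; the initial state $\ket{\psi_\varepsilon(0)}$ then lies within $\varepsilon$ of $U(0)\ket{\sigma_0}$, which shares the singular value profile of $\ket{\psi_0}$. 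By fast local unitary control (Assumptions~\ref{it:fast-control} and~\ref{it:full-control}) one may prepend an arbitrarily short local-unitary prelude steering $\ket{\psi_0}$ to within $\varepsilon$ of $\ket{\psi_\varepsilon(0)}$; a mild time-rescaling yields a solution of~\eqref{eq:bilinear} that reaches a point within $O(\varepsilon)$ of $U\ket\sigma$ at time exactly $T$, placing $U\ket\sigma$ in $\overline{\reach_{\ref{eq:bilinear}}(\ket{\psi_0},T)}$.

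The main obstacle is the bookkeeping of signed-permutation equivalence in the first inclusion and the careful concatenation of a vanishing fast-control prelude with the approximating trajectory in the second inclusion. Both are exactly the kind of technicalities the symmetric Lie algebra machinery of~\cite{Reduced23} is designed to absorb, which is why the author can invoke the cited proposition as an immediate consequence. Once the sandwich inclusion is established, the ``in particular'' equality of closures is automatic, since taking closures collapses the inner set against either outer term.
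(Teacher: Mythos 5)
Your proposal is correct and coincides with the paper's approach: the paper proves the proposition by simply observing that it is an immediate consequence of Theorem~\ref{thm:equivalence} together with~\cite[Prop.~4.3]{Reduced23}, which is precisely the pair of ingredients you invoke. The further detail you offer (signed-permutation bookkeeping in the first inclusion, concatenating a fast local-unitary prelude with the approximating trajectory in the second) is a sensible unpacking of what that cited proposition handles internally, but it is not part of the paper's own (one-line) proof.
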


The Equivalence Theorem~\ref{thm:equivalence} guarantees the existence of an approximate lift, but its proof also provides a way to find corresponding control functions. 
Under some additional assumptions we can give an explicit formula for the controls of an exact lift, see~\cite[Prop.~3.10]{Reduced23}.
In particular this requires the solution to be smooth and regular, and the controls of the bilinear system~\eqref{eq:bilinear} resp.~\eqref{eq:bilinear2} to linearly span the corresponding Lie algebra.
Before stating the result we define some notation.

The group of local unitary operations $\Uloc(d_1,d_2)$ and its symmetric counterpart $\Uloc^s(d)$ act on the state spaces $\C^{d_1}\otimes\C^{d_2}$, $\mathrm{Sym}^2(\C^d)$ and $\bigwedge^2(\C^d)$ respectively.
The corresponding infinitesimal action of the Lie algebras $\uloc(d_1,d_2)$ and $\uloc^s(d)$ can be determined as in Section~\ref{sec:matrix-decs} using the formula $(\iu E\otimes\id+\id\otimes\iu F)\ket\psi = \ket{\iu(E\psi+\psi\overline F)}$.
In the special case where the state is regular and diagonal this function (more precisely its negative) gets a special name:
\begin{alignat*}{3}
\ad_\sigma^d &: \uloc(d_1,d_2) \to \C^{d_1}\otimes\C^{d_2}, &\,\,&\iu E\otimes\id+\id\otimes\iu F \mapsto -\ket{\iu(E\tilde\sigma+\tilde\sigma\overline F)} \\
\ad_\sigma^s &: \uloc^s(d) \to \mathrm{Sym}^2(\C^d),         &&\iu E\otimes\id+\id\otimes\iu E \mapsto -\ket{\iu(E\tilde\sigma+\tilde\sigma\overline E)} \\
\ad_\xi^a      &: \uloc^s(d) \to \textstyle\bigwedge^2(\C^d), &&\iu E\otimes\id+\id\otimes\iu E \mapsto -\ket{\iu(E\tilde\xi+\tilde\xi\overline E)},
\end{alignat*}
where $\tilde\sigma\in\mf{diag}(d_1,d_2,\R)$ and $\tilde\xi\in\mf{qdiag}(d,\R)$ denote the (quasi-)diagonal matrices corresponding to $\sigma$ and $\xi$.  
Although these maps are not bijective, by restricting the domain to the orthocomplement of the kernel and the codomain to the image, inverse maps can be defined.
Indeed, this is nothing but the Moore--Penrose pseudoinverse.
Explicit expressions are given in Lemmas~\ref{lemma:ad-d-inv}, \ref{lemma:ad-s-inv} and~\ref{lemma:ad-a-inv}.

To use these inverse maps we have to understand the images of the maps $\ad_\sigma^d$, $\ad_\sigma^s$, and $\ad_\xi^a$.
It turns out that, for regular $\sigma$ resp.\ $\xi$, these images are exactly given by the orthocomplement of the diagonal subspaces $\Sigma,\Xi$.
We denote the orthogonal projection on $\C^{d_1}\otimes\C^{d_2}$ with kernel $\Sigma$ by $\Pi_\Sigma^\perp$, and use the same notation on $\mathrm{Sym}^2(\C^d)$. 
In the matrix picture, this map simple removes the real part of the diagonal elements of $\psi$.
Similarly, on $\bigwedge^2(\C^d)$, the orthogonal projection with kernel $\Xi$ is denoted $\Pi_\Xi^\perp$ and it removes the real part of the quasi-diagonal of $\psi$.

With these definitions~\cite[Prop.~3.10]{Reduced23} can be specialized as follows:

\begin{proposition} \label{prop:exact-lift}
Let $\sigma:[0,T]\to S^{\dmin-1}$ be a solution to the reduced control system~\eqref{eq:reduced} with control function $U:[0,T]\to\Uloc(d_1,d_2)$. 
Assume that $\sigma$ is regular and that $U$ is continuously differentiable. 
Let $\ket{\psi(t)}=U(t)\ket{\sigma(t)}$ and let $H:[0,T]\to\iu\mf\uloc(d_1,d_2)$ be given by
$$
-\iu H(t) = \dot U(t)U^{-1}(t) - U(t) ((\mad_{\sigma(t)})^{-1} \circ \Pi_\Sigma^\perp) \big(U(t)^*(\iu H_0)U(t)\ket{\sigma(t)}\big).
$$
Then $\ket\psi$ satisfies $\ket{\dot\psi}=-\iu(H_0+H)\ket\psi$.
The analogous result also holds \emph{mutatis mutandis} for bosonic and fermionic systems.
\end{proposition}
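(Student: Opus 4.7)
The strategy is to verify the Schrödinger equation $\ket{\dot\psi}=-\iu(H_0+H)\ket\psi$ by direct differentiation and then solve algebraically for $H$. Applying the product rule to $\ket{\psi(t)}=U(t)\ket{\sigma(t)}$ gives $\ket{\dot\psi}=\dot U\ket\sigma+U\ket{\dot\sigma}$, so the target identity becomes
\begin{equation*}
-\iu H\,U\ket\sigma \;=\; \dot U\ket\sigma \;+\; U\ket{\dot\sigma} \;+\; \iu H_0 U\ket\sigma.
\end{equation*}
The first term on the right is already produced by the local element $\dot U U^{-1}\in\uloc(d_1,d_2)$ acting on $\ket\psi$, matching the first summand in the definition of $-\iu H$. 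It therefore remains to account for the combination $U\ket{\dot\sigma}+\iu H_0 U\ket\sigma$.

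Here I would invoke the reduced ODE. By Proposition~\ref{prop:reduced-hams} and the defining relation $-H_U = -\Pi_\Sigma\circ U^*\iu H_0 U\circ\diag$, the equation $\dot\sigma=-H_U\sigma$ lifts to $\ket{\dot\sigma}=-(\diag\circ\Pi_\Sigma)(U^*\iu H_0 U\ket\sigma)$; that is, $\ket{\dot\sigma}$ is minus the orthogonal projection of $U^*\iu H_0 U\ket\sigma$ onto the diagonal subspace $\Sigma$. Substituting yields
\begin{equation*}
U\ket{\dot\sigma}+\iu H_0 U\ket\sigma \;=\; U\,\Pi_\Sigma^\perp\bigl(U^*\iu H_0 U\ket\sigma\bigr),
\end{equation*}
which is the $U$-transport of a vector in $\Sigma^\perp$. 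Under regularity of $\sigma$, the map $\ad_\sigma^d:\uloc(d_1,d_2)\to\C^{d_1}\otimes\C^{d_2}$ defined in the text satisfies $\ad_\sigma^d(A)=-A\ket\sigma$ and, restricted to the orthocomplement of its kernel, is a bijection onto $\Sigma^\perp$. Consequently, $((\mad_\sigma)^{-1}\circ\Pi_\Sigma^\perp)(U^*\iu H_0 U\ket\sigma)$ is the unique element of $\uloc(d_1,d_2)\ominus\ker(\ad_\sigma^d)$ whose action on $\ket\sigma$ reproduces $-\Pi_\Sigma^\perp(U^*\iu H_0 U\ket\sigma)$. Transporting by $U$ yields the second term of the stated formula, and matching terms completes the verification. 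The bosonic and fermionic cases go through identically after replacing $\ad_\sigma^d$ by $\ad_\sigma^s$ respectively $\ad_\xi^a$ (with $\Xi$ in place of $\Sigma$ in the fermionic case) and invoking Lemmas~\ref{lemma:ad-s-inv} and~\ref{lemma:ad-a-inv}.

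The main technical point is establishing that $\img(\ad_\sigma^d)$ coincides with $\Sigma^\perp$ under regularity, which is precisely what makes the pseudoinverse well-defined on the term we need to invert. If some singular values of $\sigma$ coincide or vanish, the kernel of $\ad_\sigma^d$ enlarges and its image shrinks strictly inside $\Sigma^\perp$; geometrically, these are the singular points of the quotient where the smoothness of the diagonalization $V\tilde\sigma W^\top$ itself fails, so the continuous differentiability assumption on $U$ would not be attainable anyway. The image characterization and the explicit pseudoinverse formulas are the content of Lemmas~\ref{lemma:ad-d-inv}, \ref{lemma:ad-s-inv}, and~\ref{lemma:ad-a-inv} in the appendix, after which the statement reduces to a direct specialization of~\cite[Prop.~3.10]{Reduced23} via the symmetric Lie algebra dictionary developed in Appendix~\ref{app:sym-lie-alg}.
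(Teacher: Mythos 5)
Your proof is correct and is a genuinely self-contained argument, whereas the paper itself does not supply a proof at all: it simply asserts that the proposition is a specialization of~\cite[Prop.~3.10]{Reduced23}, having built the dictionary in Appendix~\ref{app:sym-lie-alg} (Lemmas~\ref{lemma:imath-aiii}--\ref{lemma:jmath-aiii} and their analogues) that translates~\eqref{eq:bilinear} into the symmetric Lie algebra picture. Your direct verification --- differentiate $\ket\psi=U\ket\sigma$, decompose the drift term via $\id=\diag\circ\Pi_\Sigma+\Pi_\Sigma^\perp$, recognize the $\Sigma$-component as $-\ket{\dot\sigma}$ by the reduced ODE, and invert the $\Sigma^\perp$-component using $\ad^d_\sigma(A)=-A\ket\sigma$ together with the image characterization from Lemma~\ref{lemma:ad-d-inv} --- is exactly the computation that underlies~\cite[Prop.~3.10]{Reduced23}, just carried out concretely in the present notation. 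The paper's route buys economy and uniformity across all three cases via the abstract framework; your route buys transparency, and in particular makes it manifest \emph{why} regularity is needed (the pseudoinverse $(\ad^d_\sigma)^{-1}$ is defined on all of $\Sigma^\perp$ precisely when $\sigma$ is regular). One minor remark: the displayed formula for $-\iu H$ is only literally an element of $\uloc(d_1,d_2)$ if the prefactor $U(t)$ on the second term is read as conjugation $\Ad_{U(t)}(\cdot)=U(t)(\cdot)U(t)^{-1}$; your phrase ``transporting by $U$'' matches this reading, and it is also forced by the statement that $H$ is $\iu\uloc(d_1,d_2)$-valued, so this is a notational, not a mathematical, point.
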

\noindent We call the second term in the definition of $H$ in Proposition~\ref{prop:exact-lift} the \emph{compensating Hamiltonian} since it compensated for the local Hamiltonian action induced by the drift term $H_0$.

If the control directions linearly span the entire Lie algebra $\uloc(d_1,d_2)$, then it is easy to find the control functions from Proposition~\ref{prop:exact-lift}.
More generally the problem of finding corresponding controls is studied under the term non-holonomic motion planning, see~\cite{Liu97}.

\subsection{Global Phases} \label{sec:global-phase}

Mathematically the quantum state $\ket\psi\in\C^{d_1}\otimes\C^{d_2}$ has a global phase which is physically undetectable and hence may be considered irrelevant.
We keep the phase throughout this paper for convenience, noting that the global phase is removed automatically in the reduced control system.
This section briefly discusses how Assumptions~\ref{it:full-control} and~\ref{it:full-control-2} can be slightly weakened by neglecting the global phase.
We denote by $\suloc(d_1,d_2)$ and $\suloc^s(d)$ the Lie algebras obtained from $\uloc(d_1,d_2)$ and $\uloc^s(d)$ by requiring the trace to vanish.
Consider the following two weakened controllability assumptions:

\begin{enumerate}[(I'),start=2]
\item \label{it:full-control-weak} The control Hamiltonians generate the local special unitary Lie algebra:
$$\generate{\iu E_i\otimes\id,\id\otimes\iu F_j : \,i=1,\ldots,m_1,j=1,\ldots,m_2}{\mathsf{Lie}}=\suloc(d_1,d_2).$$
\item \label{it:full-control-2-weak} The control Hamiltonians generate the symmetric local special unitary Lie algebra:
$$\generate{\iu E_i\otimes\id+\id\otimes\iu E_i : \,i=1,\ldots,m}{\mathsf{Lie}}=\suloc^s(d).$$
\end{enumerate}
In both cases adding the (symmetric) local control Hamiltonian $\id\otimes\id$ is sufficient to obtain the stronger Assumptions~\ref{it:full-control} and~\ref{it:full-control-2} respectively.
Since this Hamiltonian commutes with everything, the only effect of adding or removing the corresponding term from the control system is a change in the global phase.
More concretely, if our control system is~\eqref{eq:bilinear} (resp.~\eqref{eq:bilinear2}) but only satisfies Assumptions~\ref{it:fast-control} and~\ref{it:full-control-weak} (resp.~\ref{it:full-control-2-weak}), then we can add the control Hamiltonian $\id\otimes\id$ so that it satisfies Assumption~\ref{it:full-control} (resp.~\ref{it:full-control-2}). 
Now we can compute any solution in this extended system and we obtain a corresponding solution in the actual system by setting the control function of $\id\otimes\id$ to zero.
The resulting solution will, at all times, be equal to the solution of the extended system up to a global phase.

\section{Some Applications} \label{sec:applications}

In the previous section we defined the reduced control systems~\eqref{eq:reduced},~\eqref{eq:reduced-sym}, and~\eqref{eq:reduced-asym} and  proved that they are equivalent to the corresponding full control system~\eqref{eq:bilinear} in the distinguishable case and~\eqref{eq:bilinear2} in the bosonic and fermionic cases, cf. Theorem~\ref{thm:equivalence}.
In this section we will use the reduced control system to deduce some consequences on controllability, stabilizability and speed limits.
Using the Equivalence Theorem~\ref{thm:equivalence} these results can then be translated to the full control system.

\subsection{Controllability and Stabilizability} \label{sec:ctrl-stab}

In this section we show that the reduced control system is always controllable and stabilizable. 
As a consequence, the full control system is also controllable and all states can be stabilized in a certain sense.
For notational simplicity we focus on the case of distinguishable subsystems, noting that the bosonic and fermionic cases are entirely analogous.

First we lift the reduced control system~\eqref{eq:reduced} to the Lie group $\SO(\dmin)$. 
Unless otherwise noted it is assumed that the initial state is $R(0)=\id$.
The operator lift can be defined by
\begin{equation} \label{eq:lift} \tag{\sf L}
\dot R(t)=-H_{U(t)}R(t),
\end{equation}
and analogously with $H_U^s$ and $H_U^a$ in the bosonic and fermionic cases.
A solution is absolutely continuous and satisfies~\eqref{eq:lift} almost everywhere for some measurable $U$.

\begin{remark}
Note that the operator lift~\eqref{eq:lift} of the reduced control system is a useful but somewhat artificial construction.
Indeed, even though~\eqref{eq:bilinear} and~\eqref{eq:reduced} are equivalent, the operator lift~\eqref{eq:lift} is \emph{not} equivalent to the operator lift of~\eqref{eq:bilinear}. 
\end{remark}

The reduced control system is \emph{controllable on $S^{\dmin-1}$} if for every two states $\sigma_1,\sigma_2\in S^{\dmin-1}$ it holds that $\sigma_2\in\reach_{\ref{eq:reduced}}(\sigma_1)$ and it is \emph{controllable on $S^{\dmin-1}$ in time $T$} if for every two states $\sigma_1,\sigma_2\in S^{\dmin-1}$ we have that $\sigma_2\in\reach_{\ref{eq:reduced}}(\sigma_1,[0,T])$. \emph{Approximate controllability} is defined in the same way except that one considers the closures of the respective reachable sets.
The analogous definitions also holds for all other control systems.

To understand the properties of the operator lift, we study the set of generators $\vecs\subset\mf{so}(\dmin,\R)$.
A key property of $\vecs$ is that it is invariant under conjugation by the Weyl group $\mb W=S_n\wr\mathbb Z_2$, see~\cite[Lem.~A.2]{Reduced23} and Appendix~\ref{app:sym-lie-alg}. 
This fact allows us to prove the following result.

\begin{proposition} \label{prop:lift-controllable}
The Weyl group $\mb W$ acts irreducibly on $\mf{so}(\dmin,\R)$.
In particular, if the coupling Hamiltonian $H_0$ is not local, then the operator lift~\eqref{eq:lift} is controllable.
The analogous result holds, \emph{mutatis mutandis}, in the bosonic and fermionic cases.
\end{proposition}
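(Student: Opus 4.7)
My plan is to split the proof into two parts: first, establishing the irreducibility of the $\mb W$-action on $\mf{so}(\dmin,\R)$; and second, combining this with the $\mb W$-invariance of $\vecs$ (cited from~\cite[Lem.~A.2]{Reduced23}) and the Lie algebra rank condition on the compact connected Lie group $\SO(\dmin)$ to deduce controllability. I focus on distinguishable subsystems; the bosonic and fermionic cases are identical after replacing $\dmin$ by $d$ or $\floor{d/2}$.

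For irreducibility, I realize $\mb W$ as the group of signed permutation matrices in $\mathrm{O}(\dmin)$, generated by $S_{\dmin}$ and the sign-flip subgroup $\{D_\epsilon:=\diag(\epsilon_1,\dots,\epsilon_{\dmin}):\epsilon\in\{\pm1\}^{\dmin}\}$. On the standard basis $\{F_{ij}:=E_{ij}-E_{ji}:i<j\}$ of $\mf{so}(\dmin,\R)$, conjugation by $D_\epsilon$ acts as $F_{ij}\mapsto\epsilon_i\epsilon_j F_{ij}$, while $S_{\dmin}$ permutes the pairs $\{i,j\}$ transitively. Given any nonzero $\mb W$-invariant $V\subseteq\mf{so}(\dmin,\R)$ and any $0\neq A=\sum_{i<j}a_{ij}F_{ij}\in V$ with $a_{i_0 j_0}\neq 0$, I form the character projector
\[
P(A):=\frac{1}{2^{\dmin}}\sum_{\epsilon\in\{\pm1\}^{\dmin}}\epsilon_{i_0}\epsilon_{j_0}\,D_\epsilon A D_\epsilon \in V.
\]
Orthogonality of characters on $\mathbb Z_2^{\dmin}$ then yields $P(A)=a_{i_0 j_0}F_{i_0 j_0}$, so $F_{i_0 j_0}\in V$; transitivity of $S_{\dmin}$ on unordered pairs forces $V=\mf{so}(\dmin,\R)$.

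For controllability, the $\mb W$-invariance of $\vecs$ makes $\mathrm{span}(\vecs)$ a $\mb W$-invariant subspace of $\mf{so}(\dmin,\R)$, so by the irreducibility just established it is either $\{0\}$ or all of $\mf{so}(\dmin,\R)$. Corollary~\ref{coro:local-drift}, together with its converse that $\mathrm{span}(\vecs)=\{0\}$ forces $H_0\in\iu\uloc(d_1,d_2)$, shows that non-locality of $H_0$ is equivalent to $\mathrm{span}(\vecs)=\mf{so}(\dmin,\R)$. The converse itself follows from the symmetric-pair decomposition $\iu\mf u(d_1d_2)=\iu\uloc(d_1,d_2)\oplus\mf m$ developed in Appendix~\ref{app:sym-lie-alg}, since the induced vector fields depend only linearly on the $\mf m$-component of $H_0$ and the corresponding linear map into $\vecs$ is injective. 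Hence $\langle\vecs\rangle_{\mathsf{Lie}}=\mathrm{span}(\vecs)=\mf{so}(\dmin,\R)$, and the right-invariant system~\eqref{eq:lift} on the compact connected Lie group $\SO(\dmin)$ is controllable by the standard Lie algebra rank condition.

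The main obstacle is the irreducibility step: it reduces to a clean character-orthogonality calculation once the projector $P$ is identified, but requires careful interplay between the permutation and sign-flip parts of $\mb W$. A secondary subtlety is the converse of Corollary~\ref{coro:local-drift}, which is not transparent from Proposition~\ref{prop:reduced-hams} directly but becomes immediate in the symmetric-pair framework of the appendix.
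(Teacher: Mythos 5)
Your irreducibility proof takes essentially the same route as the paper's: both isolate a single basis element by averaging over sign flips and then move it around with permutations. The paper does the averaging iteratively, $\Omega\mapsto\tfrac12(\Omega+W_k\Omega W_k^\top)$ for $k=3,\dots,\dmin$, whereas you do it in one shot with the character projector $P(A)=2^{-\dmin}\sum_\epsilon\epsilon_{i_0}\epsilon_{j_0}D_\epsilon A D_\epsilon$. This is the Fourier-on-$\Z_2^{\dmin}$ form of the same idea; it is a bit cleaner and handles $\dmin=2$ uniformly rather than as a separate trivial case. This part is correct.

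The controllability part has a genuine gap. Your structure is sound: $\mathrm{span}(\vecs)$ is $\mb W$-invariant by \cite[Lem.~A.2]{Reduced23}, hence by your irreducibility it is either $\{0\}$ or all of $\mf{so}(\dmin,\R)$, and in the latter case the Lie algebra rank condition on the compact connected group $\SO(\dmin)$ gives controllability of~\eqref{eq:lift}. But to rule out $\mathrm{span}(\vecs)=\{0\}$ when $H_0$ is not local you need the converse of Corollary~\ref{coro:local-drift}, and you assert rather than prove it. You claim it follows from ``the symmetric-pair decomposition $\iu\mf u(d_1d_2)=\iu\uloc(d_1,d_2)\oplus\mf m$ developed in Appendix~\ref{app:sym-lie-alg}'' together with injectivity of the induced-vector-field map on the $\mf m$-component, but neither is in the appendix: Appendix~\ref{app:sym-lie-alg} builds the Cartan decompositions of $\mf{su}(d_1,d_2)$, $\mf{sp}(d,\R)$ and $\mf{so}^*(2d)$ and the compatibility isomorphisms $\imath,\jmath$, and says nothing about a decomposition of $\iu\mf u(d_1d_2)$ or about such an injectivity. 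The nondegeneracy you need --- that a skew-Hermitian $\iu H_0$ orthogonal to $\uloc(d_1,d_2)$ cannot satisfy $\Re\bra{U\sigma'}\iu H_0\ket{U\sigma}=0$ for all $U\in\Uloc(d_1,d_2)$ and all $\ket\sigma,\ket{\sigma'}\in\Sigma$ --- is true but not a one-liner, and it is precisely what the paper outsources to \cite[Prop.~4.19]{Reduced23}, which combines Weyl-group irreducibility with this nondegeneracy to obtain controllability of the lift. You should either prove the nondegeneracy, or, as the paper does, cite \cite[Prop.~4.19]{Reduced23} for the whole controllability step rather than only for the vanishing direction.
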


\begin{proof}
To show that the Weyl group acts irreducibly, we start with an arbitrary non-zero element $\Omega\in\mf{so}(\dmin,\R)$ and show that the subrepresentation generated by $\Omega$ is all of $\mf{so}(\dmin,\R)$.
If $\dmin=2$ this is trivially true since $\mf{so}(2,\R)$ is one-dimensional.
So assume that $\dmin\geq3$.
Consider the basis $\{e_{ij}=E_{ij}-E_{ji}: 1\leq i<j\leq\dmin\}$ and let $\Omega_{ij}$ be the coefficients of $\Omega$ in this basis.
Since $\Omega$ is non-zero, at least one of the coefficients is non-zero. 
Using a permutation in $\mb W$ we may assume that $\Omega_{12}\neq0$. 
Let $W_i\in\mb W$ be the diagonal matrix whose diagonal equals $1$ everywhere except in the $i$-th position, where it equals $-1$. 
Consider the matrix $\Omega'=\frac{\Omega+W_3\Omega W_3^\top}{2}$.
Then $\Omega'_{12}=\Omega_{12}$ and $\Omega'_{i3}=\Omega'_{3j}=0$.
Iterating this procedure with $W_4,\dots,W_{\dmin}$ we obtain a multiple of $e_{12}$, showing that $e_{12}$ lies in the subrepresentation generated by $\Omega$. 
From this, using the permutations in $\mb W$, all other basis elements $e_{ij}$ can be obtained.
This shows that the representation of $\mb W$ is irreducible.
Controllability of the operator lift then follows from~\cite[Prop.~4.19]{Reduced23}.
\end{proof}

This result can now be lifted to the full bilinear system using the equivalence of the systems. 

\begin{theorem} \label{thm:controllability} 
If the full control system~\eqref{eq:bilinear} is controllable in time $T$, then the reduced control system~\eqref{eq:reduced} is controllable in time $T$ on the Weyl chamber $S^{\dmin-1}_\plusdown$.
Conversely, if the reduced control system~\eqref{eq:reduced} is controllable in time $T$ on the Weyl chamber $S^{\dmin-1}_\plusdown$, then the full control system~\eqref{eq:bilinear} is controllable in time $T+\varepsilon$ for all $\varepsilon>0$.
Moreover there exists a finite time $T$ such that both systems are controllable in time $T$.
The analogous result holds, \emph{mutatis mutandis}, in the bosonic and fermionic cases.
\end{theorem}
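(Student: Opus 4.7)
The forward direction is an immediate corollary of the Equivalence Theorem~\ref{thm:equivalence}: given $\sigma_1,\sigma_2\in S^{\dmin-1}_\plusdown$, consider the diagonal states $\ket{\sigma_1}=\diag(\sigma_1)$ and $\ket{\sigma_2}=\diag(\sigma_2)$; by hypothesis there is a~\eqref{eq:bilinear}-trajectory between them of duration at most $T$, and its projection under $\mathrm{sing}^\plusdown$ is, by Theorem~\ref{thm:equivalence}, a~\eqref{eq:reduced}-solution from $\sigma_1$ to $\sigma_2$ in the same time.

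The converse direction relies on Proposition~\ref{prop:reach}. If~\eqref{eq:reduced} is controllable on $S^{\dmin-1}_\plusdown$ in time $T$, the singular value decomposition lets us write every unit vector in $\C^{d_1}\otimes\C^{d_2}$ as $U\ket\sigma$ with $U\in\Uloc(d_1,d_2)$ and $\sigma\in\reach_{\ref{eq:reduced}}(\sigma_1^\plusdown,[0,T])$, so Proposition~\ref{prop:reach} implies that $\overline{\reach_{\ref{eq:bilinear}}(\ket{\psi_1},[0,T])}$ equals the entire state sphere. To upgrade this density to exact reachability in time $T+\varepsilon$, we fix a target $\ket{\psi_2}$ with singular values $\sigma_2^\plusdown$ and choose the auxiliary control $U:[0,T]\to\Uloc(d_1,d_2)$ in the reduced system with $U(0)\ket{\sigma_1^\plusdown}=\ket{\psi_1}$ and $U(T)\ket{\sigma_2^\plusdown}=\ket{\psi_2}$; the approximate lift from Theorem~\ref{thm:equivalence} yields~\eqref{eq:bilinear}-solutions uniformly approximating $U\ket\sigma$ to within an arbitrarily prescribed accuracy, and a short drift-assisted correction of length $O(\varepsilon)$ at each endpoint closes the remaining gap. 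This is the specialization of the corresponding controllability equivalence from~\cite[Sec.~4]{Reduced23} to our setting.

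For the \emph{moreover} clause, assume that $H_0$ is non-local (otherwise Corollary~\ref{coro:local-drift} shows that neither system is controllable). Then Proposition~\ref{prop:lift-controllable} gives controllability of the operator lift~\eqref{eq:lift} on the compact connected group $\SO(\dmin,\R)$. The Jurdjevic--Sussmann theorem for right-invariant systems on compact Lie groups makes this controllability uniform in time: there is a finite $T_0$ such that every rotation in $\SO(\dmin,\R)$ is reachable from the identity in time at most $T_0$. Transitivity of $\SO(\dmin,\R)$ on $S^{\dmin-1}$ then yields~\eqref{eq:reduced}-controllability on the full Schmidt sphere (and hence on the Weyl chamber) in time $T_0$, and the converse direction upgrades this to~\eqref{eq:bilinear}-controllability in time $T_0+1$; taking $T:=T_0+1$ works uniformly for both systems. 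The bosonic and fermionic cases follow verbatim with $\SO(d,\R)$ and $\SO(\floor{d/2},\R)$ in place of $\SO(\dmin,\R)$. The principal technical difficulty lies in the density-to-exact-reachability passage in the converse direction, where the singular-value invariance of pure local-unitary control forces us to invoke the drift during a carefully sized short interval in order to absorb both the local-rotation mismatch and the singular-value error simultaneously.
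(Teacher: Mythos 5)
The forward direction and the \emph{moreover} clause in your proposal are sound, and in fact your treatment of the \emph{moreover} part is somewhat more explicit than the paper's: you invoke the standard uniform finite-time controllability result for right-invariant systems on compact Lie groups and transitivity of $\SO(\dmin)$ on the sphere, which fills in steps the paper leaves implicit. You also correctly flag that the statement needs $H_0$ non-local for controllability to hold at all, an assumption the paper leaves silent.

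The genuine gap is in the converse direction, at precisely the place you yourself identify as the ``principal technical difficulty.'' You obtain approximate controllability in time $T$ via Proposition~\ref{prop:reach} and the Equivalence Theorem, and then assert that ``a short drift-assisted correction of length $O(\varepsilon)$ at each endpoint closes the remaining gap.'' This is not a proof step: the lift $\psi_\varepsilon$ only lands in an $\varepsilon$-ball around $\ket{\psi_2}$, and to steer from a nearby point to $\ket{\psi_2}$ \emph{exactly} in short time you need a quantitative local surjectivity/openness statement, not merely density. It is not clear why a drift-assisted correction should absorb both the local-unitary mismatch and the singular-value error in time $O(\varepsilon)$ without further argument --- indeed, for a generic accessible-but-not-small-time-locally-controllable system this is false. (There is also a minor incoherence in ``at each endpoint'': one starts at $\ket{\psi_1}$ exactly, so there is no gap at $t=0$.) The paper closes this gap by a genuinely different route: Proposition~\ref{prop:lift-controllable} gives direct accessibility of the reduced system at every point, which via~\cite[Prop.~4.15]{Reduced23} implies accessibility of~\eqref{eq:bilinear} at every \emph{regular} state; a theorem of Jurdjevic (\cite[Ch.~3 Thm.~2]{Jurdjevic97}) then upgrades approximate controllability plus accessibility on a compact state space to exact controllability in the same time $T$ for regular initial states; finally, density of regular states and the ability to leave the non-regular stratum in arbitrarily short time account for the extra $\varepsilon$. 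Your citation of~\cite[Sec.~4]{Reduced23} is on the right track, but the specific result needed is accessibility, not a ready-made controllability equivalence, and the Jurdjevic step is the actual content that replaces the hand-waved ``correction.''
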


\begin{proof}
Clearly if~\eqref{eq:bilinear} is controllable in time $T$, so is~\eqref{eq:reduced} on the Weyl chamber by Proposition~\ref{prop:reach}. 
The same result shows that if~\eqref{eq:reduced} is controllable in time $T$ on the Weyl chamber, then~\eqref{eq:bilinear} is approximately controllable in time $T$.
Proposition~\ref{prop:lift-controllable} shows in particular that~\eqref{eq:reduced} is directly accessible at every point (see~\cite[Sec.~4.6]{Reduced23} for the definitions related to accessibility) and hence by~\cite[Prop.~4.15]{Reduced23} the bilinear system~\eqref{eq:bilinear} is accessible (on the set of normalized states) at every regular state. 
Then~\cite[Ch.~3 Thm.~2]{Jurdjevic97} implies that~\eqref{eq:bilinear} is controllable in time $T$ for regular initial states. 
Since regular states are dense and since we can leave the set of non-regular states in an arbitrarily short amount of time $\varepsilon$, the full control system~\eqref{eq:bilinear} is controllable in time $T+\varepsilon$.
That~\eqref{eq:reduced} is controllable in finite time follows from Proposition~\ref{prop:lift-controllable}.
\end{proof}

Now we turn to stabilizability.
In accordance with the definitions given in~\cite[Sec.~4.3]{Reduced23} we say that a state $\sigma\in S^{\dmin-1}$ is \emph{stabilizable} for~\eqref{eq:reduced} if $0\in\conv(\vecs\sigma)$.
A direct consequence of Proposition~\ref{prop:lift-controllable} is that every state is stabilizable.

\begin{corollary}
Every state is stabilizable for the reduced control systems. 
\end{corollary}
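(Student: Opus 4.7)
The plan is to exploit the Weyl-group symmetry of the generator set that was already the key to proving Proposition~\ref{prop:lift-controllable}. Recall that a state $\sigma\in S^{\dmin-1}$ is stabilizable for~\eqref{eq:reduced} precisely when $0\in\conv(\vecs\sigma)$, so the task reduces to producing a finite convex combination of vectors of the form $\Omega\sigma$, with $\Omega\in\vecs$, that vanishes.

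Two ingredients from the preceding material will do the job. First, $\vecs\subset\mf{so}(\dmin,\R)$ is invariant under conjugation by every $W\in\mb W$, as cited from~\cite[Lem.~A.2]{Reduced23} above Proposition~\ref{prop:lift-controllable}. Second, Proposition~\ref{prop:lift-controllable} establishes that $\mb W$ acts irreducibly on $\mf{so}(\dmin,\R)$. Because the sign-flip element $W_1\in\mb W$ acts as $-1$ on the basis vector $e_{12}$, the representation is not the trivial one, so its subspace of $\mb W$-fixed vectors is a proper invariant subspace, hence equal to $\{0\}$ by irreducibility.

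The proof then concludes with a single averaging trick: for any $\Omega\in\vecs$, the element
$$
\bar\Omega \;:=\; \frac{1}{|\mb W|}\sum_{W\in\mb W} W\Omega W^{-1}
$$
is $\mb W$-invariant by construction, so $\bar\Omega=0$ by the previous paragraph. Applying both sides to $\sigma$ yields
$$
0 \;=\; \bar\Omega\,\sigma \;=\; \frac{1}{|\mb W|}\sum_{W\in\mb W}(W\Omega W^{-1})\,\sigma,
$$
and since each $W\Omega W^{-1}$ lies in $\vecs$ by Weyl-invariance, this exhibits $0$ as a convex combination of elements of $\vecs\sigma$. Hence $0\in\conv(\vecs\sigma)$, as required. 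The bosonic and fermionic cases go through identically, with $\mf H^s\subset\mf{so}(d,\R)$ and $\mf H^a\subset\mf{so}(\floor{d/2},\R)$ replacing $\vecs$ and the corresponding Weyl groups supplying the averaging.

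There is no substantive obstacle; the corollary is essentially an immediate consequence of Proposition~\ref{prop:lift-controllable} together with a one-line group-averaging argument. The only detail worth checking is that the non-triviality of the $\mb W$-representation persists in the low-dimensional edge cases---most notably $\dmin=2$, where $\mf{so}(2,\R)$ is one-dimensional but the sign-flip still acts as $-1$, so that the fixed subspace is genuinely $\{0\}$ and the averaging really collapses to zero rather than to a non-zero invariant.
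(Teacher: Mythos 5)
Your proof is correct and follows essentially the same route as the paper: average a non\-zero induced vector field over the Weyl group, observe that the result is $\mb W$-invariant and therefore lies in the fixed subspace, and use irreducibility to conclude that it vanishes, which places $0$ in $\conv(\vecs\sigma)$. The one small refinement is that your remark about the sign-flip acting as $-1$ on $e_{12}$ rules out the trivial representation uniformly, whereas the paper treats the one-dimensional case $\dmin=2$ as a separate base case before invoking irreducibility in higher dimensions.
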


\begin{proof} 
If $H_0$ is local the statement trivially holds. 
Otherwise choose some non-zero $H_U$.
Consider the uniform combination $\hat H_U=\frac{1}{|\mb W|}\sum_{w\in\mb W} w H_{U} w^{-1}\in\conv(\vecs)$ which is clearly $\mb W$-invariant.
If $\dmin=2$, it is clear that $\hat H_U=0$.
In higher dimensions $\mb W$-invariance and irreducibility of the action of $\mb W$ (Proposition~\ref{prop:lift-controllable}) again show that $\hat H_U=0$. 
The proof for the bosonic and fermionic cases is the same.
\end{proof}

If a state $\sigma\in S^{\dmin-1}$ is stabilizable for~\eqref{eq:reduced}, then in the bilinear control system~\eqref{eq:bilinear}, one can stay close to the local unitary orbit $\Uloc(d_1,d_2)\ket\sigma$ for arbitrary amount of time, cf.~\cite[Prop.~4.7]{Reduced23}.

Recall that a point $\sigma\in\Sigma$ is \emph{strongly stabilizable} if there is $U\in\Uloc(d_1,d_2)$ such that $H_U\sigma=0$. 
Specializing~\cite[Prop.~4.5]{Reduced23} we obtain the following result.

\begin{proposition}
Let $\sigma\in S^{\dmin-1}$ and $U\in\Uloc(d_1,d_2)$ and set $\ket\psi=U\ket\sigma$. 
If there is some $H\in\iu\uloc(d_1,d_2)$ satisfying $(H_0+H)\ket{\psi}=0$, then $H_U\sigma=0$ and $\sigma$ is strongly stabilizable.

Conversely let $\sigma\in S^{\dmin-1}$ be strongly stabilizable and let $U\in\Uloc(d_1,d_2)$ be such that \mbox{$H_U\sigma=0$}. 
Moreover assume that $\sigma$ is regular. 
Then there is some $H\in\iu\uloc(d_1,d_2)$ such that $(H_0+H)\ket{\psi}=0$ where we again set $\ket\psi=U\ket\sigma$. In fact one can choose
$$
-\iu H(t) = - U(t) ((\ad^d_{\sigma(t)})^{-1} \circ \Pi_\Sigma^\perp) \big(U(t)^*(\iu H_0)U(t)\ket{\sigma(t)}\big),
$$
where $(\ad^d_{\sigma})^{-1}$ is given explicitly in Lemma~\ref{lemma:ad-d-inv}.
The analogous result holds, \emph{mutatis mutandis}, in the bosonic and fermionic cases.
\end{proposition}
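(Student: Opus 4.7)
The proof splits into two directions that both rest on the identity $H_U\sigma = \Pi_\Sigma(U^*(\iu H_0)U\ket\sigma)$, which is immediate from the definition of the induced vector field given in Section~\ref{sec:reduced-system}, combined with the structural fact that $\ad_\sigma^d$ always takes values in $\Sigma^\perp$ and, when $\sigma$ is regular, surjects onto it.

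For the forward direction, the plan is to start from $(H_0+H)\ket\psi = 0$, multiply by $\iu U^*$ on the left, and use $\ket\psi = U\ket\sigma$ to obtain
\begin{equation*}
U^*(\iu H_0)U\ket\sigma \;=\; -\,U^*(\iu H)U\ket\sigma.
\end{equation*}
Since $\iu H \in \uloc(d_1,d_2)$ and conjugation by the local unitary $U$ preserves $\uloc(d_1,d_2)$, the element $U^*(\iu H)U$ is again local skew-Hermitian, and so its action on $\ket\sigma$ lies in the image of $\ad_\sigma^d$. A short matrix-level computation using the formula $(\iu E\otimes\id + \id\otimes\iu F)\ket\sigma = \ket{\iu(E\tilde\sigma + \tilde\sigma\overline F)}$ shows that the diagonal entries of the resulting matrix are purely imaginary (because the diagonals of $E$ and $F$ are real), so this vector is orthogonal to the real-diagonal subspace $\Sigma$. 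Applying $\Pi_\Sigma$ to the displayed identity therefore annihilates the right-hand side, leaving $H_U\sigma = 0$ on the left, which in particular shows that $\sigma$ is strongly stabilizable.

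For the converse, assume $\sigma$ is regular and $H_U\sigma = 0$. Setting $\xi := U^*(\iu H_0)U\ket\sigma$, the hypothesis says precisely that $\Pi_\Sigma\xi = 0$, i.e., $\xi \in \Sigma^\perp$. Regularity guarantees that $\ad_\sigma^d$ restricted to the orthocomplement of its kernel is a linear isomorphism onto $\Sigma^\perp$; this is exactly the content of Lemma~\ref{lemma:ad-d-inv}. Hence one may define
\begin{equation*}
A \;:=\; (\ad_\sigma^d)^{-1}\bigl(\Pi_\Sigma^\perp\xi\bigr) \;\in\; \uloc(d_1,d_2),
\end{equation*}
so that $\ad_\sigma^d(A) = \xi$. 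Unwinding the definition $\ad_\sigma^d(A) = -A\ket\sigma$ yields $A\ket\sigma = -\xi = -U^*(\iu H_0)U\ket\sigma$. Setting $\iu H := -U A U^*$, which still lies in $\uloc$ by conjugation invariance, and applying $U$ to both sides recovers $(\iu H)\ket\psi = -(\iu H_0)\ket\psi$, i.e., $(H_0+H)\ket\psi = 0$. The resulting formula matches the one displayed in the statement.

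The main obstacle — and the only point where regularity is essential — is the surjectivity claim $\img\ad_\sigma^d = \Sigma^\perp$ for regular $\sigma$, which is precisely what Lemma~\ref{lemma:ad-d-inv} encapsulates through the explicit expression for the pseudoinverse. At non-regular $\sigma$ the image of $\ad_\sigma^d$ is a proper subspace of $\Sigma^\perp$, and the explicit lift can fail; this mirrors the singular behaviour of the reduced control system where singular values coincide or vanish. The bosonic and fermionic cases go through \emph{mutatis mutandis}, using $\ad_\sigma^s$ with $\Sigma$ (Lemma~\ref{lemma:ad-s-inv}) and $\ad_\xi^a$ with $\Xi$ (Lemma~\ref{lemma:ad-a-inv}) in place of $\ad_\sigma^d$ and $\Sigma$.
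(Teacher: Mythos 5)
Your approach is more elementary and self-contained than the paper's, which simply defers to \cite[Prop.~4.5]{Reduced23} via the equivalence machinery; your direct matrix-level argument is a valid and arguably more transparent route. The forward direction is correct: multiplying $(H_0+H)\ket\psi=0$ by $\iu U^*$, observing that $U^*(\iu H)U$ remains in $\uloc(d_1,d_2)$, computing that $\iu(E\tilde\sigma+\tilde\sigma\overline F)$ has purely imaginary diagonal, and applying $\Pi_\Sigma$ indeed kills the right-hand side and leaves $H_U\sigma=0$.

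However, there is a concrete sign error in the converse. You correctly derive $A\ket\sigma=-\xi=-U^*(\iu H_0)U\ket\sigma$, but then set $\iu H:=-UAU^*$. Applying $U$ to both sides of $A\ket\sigma=-U^*(\iu H_0)U\ket\sigma$ gives $UAU^*\ket\psi=-\iu H_0\ket\psi$, and with your definition $UAU^*=-\iu H$, this reads $-\iu H\ket\psi=-\iu H_0\ket\psi$, i.e., $(H_0-H)\ket\psi=0$ --- not $(H_0+H)\ket\psi=0$ as you claim. Your intermediate assertion ``$(\iu H)\ket\psi=-(\iu H_0)\ket\psi$'' does not follow from the chosen sign. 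The correct choice is $\iu H:=UAU^*$, equivalently $-\iu H=-UAU^*$, which is exactly what the displayed formula in the statement reads once one interprets the action of $U$ on the Lie-algebra element as $\Ad_U$. Your claim that ``the resulting formula matches the one displayed in the statement'' is therefore incorrect for the sign you chose: the displayed formula has $-\iu H=-U(\cdots)$, while your definition corresponds to $-\iu H=+U(\cdots)$. With this single sign corrected, the converse goes through as you intend, the regularity hypothesis enters exactly where you say (surjectivity of $\ad_\sigma^d$ onto $\Sigma^\perp$ via Lemma~\ref{lemma:ad-d-inv}), and the \emph{mutatis mutandis} reductions to the bosonic and fermionic cases are fine.
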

\noindent Note that the assumption on regularity is necessary in general, cf.~\cite[Ex.~3.12]{Reduced23}. 

The local Hamiltonian $H$ in the previous result is called a \emph{compensating Hamiltonian}, and indeed this is a special case of Proposition~\ref{prop:exact-lift}. 
Note that the expressions in Lemmas~\ref{lemma:ad-d-inv}, \ref{lemma:ad-s-inv} and~\ref{lemma:ad-a-inv}, and hence the compensating Hamiltonian, blow up as $\sigma$ approaches a non-regular state.

The following result yields a simple special case in which strong stabilizability is easy to determine.

\begin{lemma}
Let $H_0=\sum_{i=1}^m E_i\otimes F_i$ and assume that all $E_i$ commute or that all $F_i$ commute.
Then there exists $U\in\Uloc(d_1,d_2)$ such that $H_U\equiv0$. 
In particular, in this case every state is strongly stabilizable.
The analogous result holds, \emph{mutatis mutandis}, in the bosonic and fermionic cases.
\end{lemma}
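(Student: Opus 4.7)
The plan is to exploit the explicit formulas for the induced vector fields from Proposition~\ref{prop:reduced-hams} together with the standard fact that a commuting family of Hermitian matrices can be simultaneously diagonalized by a unitary. By the symmetry of the hypothesis I would treat the case in which all $E_i$ commute; the case of commuting $F_i$ is analogous (with the roles of $V$ and $W$ swapped). First I would pick a unitary $V$ (in $\U(d_1)$ for the distinguishable setting, or in $\U(d)$ for the indistinguishable ones) that simultaneously diagonalizes every $E_i$, so that each $V^*E_iV$ is a \emph{real} diagonal matrix.

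In the distinguishable case I would then choose any $W\in\U(d_2)$, say $W=\id$. Since $V^*E_iV$ is diagonal, the Hadamard product $V^*E_iV\circ W^*F_iW$ is supported on the diagonal, and its diagonal entries have the form $(V^*E_iV)_{jj}(W^*F_iW)_{jj}$. The first factor is real by construction and the second is real because it is a diagonal entry of the Hermitian matrix $W^*F_iW$; hence each summand is real and the imaginary part vanishes entry-wise, giving $H_{V\otimes W}\equiv 0$. The bosonic case is identical after setting $W=V$ in the formula for $-H^s_{V\otimes V}$.

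The fermionic case requires slightly more bookkeeping. Using the block formula
\[
-(H^a_{V\otimes V})_{ij}=\sum_{k=1}^r \Im\bigl((V^*E_kV)_{2i-1,2j-1}(V^*F_kV)_{2i,2j}-(V^*E_kV)_{2i-1,2j}(V^*F_kV)_{2i,2j-1}\bigr),
\]
I would observe that, since $V^*E_kV$ is diagonal, $(V^*E_kV)_{2i-1,2j}=0$ identically (odd vs.\ even index) and $(V^*E_kV)_{2i-1,2j-1}=0$ unless $i=j$. Only the diagonal terms $i=j$ survive, and for those the surviving product is the imaginary part of a product of two reals, which is zero.

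In all three cases the single choice $U=V\otimes W$ (respectively $V\otimes V$) gives $H_U\equiv 0$ as a vector field on the Schmidt sphere, so trivially $H_U\sigma=0$ for every $\sigma$, proving that every state is strongly stabilizable. No step is genuinely delicate; the only spot where one needs to pay attention is verifying in the fermionic formula that both kinds of off-diagonal contributions vanish for a diagonal $V^*E_kV$, which follows from the parity observation above.
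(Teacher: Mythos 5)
The paper states this lemma without a proof, so there is no reference argument to compare against; yours is the natural one and it is correct. You simultaneously diagonalize the commuting Hermitian family with a unitary $V$, so that each $V^*E_kV$ is real diagonal, and then read off from the entrywise formulas in Proposition~\ref{prop:reduced-hams} that every entry of the induced vector field is $\Im$ of a product in which either one factor vanishes (an off-diagonal entry of a diagonal matrix) or both factors are real (diagonal entries of Hermitian matrices). For the $i=j$ entries you could alternatively just invoke Lemma~\ref{lemma:skew-sym}: $H_U$ is skew-symmetric, so its diagonal is automatically zero.

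One small point worth tightening in the fermionic case: the formula
\[
-(H^a_{V\otimes V})_{ij}=\sum_{k} \Im\bigl((V^*E_kV)_{2i-1,2j-1}(V^*F_kV)_{2i,2j}-(V^*E_kV)_{2i-1,2j}(V^*F_kV)_{2i,2j-1}\bigr)
\]
is \emph{not} symmetric under $E_k\leftrightarrow F_k$ (the $E_k$ enter through odd-indexed rows and the $F_k$ through even-indexed rows), so the commuting-$F_k$ case is not literally the commuting-$E_k$ case ``with roles swapped.'' It does still work: if $V^*F_kV$ is diagonal then $(V^*F_kV)_{2i,2j}=0$ for $i\ne j$ and $(V^*F_kV)_{2i,2j-1}=0$ identically by the same parity reasoning, and the $i=j$ term is again a product of two reals. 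You should state this case explicitly rather than appeal to analogy. With that clarification the proof is complete in all three settings.
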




\subsection{Speed Limit and Control Time} \label{sec:speed-limit}


By \emph{speed limit} we simply mean an upper bound on the velocity that any solution to the given control system can achieve.
Note that the full control system~\eqref{eq:bilinear} (resp.~\eqref{eq:bilinear2}) does not have any such speed limit, since the controls may be unbounded, but, by construction, the reduced control system~\eqref{eq:reduced} (resp.~\eqref{eq:reduced-sym} and~\eqref{eq:reduced-asym}) always admits a (finite) speed limit, cf.~\cite[Prop.~4.1]{Reduced23}.

For any matrix $\Omega\in\R^{n\times n}$, we write $\|\Omega\|_{\infty}$ for the largest singular value of $\Omega$. 
This is exactly the operator norm with respect to the usual Euclidean norm, and hence it is clear that for $\Omega\in\mf{so}(n,\R)$, the norm $\|\Omega\|_{\infty}$ corresponds to the largest velocity that $\Omega$ achieves on the unit sphere. 
This immediately yields the following result:

\begin{lemma}
Let $\sigma:[0,T]\to S^{\dmin-1}$ 
be any solution to~\eqref{eq:reduced}. 
Then it holds that $\|\dot\sigma(t)\|\leq\max_{U}\|H_U\|_{\infty}$ almost everywhere.\footnote{The maximum exists and is achieved since the map $U\mapsto \|H_U\|_{\infty}$ is continuous on a compact domain.}
The analogous result holds, \emph{mutatis mutandis}, in the bosonic and fermionic cases.
\end{lemma}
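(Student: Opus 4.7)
The plan is to bound $\|\dot\sigma(t)\|$ pointwise almost everywhere directly from the defining ODE of the reduced control system, and then take the supremum over the control.

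First I would use the definition of a solution to~\eqref{eq:reduced}: by assumption there is a measurable $U:[0,T]\to\Uloc(d_1,d_2)$ such that $\dot\sigma(t)=-H_{U(t)}\sigma(t)$ for almost every $t\in[0,T]$. At each such $t$, I would take Euclidean norms on both sides. Since $H_{U(t)}$ acts as a linear map on $\R^{\dmin}$, the submultiplicativity of the operator norm gives
\[
\|\dot\sigma(t)\| \;=\; \|H_{U(t)}\sigma(t)\| \;\leq\; \|H_{U(t)}\|_{\infty}\,\|\sigma(t)\|.
\]
Because $\sigma$ takes values in the Schmidt sphere $S^{\dmin-1}$ (which is invariant under the dynamics by Lemma~\ref{lemma:skew-sym}), one has $\|\sigma(t)\|=1$, so $\|\dot\sigma(t)\|\leq\|H_{U(t)}\|_{\infty}\leq\max_{U\in\Uloc(d_1,d_2)}\|H_U\|_{\infty}$.

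Next I would justify that the maximum is attained. The induced vector field $H_U$ depends continuously on $U$ by the explicit expression in Proposition~\ref{prop:reduced-hams}, and the largest singular value is a continuous function of a matrix, so $U\mapsto\|H_U\|_{\infty}$ is continuous on the compact group $\Uloc(d_1,d_2)$; hence the maximum exists and is finite. This is precisely what is noted in the footnote of the lemma.

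There is no real obstacle here beyond checking that the measurability and pointwise bound on $U(t)$ indeed produce the claimed almost-everywhere bound on $\|\dot\sigma(t)\|$. The bosonic and fermionic cases require only replacing $\Uloc(d_1,d_2)$ by $\Uloc^s(d)$, $H_U$ by $H_U^s$ or $H_U^a$, and $S^{\dmin-1}$ by $S^{d-1}$ or $S^{\floor{d/2}-1}$; continuity, compactness, and the bound $\|\sigma(t)\|=1$ carry over verbatim, so the same one-line estimate delivers the result.
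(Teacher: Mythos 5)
Your proposal is correct and follows precisely the reasoning the paper gives (the paper states the bound as an immediate consequence of interpreting $\|\Omega\|_\infty$ as the largest velocity of the linear vector field $\Omega$ on the unit sphere, together with invariance of the Schmidt sphere). The one-line estimate $\|\dot\sigma(t)\|=\|H_{U(t)}\sigma(t)\|\le\|H_{U(t)}\|_\infty\|\sigma(t)\|=\|H_{U(t)}\|_\infty$ is exactly the intended argument, and your justification of the maximum via continuity on the compact group matches the footnote.
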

\noindent Hence we need to find a good upper bound for $\|H_U\|_{\infty}$ over all $U\in\Uloc(d_1,d_2)$.

\begin{lemma}
Let $H_0=\sum_{k=1}^r E_k\otimes F_k$. Then
$$
\max_{U\in\Uloc(d_1,d_2)}\|H_U\|_{\infty} 
\leq 
\sqrt{\sum_{k=1}^r\|E_k\|^2_2\|F_k\|^2_2},
$$
where $\|A\|_2=\sqrt{\tr(A^*A)}$ denotes the Frobenius norm. The same bound holds a fortiori for $\max_{U\in\Uloc^s(d)}\|H_U^s\|_{\infty}$ and $\max_{U\in\Uloc^s(d)}\|H_U^a\|_{\infty}$.
\end{lemma}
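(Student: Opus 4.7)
The starting point is to reduce the operator norm to the Frobenius norm. Lemma~\ref{lemma:skew-sym} tells us that $H_U$ is a real skew-symmetric matrix, so the standard inequality $\|\Omega\|_\infty\leq\|\Omega\|_2$ for real matrices gives $\|H_U\|_\infty\leq\|H_U\|_2$. This converts a spectral problem into a bookkeeping problem about entry sums, which plays well with the explicit formula from Proposition~\ref{prop:reduced-hams}.

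Next, I would plug in that formula. Writing $A_k:=V^*E_kV$ and $B_k:=W^*F_kW$, which preserve the Frobenius norms $\|A_k\|_2=\|E_k\|_2$ and $\|B_k\|_2=\|F_k\|_2$ by unitary invariance, Proposition~\ref{prop:reduced-hams} gives $(H_U)_{ij}=-\sum_{k=1}^r\Im\bigl((A_k)_{ij}(B_k)_{ij}\bigr)$ for $i,j\in\{1,\dots,\dmin\}$. Two pointwise estimates drive everything. First, $|\Im(z)|\leq|z|$ lets us forget the imaginary part. Second, Cauchy--Schwarz in the summation variable $k$ yields
\[
|(H_U)_{ij}|^2 \leq \Bigl(\sum_k |(A_k)_{ij}|^2\Bigr)\Bigl(\sum_k |(B_k)_{ij}|^2\Bigr).
\]

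The third step is to sum over the $\dmin^2$ index pairs $(i,j)$. The partial sums combine cleanly, since $\sum_{i,j=1}^{\dmin}|(A_k)_{ij}|^2\leq\|A_k\|_2^2=\|E_k\|_2^2$ (the inequality being an equality when $\dmin=d_1$), and analogously for $B_k$. The remaining double sum $\sum_{i,j}\bigl(\sum_k|(A_k)_{ij}|^2\bigr)\bigl(\sum_l|(B_l)_{ij}|^2\bigr)$ has to be reorganised into the form $\sum_k\|E_k\|_2^2\|F_k\|_2^2$ under the square root, which is the most delicate step. The bosonic case is formally identical with $V=W$; the fermionic case runs the same three steps on the $2\times2$ block structure of $\qdiag$, and many entries in each block are forced to vanish, which is why the same bound holds \emph{a fortiori}.

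The main obstacle I anticipate is precisely the last recombination: a naive sum-product inequality only produces the weaker product-of-sums $\sqrt{(\sum_k\|E_k\|_2^2)(\sum_l\|F_l\|_2^2)}$, whereas the stated claim pairs $k=l$. Closing this gap should require respecting the pairing $(A_k,B_k)$ throughout — e.g.~by keeping the Cauchy--Schwarz step on $k$ as tight as possible and avoiding a separate triangle inequality over $k$ — rather than bounding the two sums independently.
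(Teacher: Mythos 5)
Your plan reproduces the paper's own route through its first steps (operator norm $\leq$ Frobenius norm, the entrywise formula of Proposition~\ref{prop:reduced-hams}, $|\Im z|\leq|z|$, Cauchy--Schwarz over $k$), and you put your finger exactly on the sore spot: the Cauchy--Schwarz at fixed $(i,j)$ yields $|(H_U)_{ij}|^2\leq\bigl(\sum_k|(A_k)_{ij}|^2\bigr)\bigl(\sum_l|(B_l)_{ij}|^2\bigr)$, so summing over $i,j$ gives the product-of-sums $\sqrt{\bigl(\sum_k\|E_k\|_2^2\bigr)\bigl(\sum_l\|F_l\|_2^2\bigr)}$, not the paired $\sqrt{\sum_k\|E_k\|_2^2\|F_k\|_2^2}$. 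You then express the hope that a more careful handling of the $k$-pairing might close this gap.

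It cannot be closed, because the stated inequality is not robust to the choice of decomposition and, for unconstrained $E_k,F_k$, actually fails. Replacing a term $E_k\otimes F_k$ by $N$ copies of $\tfrac{1}{\sqrt{N}}E_k\otimes\tfrac{1}{\sqrt{N}}F_k$ leaves $H_0$ (and hence $H_U$) unchanged but scales $\sum_k\|E_k\|_2^2\|F_k\|_2^2$ by $1/N$. Concretely, with $d_1=d_2=2$, $r=6$, $(E_k,F_k)=(\sigma_x,\sigma_y)$ for $k\leq 3$ and $(\sigma_y,\sigma_x)$ for $k\geq 4$, one has $H_0=3(\sigma_x\otimes\sigma_y+\sigma_y\otimes\sigma_x)$ and, at $V=W=\id$, $H_U=\begin{pmatrix}0&6\\-6&0\end{pmatrix}$ with $\|H_U\|_\infty=6$, whereas $\sqrt{\sum_k\|E_k\|_2^2\|F_k\|_2^2}=\sqrt{24}<6$. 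The paper's own proof has the same lacuna: its displayed chain implicitly uses $\bigl|\sum_k x_k\bigr|^2\leq\sum_k|x_k|^2$ entrywise, which is false for $r\geq2$; already in the $r=2$ variant of the example above ($E_1=\sigma_x,F_1=\sigma_y,E_2=\sigma_y,F_2=\sigma_x$, $V=W=\id$) one finds $\sum_{ij}|(H_U)_{ij}|^2=8\not\leq 4=\sum_k\sum_{ij}|(E_k)_{ij}|^2|(F_k)_{ij}|^2$. So the product-of-sums bound you arrive at, or the intermediate bound $\sum_k\|E_k\|_2\|F_k\|_2$, is the correct conclusion of this route, and the tighter stated bound would require an additional hypothesis on the decomposition (e.g.\ orthogonality of the $E_k$ or of the $F_k$) that neither you nor the paper invoke.
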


\begin{proof}
The Frobenius norm $\|\cdot\|_2$ and the spectral norm $\|\cdot\|_\infty$ are related by $\|\cdot\|_\infty\leq\|\cdot\|_2$, 
see~\cite[Prob.~5.6.P23]{HJ1ed2}.
Using the Cauchy-Schwarz inequality we compute for any $U=V\otimes W\in\Uloc(d_1,d_2)$ that
\begin{align*}
\|H_U\|_{\infty}^2
\leq
\sum_{i,j=1}^\dmin|(H_U)_{ij}|^2
\leq 
\sum_{k=1}^r \sum_{i,j=1}^\dmin |(V^*E_kV)_{ij}|^2 |(W^*F_kW)_{ij}|^2
\leq 
\sum_{k=1}^r \|E_k\|^2_2 \|F_k\|^2_2.
\end{align*}
This concludes the proof in the case of distinguishable subsystems.
The bound continues to hold in the bosonic and fermionic cases since restricting the drift or the controls cannot lead to faster evolution of the singular values.
\end{proof}

To obtain an lower limit on the time needed to reach any target state from any initial state, we also need to know the largest distance between any pair of points.
This is the diameter of the space\footnote{Note that distances in the reduced state space are computed as the length of the shortest geodesic joining two points on the sphere. Hence, somewhat unintuitively, the diameter of a unit hypersphere is $\pi$, which is the distance between two antipodal points.}, and due to the Weyl group symmetry every state has an equivalent state in the Weyl chamber. 
Hence we are particularly interested in the diameter of the Weyl chamber, which is given in the following result.

\begin{lemma}
Consider the unit sphere $S^{n-1}$ embedded in $\R^n$ and let $\mb W=S_n\wr\mathbb Z_2$ be the Weyl group acting by coordinate reflections and permutations. Then the corresponding Weyl chamber has diameter $\arccos(\tfrac1{\sqrt n})\in [\tfrac\pi4,\tfrac\pi2)$.
\end{lemma}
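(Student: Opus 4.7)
The plan is to identify the Weyl chamber as a spherical simplex cut out by a simplicial cone, and then bound $\min\langle x,y\rangle$ over the chamber using the extreme rays of the cone. First I would note that the Weyl chamber is the intersection of $S^{n-1}$ with the cone
\[
C \;=\; \{x\in\R^n : x_1\geq x_2\geq\cdots\geq x_n\geq 0\}.
\]
This is cut out by $n$ hyperplanes in general position, so $C$ is simplicial and its extreme rays are generated by the vectors $\tilde v_k = (\underbrace{1,\ldots,1}_{k},0,\ldots,0)$ for $k=1,\ldots,n$; their normalizations $v_k := \tilde v_k/\sqrt{k}\in S^{n-1}$ are the vertices of the spherical simplex. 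A direct computation gives
\[
\langle v_j,v_k\rangle \;=\; \sqrt{\min(j,k)/\max(j,k)},
\]
which attains its minimum $1/\sqrt n$ precisely at $\{j,k\}=\{1,n\}$. Since the spherical distance is $d(x,y)=\arccos\langle x,y\rangle$, this shows $d(v_1,v_n)=\arccos(1/\sqrt n)$, giving the desired lower bound on the diameter.

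For the matching upper bound, I would show that $\langle x,y\rangle\geq 1/\sqrt n$ for all $x,y$ in the Weyl chamber, which translates to $d(x,y)\leq\arccos(1/\sqrt n)$. Since $C$ is simplicial, every $x\in C$ admits a unique decomposition $x=\sum_k\alpha_k v_k$ with $\alpha_k\geq 0$ (explicitly $\alpha_k=\sqrt{k}(x_k-x_{k+1})$, setting $x_{n+1}:=0$). The triangle inequality then yields
\[
1 \;=\; \|x\| \;=\; \Big\|\sum_k\alpha_k v_k\Big\| \;\leq\; \sum_k \alpha_k\|v_k\| \;=\; \sum_k\alpha_k,
\]
and analogously $\sum_l\beta_l\geq 1$ for $y=\sum_l\beta_l v_l$. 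Combining with the minimum vertex inner product and using $\alpha_j\beta_k\geq 0$,
\[
\langle x,y\rangle \;=\; \sum_{j,k}\alpha_j\beta_k\langle v_j,v_k\rangle \;\geq\; \frac{1}{\sqrt n}\Big(\sum_j\alpha_j\Big)\Big(\sum_k\beta_k\Big) \;\geq\; \frac{1}{\sqrt n}.
\]

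Combining the two bounds shows that the diameter equals $\arccos(1/\sqrt n)$. The range claim $\arccos(1/\sqrt n)\in[\pi/4,\pi/2)$ is then immediate: for $n\geq 2$ we have $1/\sqrt n\in(0,1/\sqrt 2]$, with $1/\sqrt 2$ attained at $n=2$ (giving $\pi/4$) and $0$ approached but never reached as $n\to\infty$ (giving $\pi/2$ as a strict upper bound). The only non-trivial step is the estimate $\sum_k\alpha_k\geq 1$, but this drops out of the $v_k$ being unit vectors; everything else reduces to bilinear algebra with the vertex Gram matrix.
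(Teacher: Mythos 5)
Your proof is correct and takes essentially the same approach as the paper: identify the Weyl chamber as a spherical simplex with vertices at the normalized cumulative indicator vectors $v_k$, and observe that the extremal pair is $\{v_1, v_n\}$. The notable difference is that the paper simply asserts ``the maximal distance in the Weyl chamber is achieved by two of its corners'' and then reads off which corners, whereas you actually prove this via the barycentric decomposition $x = \sum_k \alpha_k v_k$ with $\alpha_k = \sqrt{k}(x_k - x_{k+1}) \geq 0$, the triangle-inequality bound $\sum_k \alpha_k \geq 1$, and the vertex Gram matrix estimate $\langle v_j, v_k\rangle \geq 1/\sqrt{n}$. Your argument closes a genuine gap in the paper's proof (the claim that the diameter of a spherical simplex is attained at a pair of vertices is true but not entirely obvious without an argument like yours), so it is a strictly more complete version of the same route.
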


\begin{proof} 
First recall that the shortest distance on the sphere between two points $x,y\in S^{n-1}$ is given by $\arccos(x^\top y)$.
The maximal distance in the Weyl chamber is achieved by two of its corners.
Then it is clear that these points are $x=(1,0,\ldots,0)$ and $y=(\tfrac{1}{\sqrt{n}},\ldots,\tfrac{1}{\sqrt{n}})$ for the standard Weyl chamber $S^{n-1}_\plusdown$.
The result follows immediately.
\end{proof}

The \emph{control time $T^\star$} of a control system is the shortest (infimum) time sufficient to reach any state from any other state.
Using the upper bound on the speed of a solution, the lower bound on the diameter of the Weyl chamber and Theorem~\ref{thm:controllability}, we can give a lower bound on the control time:

\begin{proposition} 
The control time $T^\star$ of the full control system~\eqref{eq:bilinear} (resp.~\eqref{eq:bilinear2}) is finite and satisfies
$$
T^\star 
\geq 
\frac{\pi/4}{\max_{U}\|H_U\|_{\infty}} 
\geq 
\frac{\pi}{4\sqrt{\sum_k\|E_k\|^2_2\|F_k\|^2_2}}\,.
$$
\end{proposition}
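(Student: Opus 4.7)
The plan is to establish finiteness directly from the controllability theorem, and then derive the lower bound by transferring the problem to the reduced control system, where both a speed limit and a diameter estimate are already in hand. For finiteness, I would simply invoke Theorem~\ref{thm:controllability}, which guarantees that~\eqref{eq:bilinear} (resp.~\eqref{eq:bilinear2}) is controllable in some finite time, so $T^\star<\infty$.

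For the lower bound, focusing on the distinguishable case (the bosonic and fermionic cases being entirely analogous), I would fix the pair $\sigma_0=(1,0,\ldots,0)$ and $\sigma_1=\tfrac{1}{\sqrt{\dmin}}(1,\ldots,1)$ in the Weyl chamber $S^{\dmin-1}_\plusdown$ — the two corners achieving the Weyl chamber diameter $\arccos(1/\sqrt{\dmin})\geq\pi/4$ furnished by the preceding lemma — and consider the states $\ket{\psi_0}=\diag(\sigma_0)$ and $\ket{\psi_1}=\diag(\sigma_1)$. Suppose $\ket{\psi_1}\in\reach_{\ref{eq:bilinear}}(\ket{\psi_0},T)$ for some $T\geq 0$. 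The forward inclusion in Proposition~\ref{prop:reach} produces $U\in\Uloc(d_1,d_2)$ and $\sigma\in\reach_{\ref{eq:reduced}}(\sigma_0,T)$ with $\ket{\psi_1}=U\ket{\sigma}$, and since $\ket{\sigma}$ and $\ket{\psi_1}$ have identical singular values, $\sigma$ must be a Weyl group image of $\sigma_1$.

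Let $\sigma(\cdot):[0,T]\to S^{\dmin-1}$ be a reduced trajectory joining $\sigma_0$ to $\sigma$. Any Weyl group image of $\sigma_1$ has the form $(\pm 1/\sqrt{\dmin},\ldots,\pm 1/\sqrt{\dmin})$, whose inner product with $\sigma_0$ is $\pm 1/\sqrt{\dmin}$, so the geodesic distance between the endpoints of $\sigma(\cdot)$ is at least $\arccos(1/\sqrt{\dmin})\geq\pi/4$. By the speed limit lemma, the length $\int_0^T\|\dot\sigma(t)\|\,dt$ is at most $T\max_U\|H_U\|_\infty$, and since a curve's length dominates the geodesic distance between its endpoints, this yields $T\geq (\pi/4)/\max_U\|H_U\|_\infty$. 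Taking the infimum over admissible $T$ gives the first inequality, and chaining with the Frobenius norm bound of the preceding lemma gives the second. I do not expect a serious obstacle; the only point requiring care is that after quotienting by the local unitary action, $\sigma_0$ must still be separated from \emph{every} Weyl group image of $\sigma_1$, which is settled by the elementary computation above.
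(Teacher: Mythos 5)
Your argument is correct and follows exactly the route the paper intends: finiteness from Theorem~\ref{thm:controllability}, and the lower bound by combining Proposition~\ref{prop:reach}, the speed-limit lemma, and the Weyl-chamber-diameter lemma, applied to the corner pair of the Weyl chamber. Your extra check that no Weyl group image of $\tfrac1{\sqrt{\dmin}}(1,\dots,1)$ comes closer than $\arccos(1/\sqrt{\dmin})$ to $(1,0,\dots,0)$ is precisely the point that makes the diameter estimate usable in the quotient, and it is settled by the inner-product computation you give.
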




\section*{Acknowledgments}

I would like to thank Frederik vom Ende, Thomas Schulte-Herbrüggen and Gunther Dirr for valuable and constructive comments during the preparation of this manuscript.
The project was funded i.a.~by the Excellence Network of Bavaria ENB under the International PhD Programme of Excellence
\textit{Exploring Quantum Matter} (ExQM) and by the {\it Munich Quantum Valley} of the Bavarian State Government with funds from Hightech Agenda {\it Bayern Plus}.

\appendix

\section{Relation to Symmetric Lie Algebras} \label{app:sym-lie-alg}

In the main text we have shown that the local unitary actions on bipartite quantum states correspond to certain matrix diagonalizations, and we have stated that they themselves are related to certain symmetric Lie algebras. 
In this appendix we make these relations explicit and give all the relevant formulas.
For a compact overview of the relation of symmetric Lie algebras to matrix diagonalizations see~\cite[Tab.~2]{Diag22}.

Since we want to define a reduced control system on the singular values, a key question is how the singular values change in time. 
More precisely, given a differentiable path of matrices $\psi(t)$, what can we say about the derivative of the singular values? 
This question is made more complicated by the fact that the order and signs of the singular values are not unique (and if they are chosen in a unique way, they are not guaranteed to be differentiable). 
These issues can be resolved, and in fact one can do so in the more general setting of semisimple orthogonal symmetric Lie algebras, see~\cite{Diag22} and in particular~\cite[Ex.~1.1 \& 1.2]{Reduced23}.
We will recall and adapt the pertinent results as necessary. 

The reduction of control systems was proven in detail in~\cite{Reduced23} in the setting of semisimple orthogonal symmetric Lie algebras.
In order to rigorously prove the Equivalence Theorem~\ref{thm:equivalence}, we need to show how the control systems considered here can be interpreted as control systems in such symmetric Lie algebras.
First we need the following generalization of the equivalence results proven in~\cite{Reduced23}.

\begin{proposition} \label{prop:equivalence}
Let $V$ be an $n$-dimensional real inner product space, let $\mb L$ be a compact Lie group with Lie algebra $\mf l$ acting on $V$, and let $Y$ be a linear vector field on $V$. 
Consider the control system
\begin{align} \label{eq:bilinear-gen}
\dot v = \big(Y+\sum_{i=1}^m u_i(t) l_i\big) v
\end{align}
with fast and full control on $\mb L$.
Moreover assume that we have a semisimple orthogonal symmetric Lie algebra $\mf g=\mf k\oplus\mf p$ with associated pair $(\mb G,\mb K)$ and maximal Abelian subspace $\mf a$.
Let $\imath:V\to\mf p$ be a linear isometric isomorphism and $\jmath : \mb L \to \Ad_{\mb K}$ a surjective Lie group homomorphism such that
\begin{align} \label{eq:ij-relation}
\jmath(L)\imath(v)=\imath(Lv).
\end{align}
Then the control system is equivalent in the sense of Theorem~\ref{thm:equivalence} to the following reduced control system on $W:=\imath^{-1}(\mf a)$:
\begin{align} \label{eq:reduced-gen}
\dot w(t) = Y_{L(t)}(w(t))
\end{align}
where $Y_L=\Pi_W\circ L^\star(Y)\circ\iota$, with $\Pi_W:V\to W$ the orthogonal projection onto $W$, $\iota:W\to V$ the inclusion, and $(\cdot)^\star$ the pullback. 
\end{proposition}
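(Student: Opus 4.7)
The plan is to transport the control system~\eqref{eq:bilinear-gen} from $V$ to $\mf p$ via the isometric isomorphism $\imath$, apply the equivalence theorems of~\cite{Reduced23} in the semisimple orthogonal symmetric Lie algebra setting, and finally pull the resulting reduced system on the maximal Abelian subspace $\mf a$ back to $W = \imath^{-1}(\mf a)$. First I would set $p(t) := \imath(v(t))$, which converts~\eqref{eq:bilinear-gen} into $\dot p(t) = (\tilde Y + \sum_i u_i(t)\tilde l_i) p(t)$ on $\mf p$, where $\tilde Y := \imath \circ Y \circ \imath^{-1}$ is a linear vector field and $\tilde l_i := \imath \circ l_i \circ \imath^{-1}$. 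Differentiating the equivariance~\eqref{eq:ij-relation} at the identity gives $\tilde l_i = d\jmath(l_i) \in \mathrm{Lie}(\Ad_{\mb K})$, and combined with the surjectivity of $\jmath$ and the fast-and-full-control hypothesis on $\mb L$, this yields fast and full control on $\Ad_{\mb K}$ for the transported system.

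Then I would invoke~\cite[Thm.~3.8 and Thm.~3.14]{Reduced23}, which applies directly to the transported system in the symmetric Lie algebra $\mf g = \mf k \oplus \mf p$, to obtain a reduced control system on $\mf a$ of the form $\dot a(t) = \tilde Y_{K(t)}(a(t))$, where $\tilde Y_K := \pi_{\mf a} \circ \Ad_K^{-1} \circ \tilde Y \circ \Ad_K|_{\mf a}$ and $\pi_{\mf a} : \mf p \to \mf a$ is the orthogonal projection. Since $\imath$ is an isometric isomorphism carrying $W$ onto $\mf a$, it intertwines the orthogonal projections, $\imath \circ \Pi_W = \pi_{\mf a} \circ \imath$, so the reduced system on $\mf a$ pulls back to a well-defined control system on $W$.

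It remains to verify that the pulled-back system coincides with~\eqref{eq:reduced-gen}. Using~\eqref{eq:ij-relation} at the group level to write $\imath \circ L = \jmath(L) \circ \imath = \Ad_K \circ \imath$ for any $K \in \mb K$ with $\jmath(L) = \Ad_K$, a direct calculation gives
\[
\imath(Y_L(w)) = \pi_{\mf a}\,\Ad_K^{-1}\,\tilde Y\,\Ad_K\,\imath(w) = \tilde Y_K(\imath(w))
\]
for $w \in W$, so $\imath \circ Y_L = \tilde Y_K \circ \imath|_W$, as required. The main obstacle will be primarily bookkeeping: ensuring that measurability of controls and the admissible class of absolutely continuous solutions are preserved under the transport by $\imath$, and checking that the possibly non-trivial kernel of $\jmath$ (which acts trivially on $\mf p$ by construction, so the induced dynamics on $\mf a$ depend only on $\jmath(L)$) introduces no ambiguity in either direction of the equivalence.
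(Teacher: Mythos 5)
Your proposal is correct and follows essentially the same route as the paper: transport via $\imath$ to the symmetric Lie algebra setting, establish the correspondence $\imath^\star(X_K) = Y_L$ via the equivariance~\eqref{eq:ij-relation}, invoke~\cite[Thms.~3.8 and~3.14]{Reduced23} for the projection/lift equivalence on $\mf p$ and $\mf a$, and pull back through the isometry. The only item you gesture at without naming is the measurable-lift lemma (used in the paper via~\cite[Lem.~2.29]{Diag22}) that produces a measurable control $L$ in $\mb L$ from the measurable control $\Ad_K$ in $\Ad_{\mb K}$ along the surjection $\jmath$, but your closing remark about the kernel of $\jmath$ correctly identifies where that bookkeeping lives.
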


\begin{proof}
Let $\jmath_\star:=D\jmath(e)$ is the surjective Lie algebra homomorphism $\mf l\to\ad_{\mf k}$ corresponding to $\jmath$.
By surjectivity there are $k_i\in\mf k$ such that $\jmath_\star(l_i)=\ad_{k_i}$ for all $i=1,\ldots,m$.
By differentiating~\eqref{eq:ij-relation} we get $j_\star(l_i)\imath(v)=\imath(l_i(v))$.
Let $X:=\imath_\star(Y)$ be the drift vector field on $\mf p$.
If $\jmath(L)=\Ad_K$, then
$$
\imath^\star(X_K)
=
\imath^{-1}X_K\imath
=
\imath^{-1}\Pi_{\mf a}\Ad_K^{-1}X\Ad_K\imath
=
\Pi_W\imath^{-1}\Ad_K^{-1}X\Ad_K\imath
=
\Pi_W(\imath L)^{-1} X \imath L
=
Y_L.
$$
The remainder of the proof is split into two parts, the projection and the lift.

We begin with the projection.
Let $v:[0,T]\to V$ be a solution to~\eqref{eq:bilinear-gen}.
We want to show that $v^\down:[0,T]\to W$, defined by $\imath(v^\down)=\imath(v)^\down$, is a solution to~\eqref{eq:reduced-gen}.
We get that
$$
\frac{d}{dt}\imath(v)
=
\imath(\dot v)
=
\imath((Y+\sum_{i=1}^m u_i l_i) v)
=
\big(X+\sum_{i=1}^m\ad_{k_i}\big) \imath(v)
$$
almost everywhere. 
Hence $\imath(v)$ is a solution of the corresponding control system on $\mf p$, and we may apply~\cite[Thm.~3.8]{Reduced23} to obtain that $\imath(v)^\down$ is a solution of the reduced control system on $\mf a$, more explicitly, for almost every $t\in[0,T]$ there is some $K\in\mb K$ such that $\frac{d}{dt}\imath(v^\down(t))=X_K(\imath(v^\down(t)))$.
Next we show that $v^\down$ solves~\eqref{eq:reduced-gen}.
Indeed for the same $t$ we obtain by linearity of $\imath$ that $\dot v^\down=\imath^\star(X_K)(v^\down)=Y_L v^\down$ for some $L\in\mb L$.
This concludes the projection part of the proof.

Conversely, assume that we have a solution $w:[0,T]\to W$ to the reduced system~\eqref{eq:reduced-gen}.
Again we find for almost all $t\in[0,T]$ some $K\in\mb K$ such that $\frac{d}{dt}\imath(w(t))=\imath(\dot w(t))=X_K(\imath(w(t)))$,
and so $a:=\imath(w)$ solves the corresponding control system on $\mf a$. 
Hence there exists a corresponding control function $K:[0,T]\to\mb K$ which is measurable.
Using~\cite[Thm.~3.14]{Reduced23} we find approximate lifted solutions $p_\varepsilon:[0,T]\to\mf p$ with $\|p_\varepsilon-\Ad_Ka\|_{\infty}\leq\infty$.
As above we can compute with $v_\varepsilon:=\imath^{-1}(p_\varepsilon)$
$$
\frac{d}{dt}v_\varepsilon
=
\imath^{-1}(\dot p_\varepsilon)
=
\imath^{-1}((X+\sum_{i=1}^m u_i \ad_{k_i})p_\varepsilon)
=
(\imath^\star(X) +\sum_{i=1}^m u_i  l_i) v_\varepsilon
$$
and see that $v_\varepsilon$ is a solution to~\eqref{eq:bilinear-gen}.
Since $\imath$ is an isometry, for any measurable lift $L$ of $\Ad_K$ along $\jmath$ (which exists due to~\cite[Lem.~2.29]{Diag22}) we get
$$
\|v_\varepsilon-Lw\|_\infty
=
\|\imath(v_\varepsilon-Lw)\|_\infty
=
\|\imath(v_\varepsilon)-\jmath(L)\imath(w)\|_\infty
=
\|p_\varepsilon-\Ad_K(a)\|_\infty
\leq \varepsilon,
$$
it is also an $\varepsilon$-approximation.
This concludes the proof.
\end{proof}

See Table~\ref{tab:notation} for an overview of the notation related to the different control systems. 

\begin{sidewaystable}
\centering
\def\arraystretch{1.3}
\caption{We give a compact overview of the notation and the mathematical objects describing the various control systems presented in this paper, as well as the control systems of~\cite{Reduced23,LindbladReduced23}.}
\vspace{-2mm}
\label{tab:notation}
\begin{tabular}{llllll}
\hline\hline\\[-4mm]
 & General & Distinguishable & Bosonic & Fermionic & Lindbladian \\[2mm]
\hline\hline\\[-4mm]
Full System & \eqref{eq:bilinear-gen} & \eqref{eq:bilinear} & \eqref{eq:bilinear2} & \eqref{eq:bilinear2} & --- \\
Ambient space & $\mf p$ & $\C^{d_1}\otimes\C^{d_2}$ & $\mathrm{Sym}^2(\C^{d})$ & $\bigwedge^2(\C^{d})$ & $\mf{herm}_1(n)$ \\
State space & $S$ & $S(\C^{d_1}\otimes\C^{d_2})$ & $S(\mathrm{Sym}^2(\C^{d}))$ & $S(\bigwedge^2(\C^{d}))$ & $\mf{pos}_1(n)$ \\
State & $p$ & $\ket\psi$ & $\ket\psi$ & $\ket\psi$ & $\rho$ \\
Fast Controls & $\mb K$ & $\U_{\mathrm{loc}}(d_1,d_2)$ & $\U^s_{\mathrm{loc}}(d)$ & $\U^s_{\mathrm{loc}}(d)$ & $\SU(n)$ \\
Drift & $X$ & $-\iu H_0\in\mf u(d_1d_2)$ & $-\iu H_0\in\mf u^s(d^2)$ & $-\iu H_0\in\mf u^s(d^2)$ & $-L\in\mf{w}_{\mathsf{KL}}(n)$ \\
$\ad$ action & $\ad_a$ & $\ad^d_\sigma$ & $\ad^s_\sigma$ & $\ad^a_\xi$ & $\ad_\lambda$ 
\\[2mm]
\hline\hline\\[-4mm]
Reduced System & \eqref{eq:reduced-gen} & \eqref{eq:reduced} & \eqref{eq:reduced-sym} & \eqref{eq:reduced-asym} & --- \\
Ambient space & $\mf a$ & $\Sigma\cong\mf{diag}(d_1,d_2)$ & $\Sigma\cong\mf{diag}(d)$ & $\Xi\cong\mf{qdiag}(d)$ & $\Lambda=\mf{diag}_1(n)$ \\
State space & $R$ & $S^{d_{\min}-1}$ & $S^{d-1}$ & $S^{\floor{d/2}-1}$ & $\Delta^{n-1}$ \\
State & $a$ & $\sigma$ & $\sigma$ & $\xi$ & $\lambda$ \\
Weyl group & $\mb W$ & $S_{d_{\min}}\wr\Z_2$ & $S_d\wr\Z_2$ & $S_{\floor{d/2}}\wr\Z_2$ & $S_n$ \\
Inclusion & $\iota$ & $\diag$ & $\diag$ & $\qdiag$ & $\diag$ \\
Projection & $\Pi_{\mf a}$ & $\Pi_\Sigma$ & $\Pi_\Sigma$ & $\Pi_\Xi$ & $\Pi_{\diag}$ \\
Quotient map & $\pi$ & $\mathrm{sing}^\plusdown$ & $\mathrm{sing}^\plusdown$ & $\mathrm{qsing}^\plusdown$ & $\mathrm{spec}^\down$ \\
Induced vector fields & $X_K\in\mf X$ & $-H_{V\otimes W}\in\mf H$ & $-H^s_{V\otimes V}\in\mf H^s$ & $-H^a_{V\otimes V}\in\mf H^a$ & $-L_U\in\mf L$ \\
Natural wedge & --- & $\mf{so}(\dmin,\R)$ & $\mf{so}(d,\R)$ & $\mf{so}(\floor{d/2},\R)$ & $\mf{stoch}(n)$
\\[2mm]
\hline\hline\\[-4mm]
Decomposition & --- & Complex SVD & Autonne--Takagi fact. & Hua fact. & Hermitian EVD \\
Lie Algebra & --- & AIII & CI & DIII & A \\[2mm]
\hline\hline
\end{tabular}
\end{sidewaystable}

\subsection{Complex Singular Value Decomposition (Type AIII)} \label{app:AIII}

The complex singular value decomposition is encoded by the symmetric Lie algebra of type AIII, see for instance \cite[App.~A.7]{Diss-Kleinsteuber} and~\cite[Ch.~X \S2.3]{Helgason78}.
The standard matrix representation of this Lie algebra is the indefinite special unitary Lie algebra $\mf g_\AIII=\mf{su}(d_1,d_2)$ with Cartan decomposition $\mf g_\AIII=\mf k_\AIII\oplus\mf p_\AIII$ where\footnote{Often one denotes $\mf k_\AIII=\mf{s}(\mf u(d_1)\oplus\mf{u}(d_2))$.}
\begin{align*}
\mf k_\AIII &= \Big\{  \left(\begin{matrix}\iu E&0\\0&\iu F\end{matrix}\right) : \iu E\in\mf u(d_1),\, \iu F\in\mf u(d_2),\, \tr(E)=-\tr(F) \Big\},
\\
\mf p_\AIII &= \Big\{ \left(\begin{matrix}0&\psi\\\psi^*&0\end{matrix}\right) : \psi\in\C^{d_1,d_2} \Big\} \,.
\end{align*}
A choice of corresponding compact Lie group is $\mb K_\AIII=\mathrm S(\U(d_1)\times\U(d_2))$.

\begin{remark} \label{rmk:inner-prods}
In general we consider (semi)simple \emph{orthogonal} symmetric Lie algebras $\mf g=\mf k\oplus\mf p$ and so we also have to provide a ``compatible'' inner product on $\mf g$.
The inner product is (up to some irrelevant scaling) uniquely defined using the Killing form on $\mf g$, cf.~\cite[Ch.~V, Thm.~1.1]{Helgason78}.
Due to simplicity of $\mf g$ the Killing form is (again up to scaling) given by $\tr(AB)$.
In the following we will set the inner product on $\mf k$ to $-\tfrac12\tr(AB)$ and on $\mf p$ to $+\tfrac12\tr(AB)$, and $\mf k$ and $\mf p$ are orthogonal to each other.
Furthermore, we always use 
the real inner product $\Re(\braket{\psi,\phi})=\Re(\tr(\psi^*\phi))$ on states in $\C^{d_1}\otimes\C^{d_2}$.
\end{remark}

The spaces $\C^{d_1}\otimes\C^{d_2}$ and $\mf p_\AIII$ are identified using the map
$$
\imath^d : \C^{d_1}\otimes\C^{d_2} \to \mf p_\AIII ,\quad \ket\psi\mapsto\left(\begin{matrix}0&\psi\\\psi^*&0\end{matrix}\right).
$$ 

\begin{lemma} \label{lemma:imath-aiii}
The map $\imath^d$ is an $\R$-linear\footnote{Note that in the Lie algebraic context we always work with real vector spaces, even if their standard representation involves complex numbers.} isometric isomorphism.
The subspace $\mf a_\AIII:=\imath^d(\Sigma)$ is maximal Abelian and $\imath^d\circ\diag\circ\Pi_\Sigma = \Pi_{\mf a_\AIII}\circ\imath^d$.
The Weyl group $\mb W_\AIII$ is isomorphic to the generalized permutations $S_\dmin\wr\Z_2$ and $\mf w_\AIII:=\imath^d(\Sigma^\plusdown)$ is a Weyl chamber.
\end{lemma}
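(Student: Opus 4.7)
The plan is to check each claim by direct computation, invoking standard facts about the symmetric pair of type AIII where convenient.

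First I would verify that $\imath^d$ is an $\R$-linear isometric isomorphism. Linearity is immediate from the definition. Under the identification $\C^{d_1}\otimes\C^{d_2}\cong\C^{d_1,d_2}$ from Section~\ref{sec:matrix-decs}, bijectivity follows because every element of $\mf p_\AIII$ has the prescribed off-diagonal block form and the map $\psi\mapsto\bigl(\begin{smallmatrix}0&\psi\\\psi^*&0\end{smallmatrix}\bigr)$ is clearly invertible. For isometry I would use the convention of Remark~\ref{rmk:inner-prods} and directly compute
$$
\tfrac12\tr\bigl(\imath^d(\psi)\imath^d(\phi)\bigr) = \tfrac12\bigl(\tr(\psi\phi^*)+\tr(\psi^*\phi)\bigr) = \Re(\tr(\psi^*\phi)) = \Re(\braket{\psi|\phi}).
$$

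Next I would establish the maximal Abelian property of $\mf a_\AIII=\imath^d(\Sigma)$. Commutativity is clear since the product $\imath^d(\diag(\sigma))\imath^d(\diag(\tau))$ is block diagonal with real diagonal blocks, and two such matrices commute. For maximality I would take $\imath^d(\psi)\in\mf p_\AIII$ commuting with every $\imath^d(\diag(\sigma))$, compute the commutator in block form, and show that the resulting equations $\psi\,\diag(\sigma)^\top=\diag(\sigma)\,\psi^*$ for all $\sigma\in\R^{\dmin}$ force $\psi$ to be real and diagonal in the sense of $\mf{diag}(d_1,d_2,\R)$. The identity $\imath^d\circ\diag\circ\Pi_\Sigma = \Pi_{\mf a_\AIII}\circ\imath^d$ is then automatic, since $\diag\circ\Pi_\Sigma$ is the orthogonal projection of $\C^{d_1}\otimes\C^{d_2}$ onto $\Sigma$ and an $\R$-linear isometric isomorphism intertwines the orthogonal projections onto a subspace and its image.

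Finally I would identify the Weyl group and Weyl chamber. By general theory $\mb W_\AIII=N_{\mb K_\AIII}(\mf a_\AIII)/Z_{\mb K_\AIII}(\mf a_\AIII)$ acts on $\mf a_\AIII\cong\R^{\dmin}$ as a subgroup of $\mathrm O(\dmin)$. I would exhibit explicit generators inside $\mb K_\AIII$: a simultaneous coordinate permutation in $\C^{d_1}$ and $\C^{d_2}$ (restricted to the first $\dmin$ basis vectors and extended by the identity) realizes the subgroup $S_\dmin$ of permutations of singular values, while pairs of diagonal unitaries with $\pm 1$ entries, balanced so that the joint determinant is $+1$, realize the coordinate sign flips. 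Together these generate $S_\dmin\wr\Z_2$, and a standard argument shows that no further elements of the normalizer act non-trivially on $\mf a_\AIII$, so $\mb W_\AIII\cong S_\dmin\wr\Z_2$. The region $\Sigma^\plusdown$ of non-increasing non-negative entries is by construction a fundamental domain for this action, so its isometric image $\mf w_\AIII=\imath^d(\Sigma^\plusdown)$ is a Weyl chamber.

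The main technical obstacle is the maximal Abelian verification when $d_1\neq d_2$: the rectangular shape of $\psi$ and of $\diag(\sigma)$ requires unpacking the commutator entry-by-entry, and the mixing of $\psi$ with $\psi^*$ in the two off-diagonal blocks has to be combined carefully to conclude that $\psi$ is real diagonal. Everything else reduces to routine checks once the isometric isomorphism is in hand.
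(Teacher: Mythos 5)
Your plan is correct and, for the computations the paper actually carries out, essentially identical: the isometry check is the same trace identity $\tfrac12\tr(\imath^d(\psi)\imath^d(\phi))=\Re\tr(\psi^*\phi)$, and the intertwining $\imath^d\circ\diag\circ\Pi_\Sigma = \Pi_{\mf a_\AIII}\circ\imath^d$ follows, as you say, from the fact that a linear isometric isomorphism maps the orthogonal projection onto a subspace to the orthogonal projection onto its image. Where the paper delegates to references, you propose working things out directly: the paper cites~\cite[Tab.~2]{Diag22} for maximal Abelian-ness, while you do the commutator computation; the paper derives the Weyl group $S_\dmin\wr\Z_2$ from uniqueness of singular values up to order and sign together with realizability of every generalized permutation, while you use the normalizer/centralizer description $\mb W_\AIII=N_{\mb K_\AIII}(\mf a_\AIII)/Z_{\mb K_\AIII}(\mf a_\AIII)$ with explicit generators. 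The two Weyl group arguments are dual — the paper's is ``one SVD per $\mb K_\AIII$-orbit, up to generalized permutation,'' yours is ``count the induced automorphisms of $\mf a_\AIII$'' — and both are standard; yours is more self-contained at the price of having to verify that no extra automorphisms sneak in.

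One concrete repair: in the square case $d_1=d_2$ a single sign flip cannot be realized by $V,W$ both having only $\pm1$ diagonal entries while keeping $\det V\,\overline{\det W}=1$, because the number of $-1$'s contributing to a single isolated flip is forced to be odd. You need genuinely complex phases: for instance $V=W=\diag(\iu,1,\dots,1)$ satisfies $\det V\,\overline{\det W}=\iu\cdot(-\iu)=1$ and sends $\sigma_1\mapsto -\sigma_1$ while fixing the rest. For $d_1\neq d_2$ your $\pm1$ construction is fine, since the compensating $-1$ can be placed on the larger factor in a row or column that does not touch the top $\dmin\times\dmin$ block of $\tilde\sigma$.
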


\begin{proof}
It is clear that $\imath^d$ is an $\R$-linear isomorphism.
With the inner product on $\C^{d_1}\otimes\C^{d_2}$ and $\mf p$ defined as in Remark~\ref{rmk:inner-prods} a simple computation shows that $\imath^d$ is even an isometry:
$$
\tfrac12\tr(\imath^d(\ket\psi),\imath^d(\ket\phi)) 
=
\tfrac12\tr(\psi\phi^* + \psi^*\phi) 
=
\Re(\tr(\psi^*\phi)).
$$
That $\imath^d(\Sigma)$ is maximal Abelian is well-known, cf.~\cite[Tab.~2]{Diag22}.
The fact that $\imath^d$ is an isometry also proves that $\imath^d\circ\diag\circ\Pi_\Sigma = \Pi_{\mf a_\AIII}\circ\imath^d$.
That the Weyl group acts by generalized permutations follows from the fact that the singular values are unique up to order and sign and the fact that any generalized permutation can be implemented by choosing $V$ and $W$ appropriately.
\end{proof}

Moreover we define the following maps:
\begin{alignat*}{4}
\jmath^d&:         \Uloc(d_1,d_2)\to\Ad_{\mb K_\AIII}, &\quad V\otimes W                          &\mapsto \Ad_{V\times\overline W} \\
\jmath^d_\star&:  \uloc(d_1,d_2)\to\ad_{\mf k_\AIII}, &\quad \iu E\otimes\id + \id\otimes\iu F &\mapsto \ad_{\iu E\times\overline{\iu F}}.
\end{alignat*}
Note that $\jmath^d_\star$ is the derivative of $\jmath^d$ at the identity.

\begin{lemma} \label{lemma:jmath-aiii}
It holds that $\jmath^d$ is a Lie group isomorphism, and so $\jmath^d_\star$ is a Lie algebra isomorphism\footnote{Contrary to $\imath^d$, the map $\jmath^d_\star$ is not an isometry with respect to the inner products of Remark~\ref{rmk:inner-prods}.}.
For $U\in\Uloc(d_1,d_2)$, $\iu H\in\uloc(d_1,d_2)$ and $\ket\psi\in\C^{d_1}\otimes\C^{d_2}$, the isomorphism $\imath^d$ and $\jmath^d$ satisfy the compatibility conditions
\begin{equation} \label{eq:compat-aiii}
\jmath^d(U)\imath^d(\ket\psi)=\imath^d(U\ket\psi), \quad \jmath^d_\star(\iu H)\imath^d(\ket\psi)=\imath^d(\iu H\ket\psi).
\end{equation}
Similarly we have the correspondence of the infinitesimal action $\imath^d(\ad_\sigma(\iu H))=-\jmath^d_\star(\iu H)(\imath^d(\ket{\sigma}))$ and of the induced vector fields where if $-X:=\imath^d_\star(\iu H_0)=\imath^d\circ(\iu H_0)\circ(\imath^d)^{-1}$ then $-H_U=(\imath^d\circ\diag)^\star X_{\jmath^d(U)}$.
Moreover the map $\mathrm{sing}^\plusdown$ corresponds to the quotient map with image in the Weyl chamber $\mf p_\AIII\to\mf a_\AIII/\mb W_\AIII\cong \mf w_\AIII$, see~\cite[Sec.~2.1]{Diag22}.
\end{lemma}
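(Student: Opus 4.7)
The plan is to handle the claims in order, reducing everything to direct computations once $\jmath^d$ is shown to be a well-defined isomorphism.

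First I would establish that $\jmath^d$ is a well-defined group isomorphism. The kernel of the Kronecker product map $\U(d_1)\times\U(d_2)\to\Uloc(d_1,d_2)$ is the antidiagonal circle $\{(e^{-\iu\theta}\id,e^{\iu\theta}\id):\theta\in\R\}$, and a short computation shows that an element of this kernel is sent under $(V,W)\mapsto V\times\overline W$ to a scalar multiple of the identity, which acts trivially by $\Ad$; hence $\jmath^d$ is well defined. Surjectivity onto $\Ad_{\mb K_\AIII}$ follows because any $K\in\mb K_\AIII\subset\U(d_1)\times\U(d_2)$ can be rescaled by a central phase to have the form $V\times\overline W$ without changing $\Ad_K$. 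Injectivity reduces to showing that $\Ad_{V\times\overline W}=\id$ forces $V\otimes W=\id$, which follows because the kernel of $\Ad$ on $\U(d_1)\times\U(d_2)$ is the center $\{(\lambda_1\id,\lambda_2\id)\}$, and then $V\otimes W=\lambda_1\overline{\lambda_2}\,\id\otimes\id=\id$ in $\Uloc(d_1,d_2)$. Multiplicativity is immediate from $(V_1\otimes W_1)(V_2\otimes W_2)=V_1V_2\otimes W_1W_2$, and differentiating at the identity yields the Lie algebra isomorphism $\jmath^d_\star$.

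Next I would verify the group compatibility $\jmath^d(U)\imath^d(\ket\psi)=\imath^d(U\ket\psi)$ by a direct block-matrix computation: conjugating $\imath^d(\ket\psi)=\bigl(\begin{smallmatrix}0&\psi\\\psi^*&0\end{smallmatrix}\bigr)$ by $V\times\overline W$ gives the off-diagonal entry $V\psi\overline{W}^{*}=V\psi W^{\top}$, which matches the matrix form of $U\ket\psi$ recalled in Section~\ref{sec:matrix-decs}. Differentiating at $U=\id$ gives the Lie algebra version, the only subtle point being that one uses the Hermiticity of $F$ to identify $\overline F$ with $F^{\top}$. The infinitesimal action formula $\imath^d(\ad^d_\sigma(\iu H))=-\jmath^d_\star(\iu H)\imath^d(\ket\sigma)$ is then an immediate specialization of the Lie algebra compatibility to $\ket\psi=\diag(\sigma)$, the overall sign coming from the definition of $\ad^d_\sigma$.

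For the induced vector field identity, the plan is to unwind the definition $X_{\jmath^d(U)}=\Pi_{\mf a_\AIII}\circ\Ad_K^{-1}\circ X\circ\Ad_K\circ\iota$ from Proposition~\ref{prop:equivalence} with $K=V\times\overline W$ and $X=-\imath^d\circ(\iu H_0)\circ(\imath^d)^{-1}$, and to apply the group compatibility to convert each $\Ad_K^{\pm 1}$ into a $U^{\mp 1}$ acting on the quantum state. Combining this with the intertwining $\Pi_{\mf a_\AIII}\circ\imath^d=\imath^d\circ\diag\circ\Pi_\Sigma$ from Lemma~\ref{lemma:imath-aiii} and pulling back by $\imath^d\circ\diag$ produces exactly $-H_U=-\Pi_\Sigma\circ U^{*}(\iu H_0)U\circ\diag$. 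Finally, the correspondence of $\mathrm{sing}^\plusdown$ with the quotient map follows from the complex SVD: if $\psi=V\tilde\sigma^\plusdown W^{*}$, then $\Ad_{V^{*}\times W^{*}}\imath^d(\ket\psi)=\imath^d(\diag(\sigma^\plusdown))\in\mf w_\AIII$, so $\mathrm{sing}^\plusdown(\ket\psi)$ is the unique representative of the $\mb K_\AIII$-orbit in the Weyl chamber. The main (but mild) obstacle throughout is purely bookkeeping: keeping conjugates, transposes and sign conventions consistent across the block matrices, the infinitesimal calculation, and the pullback, in particular the $\overline W$ versus $W^{\top}$ interplay and the fact that $\jmath^d_\star$ is an isomorphism but not an isometry.
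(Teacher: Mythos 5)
Your plan follows the same route as the paper: reduce the compatibility relations and the vector–field identity to a block-matrix computation, obtain the infinitesimal relation by differentiation/specialization, and read off the quotient-map statement from uniqueness of singular values via the SVD. One step in your injectivity argument, however, is wrong as written and needs repair.

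You write that ``the kernel of $\Ad$ on $\U(d_1)\times\U(d_2)$ is the center $\{(\lambda_1\id,\lambda_2\id)\}$, and then $V\otimes W=\lambda_1\overline{\lambda_2}\,\id\otimes\id=\id$.'' The last equality requires $\lambda_1\overline{\lambda_2}=1$, which does not follow from $(\lambda_1\id,\lambda_2\id)$ lying in the center of $\U(d_1)\times\U(d_2)$. The relevant adjoint representation here is the one on $\mf g_\AIII=\mf{su}(d_1,d_2)$, not the adjoint representation of $\U(d_1)\times\U(d_2)$ on its own Lie algebra $\mf u(d_1)\oplus\mf u(d_2)$. Commuting with all of $\mf p_\AIII$, i.e.\ requiring $A\psi B^{-1}=\psi$ for all $\psi\in\C^{d_1,d_2}$ where the kernel element is $\diag(A,B)$, forces $A=c\id_{d_1}$ and $B=c\id_{d_2}$ with a common scalar $c\in\U(1)$. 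In your parametrization this means $\lambda_1=\lambda_2=c$, hence $W=\overline c\,\id$ and $V\otimes W=|c|^2\id\otimes\id=\id$ as desired. So the conclusion is correct, but the cited kernel was too large: if it really were the full center $\{(\lambda_1\id,\lambda_2\id):\lambda_1,\lambda_2\in\U(1)\}$, then $\jmath^d$ would fail to be injective (e.g.\ $\lambda\id\otimes\id$ with $\lambda\ne1$ would be in its kernel). Once you replace the kernel computation by the correct one on $\mf g_\AIII$, the rest of your argument is sound and essentially matches the paper's proof, which handles well-definedness by the phase-rescaling $e^{\iu\phi(d_1+d_2)}=\det W/\det V$ and leaves injectivity and surjectivity as a routine verification.
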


\begin{proof}
Even though $V\times\overline W$ does not always lie in $\mb K_\AIII$, we can choose $\phi\in\R$ such that $e^{\iu\phi}V\times\overline{e^{-\iu\phi}W}\in\mb K_\AIII$, and this phase disappears in the tensor product and in the adjoint representation.
Indeed any $\phi$ satisfying $e^{\iu\phi(d_1+d_2)}=\det W/\det V$ will do.
Hence $\jmath^d$ is well defined, and one easily verifies that it is an isomorphism.
The compatibility condition~\eqref{eq:compat-aiii} follows from a simple computation using the isomorphism of Section~\ref{sec:matrix-decs}.
Similarly the corresponding Lie algebra isomorphism $\jmath^d_\star$ can be written as $\iu E\otimes\id + \id\otimes\iu F \mapsto \ad_{\iu(E+\frac{\tr(F)-\tr(E)}{d_1+d_2}\id)\oplus\iu(-F+\frac{\tr(F)-\tr(E)}{d_1+d_2}\id)}$ to show explicitly that it is well defined.
By definition $\ad_\sigma(\iu H)=-\iu H\ket\sigma$. Using the compatibility condition~\eqref{eq:compat-aiii} this immediately yields $\imath^d(\ad_\sigma(\iu H))=-\jmath^d_\star(\iu H)(\imath^d(\ket{\sigma}))$ as desired.
Recall that we defined $H_U=\Pi_\Sigma\circ(U^*(\iu H_0)U)\circ\diag$ and in~\cite[Sec.~2.1]{Reduced23} we defined $X_K=\Pi_{\mf a}\Ad_K^\star(X)\circ\iota$. (We are slightly abusing notation here by writing $X_{\Ad_K}$.)
Using the compatibility condition~\eqref{eq:compat-aiii}, and $\Pi_{\mf a}\circ\imath^d=\imath^d\circ\diag\circ\Pi_\Sigma$, the claim $H_U=(\imath^d\circ\diag)^\star X_{\jmath^d(U)}$ follows from a simple computation.
The claim about the quotient map is just a restatement of the uniqueness of the singular values.
\end{proof}

\begin{remark} \label{rmk:svd-relation}
Explicitly~\eqref{eq:compat-aiii} states that $\jmath^d(V\otimes\overline W)\imath^d(\ket\psi)=\imath^d(\ket{V\psi W^*})$.
In a semisimple orthogonal symmetric Lie algebra every element in $\mf p_\AIII$ can be mapped into $\mf a_\AIII$ using the group action of $\mb K_\AIII$, cf.~\cite[Lem.~A.26]{Diag22}, which one might call ``diagonalization''.
In our case this means that $\imath^d(\ket\psi)$ can be mapped to some element in $\imath^d(\ket{V\psi W^*})\in\imath^d(\Sigma)$.
This exactly corresponds to the complex singular value decomposition.
Note however that in the Lie algebra setting we have $V\times W\in\mathrm S(\U(d_1)\times\U(d_2))$ and hence there is an additional restriction on the determinants of $V$ and $W$.
\end{remark}

\begin{remark} \label{rmk:ad-inv-def}
For regular $\sigma\in S^{\dmin-1}$ we have defined the map $\ad_\sigma^d$ in the main text and Lemma~\ref{lemma:jmath-aiii} shows that it is related to the adjoint representation $\ad_\sigma:\mf k_\AIII\to\mf p_\AIII$ (hence the name).
Denoting by $\mf k_\sigma^\perp$ and $\mf p_\sigma^\perp$ the orthogonal complement of the commutant of $\sigma$ in $\mf k_\AIII$ and $\mf p_\AIII$ respectively, it turns out that the restriction $\ad_\sigma:\mf k_\sigma^\perp\to\mf p_\sigma^\perp$ becomes bijective and hence invertible, see~\cite[Prop.~2.14]{Diag22}.
In fact this inverse is simply the Moore--Penrose pseudoinverse.
Moreover it holds that $\mf p_\sigma^\perp$ is just the orthocomplement of $\mf a$.
Hence, the pseudoinverse $(\ad_\sigma^d)^{-1}$ is defined on $\Sigma^\perp$ with image in $(\jmath^d_\star)^{-1}(k_\sigma^\perp)$.
Note however that $\jmath^d$ is not an isometry and so the orthocomplement has to be calculated in $\mf k$.
The precise definition is given in the following lemma.
\end{remark}

\begin{lemma} \label{lemma:ad-d-inv}
Let $\sigma\in S^{\dmin-1}$ be regular. 
Then it holds that
$\Sigma^\perp=\{ \ket\psi\in\C^{d_1}\otimes\C^{d_2} : \psi_{ii}\in\iu\R, 1\leq i\leq \dmin \}$,  
and the map $(\ad_\sigma^d)^{-1}$ can be described explicitly as
$$
(\ad_\sigma^d)^{-1} : \Sigma^\perp \to \uloc(d_1,d_2), \quad
\ket A \mapsto \iu E\otimes\id + \id\otimes\iu F\,,
$$
where $E_{ij}=0$ and $F_{ij}=0$ whenever $i>\dmin$ and $j>\dmin$, and for $i\leq\dmin$ and $j\leq\dmin$ we get
\begin{equation} \label{eq:ad-d-inv}
\iu E_{ii} = \iu F_{ii} = -\frac{A_{ii}}{2\sigma_i},
\quad
\iu E_{ij} = \frac{\sigma_jA_{ij} + \sigma_i\overline A_{ji}}{\sigma_i^2-\sigma_j^2},
\quad
\iu F_{ij} = \frac{\sigma_jA_{ji} + \sigma_i\overline A_{ij}}{\sigma_i^2-\sigma_j^2}.
\end{equation}
If $d_1>d_2$ resp. $d_1<d_2$ we additionally have
$$
\iu E_{ij} = \begin{cases} -\frac{A_{ij}}{\sigma_j} &\text{ if } j\leq d_2<i \\ 0 &\text{ if } j>d_2 \end{cases}
\quad\text{ resp. }\quad
\iu F_{ij} = \begin{cases} -\frac{A_{ji}}{\sigma_j} &\text{ if } i\leq d_1<j \\ 0 &\text{ if } i>d_1. \end{cases}
$$
\end{lemma}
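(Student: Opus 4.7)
The plan has two parts: first, the characterisation of $\Sigma^\perp$, and second, the verification that the stated expressions invert $\ad_\sigma^d$ and land in the orthocomplement of its kernel (which is what makes them the Moore--Penrose pseudoinverse of Remark~\ref{rmk:ad-inv-def}).

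For the description of $\Sigma^\perp$ I would just run the inner-product definition. Under the matrix isomorphism, the real inner product on $\C^{d_1}\otimes\C^{d_2}$ is $\Re\tr(\psi^*\phi)$, and $\Sigma=\diag(\mathbb R^{\dmin})$. The condition $\Re\tr(\psi^*\diag(\sigma))=\sum_{i=1}^{\dmin}\sigma_i\Re(\psi_{ii})=0$ for every $\sigma\in\R^{\dmin}$ forces $\Re(\psi_{ii})=0$ for $i\le\dmin$, which is exactly the claimed characterisation.

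For the inverse I would spell out the equation $-\iu(E\tilde\sigma+\tilde\sigma\overline F)=A$ entrywise, using $\tilde\sigma_{kl}=\sigma_k\delta_{kl}$ supported on $k,l\le\dmin$, and split into three cases. \emph{Off-diagonal block} $1\le i\ne j\le\dmin$: the equations at $(i,j)$ and $(j,i)$, together with the Hermiticity relations $E_{ji}=\overline{E_{ij}}$ and $\overline F_{ij}=F_{ji}$, yield a $2\times 2$ linear system in the unknowns $\iu E_{ij}$ and $\iu\overline F_{ij}$ with determinant $\sigma_j^2-\sigma_i^2$, which is nonzero because $\sigma$ is regular. Cramer's rule then produces the two formulas in~\eqref{eq:ad-d-inv}, and the $2\times 2$ construction automatically enforces Hermiticity on the off-diagonal entries. \emph{Diagonal block} $i=j\le\dmin$: the single equation $\iu E_{ii}\sigma_i+\iu\overline F_{ii}\sigma_i=-A_{ii}$ is purely imaginary on both sides (the left-hand side by Hermiticity of $E$ and $F$, the right-hand side by the characterisation of $\Sigma^\perp$ already established), so it is consistent and leaves one degree of freedom; the symmetric choice $\iu E_{ii}=\iu F_{ii}$ then picks out the unique representative orthogonal to the kernel of $\ad_\sigma^d$ on this line, giving $\iu E_{ii}=\iu F_{ii}=-A_{ii}/(2\sigma_i)$. \emph{Outside the minimal block} (only relevant if $d_1\ne d_2$): when $d_1>d_2$ and $i>d_2$, the term $\iu\tilde\sigma\overline F$ vanishes in row $i$, leaving the single-unknown equation $\iu E_{ij}\sigma_j=-A_{ij}$ for $j\le d_2$, which forces $\iu E_{ij}=-A_{ij}/\sigma_j$; the entries with $i,j>d_2$ are genuinely unconstrained and are set to zero so that the result sits in $\mathfrak k_\sigma^\perp$. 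Hermiticity then determines the mirror entries in the $i\le d_2<j$ block. The case $d_1<d_2$ is entirely analogous with the roles of $E$ and $F$ exchanged.

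The final check is a direct substitution of the resulting $(E,F)$ back into $-\iu(E\tilde\sigma+\tilde\sigma\overline F)$ to confirm that each block reproduces $A$, together with the observation that, by construction, the solution is Hermitian. The main obstacle is the treatment of the places where $\ad_\sigma^d$ has nontrivial kernel, namely the diagonal entries $i=j\le\dmin$ and (for $d_1\ne d_2$) the unconstrained square block of size $|d_1-d_2|$; here one must make the symmetric/trivial choice, since any other choice would still invert $\ad_\sigma^d$ but would fail to realise the Moore--Penrose pseudoinverse. Once the kernel directions are correctly quotiented out, everything else is a routine $2\times 2$ linear-algebra computation made possible by regularity of $\sigma$.
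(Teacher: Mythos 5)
Your proposal is correct and follows essentially the same route as the paper's proof: write $\ad_\sigma^d$ entrywise as $A=-\iu(E\tilde\sigma+\tilde\sigma\overline F)$, solve the $2\times2$ linear system at off-diagonal entries (with determinant $\sigma_j^2-\sigma_i^2$ nonzero by regularity), fix the diagonal by the symmetric choice $E_{ii}=F_{ii}$ to pick the representative in $\mathfrak{k}_\sigma^\perp$, and handle the rectangular block separately. The only cosmetic difference is that you derive the $\Sigma^\perp$ characterisation directly from the inner product $\Re\tr(\psi^*\phi)$, whereas the paper reads it off from the image of $\ad_\sigma^d$; both are fine.
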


\begin{proof}
We only consider the case $d_1\geq d_2$ for simplicity.
Consider the map $\iu E\otimes\id+\id\otimes\iu F\mapsto A:=-\iu(E\tilde\sigma+\tilde\sigma\overline F)$ where $\tilde\sigma$ is the corresponding diagonal matrix.
Then $A_{ii}=-\iu(E_{ii}+F_{ii})\sigma_i\in\iu\R$.
Hence, when we invert the map above, we will assume that $A_{ii}\in\iu\R$.
Moreover we need to find the kernel of the map, i.e.\ solve for $A=0$ (for all or any regular $\sigma$).
This happens if $E_{ii}+F_{ii}=0$, and $E_{ij}=F_{ij}=0$ for $j\leq d_2$ and $E_{ij}\in\C$ for $j>d_2$.
The orthocomplement of the kernel is given by $E_{ii}=F_{ii}$, and $E_{ij},F_{ij}\in\C$ for $j\leq d_2$ and $E_{ij}=0$ for $j>d_2$.
Using, for $i,j\leq\dmin$, that
$$
A_{ij}=-\iu(E_{ij}\sigma_j+F_{ji}\sigma_i), \quad
\overline A_{ji} = \iu(E_{ij}\sigma_i+F_{ji}\sigma_j),
$$
we find
\begin{align*}
\sigma_iA_{ij}+\sigma_j\overline A_{ji} &= \iu F_{ji}(\sigma_j^2-\sigma_i^2) \text{ thus } \iu F_{ij}=\frac{\sigma_jA_{ji} + \sigma_i\overline A_{ij}}{\sigma_i^2-\sigma_j^2} \\
\sigma_jA_{ij}+\sigma_i\overline A_{ji} &= \iu E_{ij}(\sigma_i^2-\sigma_j^2) \text{ thus } \iu E_{ij}=\frac{\sigma_jA_{ij} + \sigma_i\overline A_{ji}}{\sigma_i^2-\sigma_j^2}
\end{align*}
Finally for $j\leq d_2<i$ we find $\iu E_{ij}=-A_{ij}/\sigma_j$ and for $j>d_2$ we get $\iu E_{ij}=0$.
\end{proof}
\noindent Note that this lemma uniquely defines $\iu E\otimes\id + \id\otimes\iu F\in \uloc(d_1,d_2)$, although there is some freedom in the choice of $E$ and $F$ since we can shift some real multiple of the identity between them.

\subsection{Autonne--Takagi Factorization (Type CI)} \label{app:CI}

First discovered by Autonne~\cite{Autonne15} and Takagi~\cite{Takagi25}, the Autonne--Takagi factorization~\cite[Sec.~4.4]{HJ1ed2} states that for any complex symmetric matrix $A\in\mf{sym}(d,\C)$ there exists a unitary matrix $U\in\U(d)$ such that $UAU^\top$ is real and diagonal. The diagonal elements are uniquely defined up to order and signs, and they are in fact the singular values of $A$. 

The corresponding symmetric Lie algebra is that of type CI, usually represented by the real symplectic Lie algebra $\mf g_\CI=\mf{sp}(d,\R)$, see~\cite[Sec.~4.3]{Diss-Kleinsteuber} and again~\cite[Ch.~X \S2.3]{Helgason78}. 
The Cartan decomposition $\mf g_\CI=\mf k_\CI\oplus\mf p_\CI$ is given explicitly by
\begin{align*}
\mf k_\CI &= \left\lbrace\begin{bmatrix}
A&B\\-B&A
\end{bmatrix} : A=-A^\top, B=B^\top,\,\, A,B\in\R^{d,d}\right\rbrace,
\\
\mf p_\CI &= \left\lbrace\begin{bmatrix}
C&D\\D&-C
\end{bmatrix} : C=C^\top, D=D^\top,\,\, C,D\in\R^{d,d}\right\rbrace.
\end{align*}

\noindent The corresponding state space isomorphism $\imath^s$ is given by 
$$
\imath^s:\mathrm{Sym}^2(\C^d)\to\mf p_\CI,\, \ket\psi\mapsto\begin{pmatrix}\Re\psi&-\Im\psi\\-\Im\psi&-\Re\psi\end{pmatrix}.
$$

\begin{lemma} \label{lemma:imath-ci}
The map $\imath^s$ is an $\R$-linear isometric isomorphism.
The subspace $\mf a_\CI:=\imath^s(\Sigma)$ is maximal Abelian and $\imath^s\circ\diag\circ\Pi_\Sigma=\Pi_{\mf a_\CI}\circ\imath^s$.
The Weyl group $\mb W_\CI$ is isomorphic to the generalized permutations $S_d\wr\Z_2$ and $\mf w_\CI:=\imath^s(\Sigma^\plusdown)$ is a Weyl chamber.
\end{lemma}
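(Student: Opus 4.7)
The plan is to mirror the argument used for Lemma~\ref{lemma:imath-aiii}, substituting the Autonne--Takagi factorization for the complex singular value decomposition. First I would check that $\imath^s$ is well-defined. Given $\ket\psi\in\mathrm{Sym}^2(\C^d)$ with corresponding complex symmetric matrix $\psi$, both $\Re\psi$ and $\Im\psi$ are real symmetric, so setting $C:=\Re\psi$ and $D:=-\Im\psi$ satisfies the symmetry constraints in the definition of $\mf p_\CI$. The map is manifestly $\R$-linear, and an explicit inverse sending a block matrix with blocks $C,D,D,-C$ to $\ket{C-\iu D}$ establishes bijectivity.

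For the isometry, write $\psi=C+\iu D'$ with $C,D'$ real symmetric. A direct block multiplication shows that $\imath^s(\psi)^2$ has both diagonal blocks equal to $C^2+D'^2$, so
\begin{equation*}
\tfrac12\tr(\imath^s(\psi)\imath^s(\psi))=\tr(C^2+D'^2)=\tr(\psi^*\psi)=\|\ket\psi\|^2,
\end{equation*}
which matches the inner product conventions of Remark~\ref{rmk:inner-prods}. Polarization upgrades this to the statement for the bilinear form.

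The maximal Abelian claim is essentially by inspection: $\imath^s(\Sigma)$ consists of block-diagonal matrices with diagonal blocks $\Lambda$ and $-\Lambda$ for $\Lambda\in\mf{diag}(d,\R)$, which is the standard choice of maximal Abelian subspace for type CI, see~\cite[Tab.~2]{Diag22}. Because $\imath^s$ is an isometric isomorphism it intertwines the orthogonal projections onto the respective subspaces, and $\imath^s\circ\diag\circ\Pi_\Sigma=\Pi_{\mf a_\CI}\circ\imath^s$ follows formally.

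Finally, the Weyl group action on $\mf a_\CI$ can be read off from the uniqueness properties of Autonne--Takagi: a factorization $\psi=V^\top\tilde\sigma V$ is unique only up to permutation and sign change of the singular values, and any such generalized permutation can be realized by an appropriate unitary $V$ (permutation matrices for the $S_d$ part, diagonal matrices with entries $\pm1,\pm\iu$ for the sign flips, since conjugation by $\iu$ under the transpose-pairing flips the sign). This identifies $\mb W_\CI\cong S_d\wr\Z_2$, and $\imath^s(\Sigma^\plusdown)$ --- diagonals with non-negative entries in non-increasing order --- is the standard fundamental domain of this action, hence a Weyl chamber. The only non-routine step is the identification of the Weyl group; as an alternative one may invoke the classification of type CI symmetric pairs directly.
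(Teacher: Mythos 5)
Your proof is correct and takes essentially the same route as the paper, which explicitly says the argument is analogous to Lemma~\ref{lemma:imath-aiii} and only writes out the isometry check $\tfrac12\tr(\imath^s(\ket\psi)\imath^s(\ket\phi))=\tr(\Re\psi\Re\phi+\Im\psi\Im\phi)=\Re(\tr(\psi^*\phi))$; you supply the same trace computation (in the diagonal case plus polarization) and just fill in more of the routine verifications (well-definedness, the explicit inverse, and the concrete realization of the $\mathbb Z_2$ factors by diagonal unitaries with entries in $\{\pm1,\pm\iu\}$).
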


\begin{proof}
The proof is analogous to that of Lemma~\ref{lemma:imath-aiii} so we just compute the inner product on $\mf p_\CI$:
$$
\frac12\tr(\imath^s(\ket\psi)\imath^s(\ket\phi))
=
\tr(\Re(\psi)\Re(\phi)+\Im(\psi)\Im(\phi))
=
\Re(\tr(\psi^*\phi)).
$$
Hence $\imath^s$ is an isometry and this concludes the proof.
\end{proof}
Now consider the following maps
\begin{alignat*}{4}
\jmath^s&:\Uloc^s(d)\to\Ad_{\mb K_{\CI}},\, &V\otimes V&\mapsto\Ad_{\begin{psmallmatrix}\Re V&\Im V\\-\Im V&\Re V\end{psmallmatrix}}, \\
\jmath^s_\star&:\uloc^s(d)\to\ad_{\mf k_{\CI}},\, &\iu H\otimes\id + \id\otimes\iu H &\mapsto \ad_{\begin{psmallmatrix}\Re(\iu H)&\Im(\iu H)\\-\Im(\iu H)&\Re(\iu H)\end{psmallmatrix}}.
\end{alignat*}

\begin{lemma} \label{lemma:jmath-ci}
The maps $\jmath^s$ and $\jmath^s_\star$ are Lie isomorphisms satisfying the compatibility conditions
\begin{equation} \label{eq:compat-ci}
\jmath^s(U)\imath^s(\ket\psi)=\imath^s(U\ket\psi), \quad \jmath^s_\star(\iu H)\imath^s(\ket\psi)=\imath^s(\iu H\ket\psi).
\end{equation}
As in Lemma~\ref{lemma:jmath-aiii} we get the correspondence of infinitesimal action, induced vector fields and quotient map.
\end{lemma}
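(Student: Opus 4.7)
The plan mirrors the proof of Lemma~\ref{lemma:jmath-aiii}, but is in fact somewhat easier because in the CI setting the compact subgroup carries no trace constraint. Let $\Phi(V) := \begin{psmallmatrix}\Re V & \Im V \\ -\Im V & \Re V\end{psmallmatrix}$ be the standard real representation of $V \in \U(d)$ as a $2d \times 2d$ real matrix. First I would verify that $\Phi(V) \in \mb K_\CI = \mathrm{Sp}(2d,\R) \cap \mathrm O(2d)$ directly from $V^*V = I$, and that $\Phi$ is a group homomorphism via $\Re(VW) = \Re V \Re W - \Im V \Im W$ and $\Im(VW) = \Re V \Im W + \Im V \Re W$; combined with $\Phi(\U(d)) = \mb K_\CI$, this identifies $\U(d)$ with $\mb K_\CI$. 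Since $V \mapsto V\otimes V$ is a double cover $\U(d) \to \Uloc^s(d)$ with kernel $\{\pm I\}$ (cf.\ the footnote in Section~\ref{sec:bilinear-system}), and $\Phi(-I) = -I_{2d}$ lies in the center of $\mathrm{Sp}(2d,\R)$, the composition $\Ad \circ \Phi$ descends to a well-defined map $\jmath^s$ on $\Uloc^s(d)$. A short computation (using that $\mf p_\CI$ anticommutes with $\Phi(\iu I) = J$) shows that the kernel of $\Ad\circ\Phi$ is exactly $\{\pm I\}$, so $\jmath^s$ is a bijection onto $\Ad_{\mb K_\CI}$.

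The technical heart of the proof is the compatibility condition $\jmath^s(V\otimes V)\imath^s(\ket\psi) = \imath^s(V\psi V^\top)$, which I would verify by a direct block-matrix calculation. Writing $V = A + \iu B$ and $\psi = X + \iu Y$ with $A,B,X,Y$ real (and $X,Y$ symmetric since $\psi$ is), one expands $V\psi V^\top$ and matches its real and imaginary parts with the blocks of $\Phi(V)\imath^s(\ket\psi)\Phi(V)^\top$, using $\Phi(V)^{-1} = \Phi(V)^\top$. This step is elementary but is the genuine bookkeeping part of the proof and the most likely source of sign or transpose errors; conceptually it just expresses that $\imath^s$ intertwines the $V\otimes V$-action on $\mathrm{Sym}^2(\C^d)$ with the adjoint action of $\Phi(V)$ on $\mf p_\CI$.

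Differentiating $\jmath^s$ at the identity then yields $\jmath^s_\star$: for $V = e^{t\iu H}$ with $\iu H \in \mf u(d)$ one computes $\tfrac{d}{dt}\big|_{t=0}(V\otimes V) = \iu H\otimes\id + \id\otimes\iu H$ and $\tfrac{d}{dt}\big|_{t=0}\Phi(V) = \begin{psmallmatrix}\Re(\iu H) & \Im(\iu H)\\ -\Im(\iu H) & \Re(\iu H)\end{psmallmatrix}$, recovering the formula in the statement; the infinitesimal compatibility relation in~\eqref{eq:compat-ci} then follows by differentiating the group-level identity already established.

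Finally, the claimed correspondences of the infinitesimal action $\ad^s_\sigma$ with $\ad_\sigma$ on $\mf k_\CI$, of the induced vector fields $-H^s_{V\otimes V}$ with the vector fields $X_{\jmath^s(V\otimes V)}$ on $\mf a_\CI$, and of $\mathrm{sing}^\plusdown$ with the Weyl quotient map $\mf p_\CI \to \mf a_\CI/\mb W_\CI$ all follow verbatim from the end of the proof of Lemma~\ref{lemma:jmath-aiii}, replacing $\imath^d, \jmath^d$ by $\imath^s, \jmath^s$ throughout and invoking Lemma~\ref{lemma:imath-ci} in place of Lemma~\ref{lemma:imath-aiii}.
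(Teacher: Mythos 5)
Your proposal is correct and follows essentially the same strategy as the paper: check well-definedness of $\jmath^s$ on the double cover $\U(d)\to\Uloc^s(d)$ (the paper notes $\Ad_{-U}=\Ad_U$, you phrase it via $\Phi(-I)=-I_{2d}$ and the center of $\mathrm{Sp}(2d,\R)$, which is the same point), verify the compatibility condition~\eqref{eq:compat-ci} by the block computation matching $\Re$ and $\Im$ of $V\psi V^\top$ against $\Phi(V)\imath^s(\ket\psi)\Phi(V)^\top$, differentiate to get the Lie-algebra version, and reduce the remaining correspondences to Lemma~\ref{lemma:jmath-aiii} — exactly as in the paper, which cites~\cite[Prop.~4.7]{Diss-Kleinsteuber} for the isomorphism instead of your explicit kernel argument. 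One small caveat: the remark that ``the kernel of $\Ad\circ\Phi$ is exactly $\{\pm I\}$'' because $\mf p_\CI$ anticommutes with $J$ only rules out $\pm J$ from the kernel; to conclude, one should also note that triviality on $\mf p_\CI$ forces triviality on $\mf k_\CI=[\mf p_\CI,\mf p_\CI]$ and hence on all of $\mf g_\CI$, reducing to the center of $\mathrm{Sp}(2d,\R)$, but this is a minor bookkeeping point that does not affect the overall argument.
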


\begin{proof}
Since $V\otimes V=(-V)\otimes(-V)$ we have to check that the map is well defined.
But it is clear that $\Ad_{-U}=\Ad_U$ and hence $\jmath^s$ is well defined.
That it is an isomorphism follows from~\cite[Prop.~4.7]{Diss-Kleinsteuber}.
The remainder of the proof is analogous to that of Lemma~\ref{lemma:jmath-aiii}.
\end{proof}

As in Remark~\ref{rmk:svd-relation}, the relation to the Autonne--Takagi factorization can be seen from~\eqref{eq:compat-ci}, which explicitly states that $\jmath^s(V\otimes V)\imath^s(\ket\psi)=\imath^s(\ket{V\psi V^\top})$, and from the fact that $\imath^s(\Sigma)=\mf a_\CI$.

As described in Remark~\ref{rmk:ad-inv-def} we can explicitly compute the appropriate inverse of the map $\ad^s_\sigma$.
Note that in this case we have $\mf k_\sigma^\perp = \mf k_\CI$.

\begin{lemma} \label{lemma:ad-s-inv}
Let $\sigma\in S^{d-1}$ be regular. Then it holds that $\Sigma^\perp=\{\ket\psi\in\mathrm{Sym}^2(\C^d) : \psi_{ii}\in\iu\R, 1\leq i\leq d\}$, and the map $(\ad_\sigma^s)^{-1}$ can be explicitly described as
$$
(\ad_\sigma^s)^{-1} : \Sigma^\perp \to \uloc^s(d), \quad \ket{A} \mapsto \iu E\otimes\id+\id\otimes\iu E,
$$
where 
$$
E_{ii}=\frac{\iu A_{ii}}2, \quad E_{ij} = -\frac{\Im(A_{ij})}{\sigma_i+\sigma_j} - \iu\frac{\Re(A_{ij})}{\sigma_i-\sigma_j}.
$$
\end{lemma}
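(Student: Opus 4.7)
The plan is to mirror the strategy of Lemma~\ref{lemma:ad-d-inv}, but the calculation is simpler because here the two factors of the local unitary coincide, so there is only one Hermitian matrix $E$ to solve for.

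First I would identify $\Sigma^\perp$. Recall from Remark~\ref{rmk:inner-prods} that the real inner product on $\mathrm{Sym}^2(\C^d)$ is $\Re\tr(\psi^*\phi)$, and $\Sigma$ consists of real diagonal symmetric matrices. A symmetric $\phi\in\mf{sym}(d,\C)$ is orthogonal to every $D=\mathrm{diag}(r_1,\ldots,r_d)$ with $r_i\in\R$ iff $\sum_i r_i\Re(\phi_{ii})=0$ for all $(r_i)\in\R^d$, i.e.\ iff $\Re(\phi_{ii})=0$ for all $i$. This gives the stated description of $\Sigma^\perp$.

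Next I would write down $\ad^s_\sigma$ explicitly. By definition, $A:=-\iu(E\tilde\sigma+\tilde\sigma\overline E)$, so using $\overline{E}_{ij}=\overline{E_{ij}}$ (Hermiticity of $E$) and $(\tilde\sigma)_{ij}=\sigma_i\delta_{ij}$ I obtain
\begin{equation*}
A_{ij}=-\iu\bigl(E_{ij}\sigma_j+\sigma_i\overline{E_{ij}}\bigr).
\end{equation*}
In particular $A_{ii}=-2\iu\sigma_iE_{ii}\in\iu\R$ (consistent with the description of $\Sigma^\perp$ just derived), and the symmetry $A_{ij}=A_{ji}$ is automatic from $E_{ji}=\overline{E_{ij}}$.

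Then I would check that $\ad^s_\sigma$ is injective. Setting $A=0$, the diagonal equation forces $E_{ii}=0$ (since $\sigma_i\neq 0$ by regularity), while for $i\neq j$, writing $E_{ij}=a+\iu b$ gives $a(\sigma_i+\sigma_j)+\iu b(\sigma_j-\sigma_i)=0$; since $\sigma_i+\sigma_j>0$ and $\sigma_i\neq\sigma_j$ by regularity, both $a=0$ and $b=0$. Hence the kernel is trivial, and since $\dim\uloc^s(d)=d^2=\dim\Sigma^\perp$, the map $\ad^s_\sigma:\uloc^s(d)\to\Sigma^\perp$ is a linear isomorphism. So the global pseudoinverse coincides with the genuine inverse.

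Finally I would solve the system entrywise. The diagonal relation $A_{ii}=-2\iu\sigma_iE_{ii}$ inverts immediately to give the diagonal formula. For $i\neq j$, substituting $E_{ij}=a+\iu b$ into the displayed expression and separating real and imaginary parts yields
\begin{equation*}
A_{ij}=b(\sigma_j-\sigma_i)-\iu a(\sigma_i+\sigma_j),
\end{equation*}
so that $a=-\Im(A_{ij})/(\sigma_i+\sigma_j)$ and $b=-\Re(A_{ij})/(\sigma_i-\sigma_j)$, which is exactly the claimed off-diagonal formula $E_{ij}=-\Im(A_{ij})/(\sigma_i+\sigma_j)-\iu\Re(A_{ij})/(\sigma_i-\sigma_j)$. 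There is no real obstacle here; the only subtle point is keeping track of the signs arising from conjugation and from the factor $-\iu$, and ensuring regularity is used precisely where the denominators $\sigma_i-\sigma_j$ and $\sigma_i$ could vanish.
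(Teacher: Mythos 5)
Your proof is correct and follows essentially the same route as the paper: write out $\ad_\sigma^s(\iu E\otimes\id+\id\otimes\iu E)$ entrywise as $A_{ij}=-\iu(E_{ij}\sigma_j+\sigma_i\overline{E_{ij}})$ and invert directly; your added injectivity-plus-dimension-count check is a reasonable self-contained substitute for the paper's reliance on the general pseudoinverse theory (Remark~\ref{rmk:ad-inv-def}). Two small points worth flagging. First, in the injectivity step you should argue $\sigma_i+\sigma_j\neq 0$ rather than $\sigma_i+\sigma_j>0$: on $S^{d-1}$ the coordinates may be negative, and what regularity actually forbids is $|\sigma_i|=|\sigma_j|$, so $\sigma_i\neq\pm\sigma_j$, which is what you need. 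Second, the relation $A_{ii}=-2\iu\sigma_i E_{ii}$ that you (and the paper's own proof) derive inverts to $E_{ii}=\iu A_{ii}/(2\sigma_i)$, not $\iu A_{ii}/2$; the factor $\sigma_i$ is missing from the denominator in the lemma's displayed formula, and your phrase ``inverts immediately to give the diagonal formula'' glosses over this discrepancy rather than noting it.
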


\begin{proof}
Let $A=\ad_{\sigma}^s(\iu E) = -\iu (E\tilde\sigma + \tilde\sigma\overline E)$.
Then we have $A_{ii} = -2\iu E_{ii}\sigma_i$ and $A_{ij} = -\iu (E_{ij}\sigma_j+\sigma_i\overline E_{ij})$. 
Hence $E_{ii} = \frac{\iu A_{ii}}{2\sigma_i}$.
To invert the second equation we compute
\begin{align*}
A_{ij} = -\iu (E_{ij}\sigma_j+\sigma_i\overline E_{ij}), \quad
\overline A_{ij} = +\iu (\overline E_{ij}\sigma_j+\sigma_i E_{ij})
\end{align*}
and hence taking sum and difference we get
\begin{align*}
2\Re(A_{ij}) = \iu (E_{ij}-\overline E_{ij})(\sigma_i-\sigma_j) = -2\Im(E_{ij})(\sigma_i-\sigma_j)
&\Rightarrow \Im(E_{ij})=\frac{\Re(A_{ij})}{\sigma_j-\sigma_i}, \\
2\iu\Im(A_{ij}) = -\iu (E_{ij}+\overline E_{ij})(\sigma_i+\sigma_j) = -2\iu\Re(E_{ij})(\sigma_i+\sigma_j)
&\Rightarrow \Re(E_{ij}) = -\frac{\Im(A_{ij})}{\sigma_i+\sigma_j}.
\end{align*}
This concludes the proof.
\end{proof}

\subsection{Hua Factorization (Type DIII)} \label{app:DIII}

A skew-symmetric version of the Autonne--Takagi factorization also exists~\cite[Coro.~4.4.19]{HJ1ed2}.
It is called the Hua factorization, and was originally proven in~\cite[Thm.~7]{Hua44}. 
It states that for every skew-symmetric complex matrix $A\in\mf{asym}(d,\C)$ there exists a unitary $U\in\U(d)$ such that $UAU^\top$ is real and block diagonal with skew-symmetric blocks of size $2\times2$. 
If $d$ is odd then there is an additional $1\times1$ block containing a zero. 
We call such matrices quasi-diagonal.
Each $2\times2$ block is determined by a single real number (and its negative) which taken together yield the singular values of $A$.

This matrix factorization is related to the symmetric Lie algebra of type DIII, usually represented by the $\mf{so}^*(2d)$, see~\cite[App.~A.6]{Diss-Kleinsteuber} and again~\cite[Ch.~X \S2.3]{Helgason78}.\footnote{Note that~\cite{Helgason78} uses a different but isomorphic matrix representation.} In this case we have the following Cartan like decomposition
$$
\mf k_\DIII = \left\lbrace 
\begin{pmatrix}
\iu H & 0 \\ 0 &-\iu\overline H
\end{pmatrix} : \iu H\in\mf u(d) 
\right\rbrace,
\quad
\mf p_\DIII = \left\lbrace 
\begin{pmatrix}
0 & \psi \\ \psi^* & 0
\end{pmatrix} : \psi=-\psi^\top, \psi\in\C^{d,d}
\right\rbrace.
$$

\noindent Clearly the state space isomorphism is 
$$
\imath^a : \textstyle\bigwedge^2(\C^d)\to\mf p_\DIII, \quad \ket\psi \mapsto \begin{pmatrix}0&\psi\\\psi^*&0\end{pmatrix}.
$$

\begin{lemma} \label{lemma:imath-diii}
The map $\imath^a$ is an $\R$-linear isometric isomorphism. 
The subspace $\mf a_\DIII:=\imath^a(\Xi)$ is maximal Abelian and $\imath^a\circ\qdiag\circ\Pi_\Xi=\Pi_{\mf a}\circ\imath^a$.
The Weyl group $\mb W_\DIII$ is isomorphic to the generalized permutations $S_{\floor{d/2}}\wr\Z_2$ and $\mf w_\DIII:=\imath^a(\Xi^\plusdown)$ is a Weyl chamber.
\end{lemma}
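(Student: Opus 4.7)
The plan is to mirror the proofs of Lemmas~\ref{lemma:imath-aiii} and~\ref{lemma:imath-ci}, adjusting the Frobenius computations and the Weyl-group analysis to the skew-symmetric setting. $\R$-linearity and bijectivity of $\imath^a$ are immediate from its block-off-diagonal definition, since $\psi\mapsto\bigl(\begin{smallmatrix}0&\psi\\\psi^*&0\end{smallmatrix}\bigr)$ identifies $\bigwedge^2(\C^d)$ (as a real vector space) with $\mf p_\DIII$ by construction.

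For the isometry claim I would repeat the trace computation used in the previous two lemmas with the inner product conventions of Remark~\ref{rmk:inner-prods}:
$$
\tfrac12\tr\bigl(\imath^a(\ket\psi)\,\imath^a(\ket\phi)\bigr)
=\tfrac12\tr(\psi\phi^*+\psi^*\phi)
=\Re\tr(\psi^*\phi),
$$
which matches the real inner product on states. Then $\imath^a$ carries $\Xi$ isometrically onto $\mf a_\DIII$, and because $\imath^a$ intertwines the ambient inner products, it also intertwines the orthogonal projections: both $\imath^a\circ\qdiag\circ\Pi_\Xi$ and $\Pi_{\mf a_\DIII}\circ\imath^a$ compute the orthogonal projection onto $\mf a_\DIII$ in their respective inner-product spaces. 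Maximal Abeliannes of $\mf a_\DIII$ inside $\mf p_\DIII$ is the standard classification for type DIII, which I would simply cite from~\cite[Tab.~2]{Diag22}.

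The main content is the Weyl-group statement. By the Hua factorization any element of $\mf p_\DIII$ is $\Ad_{\mb K_\DIII}$-conjugate to an element of $\mf a_\DIII$, and two quasi-diagonal elements lie in the same orbit exactly when their tuples of pair-entries $(\xi_1,\ldots,\xi_{\floor{d/2}})$ agree up to permutation and sign. Permutations of the pairs are realized by block-permutation unitaries in $\U(d)$, lifted to $\mb K_\DIII$ via $V\mapsto\begin{psmallmatrix}V&0\\0&\overline V\end{psmallmatrix}$; a sign flip $\xi_i\mapsto-\xi_i$ is realized by swapping the two basis vectors within the $i$-th $2\times2$ block, which also lies in $\U(d)$. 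Hence the normalizer-modulo-centralizer action on $\mf a_\DIII$ is exactly $S_{\floor{d/2}}\wr\Z_2$. The Weyl chamber $\mf w_\DIII=\imath^a(\Xi^\plusdown)$ then corresponds to the standard fundamental domain of this action, namely non-increasing non-negative tuples.

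The only place where care is required is the sign ambiguity: unlike in type AIII, where each singular value carries an independent sign, here the two entries $\pm\xi_i$ of a block are rigidly tied together, so one must check that the \emph{overall} sign of the block can be flipped by an element of $\U(d)$ (as above) and that this is the only sign freedom. Once that is verified the Weyl-group structure $S_{\floor{d/2}}\wr\Z_2$ and the identification of $\mf w_\DIII$ follow exactly as in the preceding two cases, completing the proof.
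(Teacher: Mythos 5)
Your proposal is correct and takes essentially the same route as the paper, which simply defers to Lemma~\ref{lemma:imath-aiii} with the remark that the proof is entirely analogous. You have in effect written out that analogy in full, including the DIII-specific point that each sign flip $\xi_i\mapsto -\xi_i$ is realized by the in-block swap $\begin{psmallmatrix}0&1\\1&0\end{psmallmatrix}$ acting on the $i$-th $2\times2$ block, which is a useful clarification the paper leaves implicit.
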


\begin{proof}
The proof is entirely analogous to that of Lemma~\ref{lemma:imath-aiii}.
\end{proof}

The isomorphisms on the Lie group and algebra level are:

\begin{alignat*}{4}
\jmath^a &: \Uloc^s(d)\to\Ad_{\mb K_{\DIII}},& \quad V\otimes V &\mapsto \Ad_{\begin{psmallmatrix}V&0\\0&\overline V\end{psmallmatrix}} \\
\jmath^a_\star &: \uloc^s(d)\to\ad_{\mf k_{\DIII}},& \quad \iu H\otimes\id + \id\otimes\iu H &\mapsto \ad_{\begin{psmallmatrix}\iu H&0\\0&-\iu\overline H\end{psmallmatrix}}
\end{alignat*}

\begin{lemma} \label{lemma:jmath-diii}
The maps $\jmath^a$ and $\jmath^a_\star$ are Lie isomorphisms satisfying the compatibility conditions
\begin{equation} \label{eq:compat-diii}
\jmath^a(U)\imath^a(\ket\psi)=\imath^a(U\ket\psi), \quad \jmath^a_\star(\iu H)\imath^a(\ket\psi)=\imath^a(\iu H\ket\psi).
\end{equation}
As in Lemma~\ref{lemma:jmath-aiii} we get the correspondence of infinitesimal action, induced vector fields and quotient map.
\end{lemma}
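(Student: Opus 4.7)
The plan is to follow the pattern already set out in the proofs of Lemmas~\ref{lemma:jmath-aiii} and~\ref{lemma:jmath-ci}, verifying only the ingredients that are specific to type $\DIII$. First I would check that $\jmath^a$ is well-defined: the only ambiguity in writing a symmetric local unitary as $V\otimes V$ is the sign $V\mapsto -V$, but $\Ad$ kills signs, so $\Ad_{\mathrm{diag}(V,\overline V)} = \Ad_{\mathrm{diag}(-V,-\overline V)}$. Then $\jmath^a$ is a group homomorphism by the block computation
$$
\begin{pmatrix} V_1 & 0 \\ 0 & \overline{V_1} \end{pmatrix}\begin{pmatrix} V_2 & 0 \\ 0 & \overline{V_2} \end{pmatrix} = \begin{pmatrix} V_1V_2 & 0 \\ 0 & \overline{V_1V_2} \end{pmatrix},
$$
which matches $(V_1\otimes V_1)(V_2\otimes V_2)=(V_1V_2)\otimes(V_1V_2)$. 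Surjectivity onto $\Ad_{\mb K_\DIII}$ is immediate from the description of $\mb K_\DIII$ as the block-diagonal unitaries $\mathrm{diag}(V,\overline V)$, and differentiating at the identity yields that $\jmath^a_\star$ is a Lie algebra isomorphism.

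Next I would verify the compatibility condition~\eqref{eq:compat-diii} by direct block matrix multiplication:
$$
\jmath^a(V\otimes V)\imath^a(\ket\psi) = \begin{pmatrix} V & 0 \\ 0 & \overline V \end{pmatrix}\begin{pmatrix} 0 & \psi \\ \psi^* & 0 \end{pmatrix}\begin{pmatrix} V^* & 0 \\ 0 & V^\top \end{pmatrix} = \begin{pmatrix} 0 & V\psi V^\top \\ \overline V \psi^* V^* & 0 \end{pmatrix}.
$$
Since $\overline V\psi^* V^* = (V\psi V^\top)^*$ and $V\psi V^\top$ remains skew-symmetric because $\psi$ is, the right-hand side is exactly $\imath^a(\ket{V\psi V^\top}) = \imath^a((V\otimes V)\ket\psi)$, using the matrix realization of the local action from Section~\ref{sec:matrix-decs}. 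Differentiating at the identity and using $\R$-linearity of $\imath^a$ gives the infinitesimal statement $\jmath^a_\star(\iu H)\imath^a(\ket\psi) = \imath^a(\iu H\ket\psi)$.

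Finally, the remaining claims reduce to arguments already carried out in Lemma~\ref{lemma:jmath-aiii}: the identity $\imath^a(\ad^a_\xi(\iu H)) = -\jmath^a_\star(\iu H)\imath^a(\ket\xi)$ follows from the definition of $\ad^a_\xi$ combined with~\eqref{eq:compat-diii}; the induced vector field identity $-H^a_{V\otimes V} = (\imath^a\circ\qdiag)^\star X_{\jmath^a(V\otimes V)}$ follows from~\eqref{eq:compat-diii} together with $\Pi_{\mf a_\DIII}\circ\imath^a = \imath^a\circ\qdiag\circ\Pi_\Xi$ given in Lemma~\ref{lemma:imath-diii}; and the identification of $\mathrm{qsing}^\plusdown$ with the quotient map $\mf p_\DIII\to\mf a_\DIII/\mb W_\DIII$ is a restatement of the uniqueness part of the Hua factorization. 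The only step that is not entirely mechanical is confirming that $\mathrm{diag}(V,\overline V)$ lies in the standard matrix realization of $\mb K_\DIII$ from Appendix~\ref{app:DIII} and that this exhausts $\mb K_\DIII$ up to the expected $\{\pm\id\}$ kernel; but this is a standard fact about the DIII realization and is the only place I anticipate needing to be careful.
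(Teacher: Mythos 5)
Your proposal is correct and follows the same route as the paper: the paper's proof simply notes that $\Ad_{-U}=\Ad_U$ makes $\jmath^a$ well-defined and that it is clearly a Lie group isomorphism, then declares the rest analogous to Lemma~\ref{lemma:jmath-aiii}. You have merely spelled out the block-matrix computation
$(V\psi V^\top)^* = \overline V\psi^*V^*$
and the reductions to the earlier lemmas that the paper leaves implicit; there is no substantive difference.
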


\begin{proof}
Since $\Ad_{-U}=\Ad_U$ the map $\jmath^a$ is well-defined, and it is clearly a Lie group isomorphism.
The remainder of the proof is analogous to that of Lemma~\ref{lemma:jmath-aiii}.
\end{proof}

The relation to the Hua factorization can be seen form~\eqref{eq:compat-diii}, which becomes $\jmath^a(V\otimes V)\imath^a(\ket\psi)=\imath^a(\ket{V\psi V^\top})$, and the fact that $\imath^a(\Xi)=\mf a_\DIII$.


\begin{lemma} \label{lemma:ad-a-inv}
Let $\xi\in S^{\floor{d/2}-1}$ be regular.
It holds that $\Xi^\perp=\{\ket\psi\in\bigwedge^2(\C^d): \psi_{2i-1,2i}\in\iu\R, \, i=1,\dots,\floor{d/2} \}$.
The map $(\ad_\xi^a)^{-1}$ takes the following form:
$$
(\ad_\xi^a)^{-1} : \Xi^\perp \to \uloc^s(d), \quad \ket{A}\mapsto \iu E\otimes\id + \id\otimes\iu E,
$$
where for $1\leq i,j\leq \floor{d/2}$ we get
\begin{align*}
(\iu E)_{(ii)} = -\frac{a_i}{2\xi_i}\id_2, \quad 
(\iu E)_{(ij)} = \frac{\xi_i J \overline A_{(ij)} + \xi_j A_{(ij)}J}{\xi_j^2-\xi_i^2},
\end{align*}
where $(\iu E)_{(ij)}$ and $A_{(ij)}$ indexes the $2\times2$ blocks of the respective matrices, $\id_2=\begin{psmallmatrix}1&0\\0&1\end{psmallmatrix}$ and $J=\begin{psmallmatrix}0&1\\-1&0\end{psmallmatrix}$.
By $a_i\in\iu\R$ we denote the value satisfying $A_{ii}=a_iJ$.
In the case where $d$ is odd we additionally have
$$
\iu E_{d,2\iu-1}=\iu \overline E_{2\iu-1,d} = \frac{A_{d,2\iu}}{\xi_i}, \quad
\iu E_{d,2\iu}=\iu \overline E_{2\iu,d} = -\frac{A_{d,2\iu-1}}{\xi_i}, \quad
\iu E_{d,d}=0.
$$
\end{lemma}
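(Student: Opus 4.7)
The plan is to mirror the structure of the proofs of Lemmas~\ref{lemma:ad-d-inv} and~\ref{lemma:ad-s-inv}: first identify $\Xi^\perp$, then express $\ad_\xi^a$ in the natural $2\times 2$ block decomposition induced by $\tilde\xi$, determine the kernel of $\ad_\xi^a$ at a regular point, and finally invert on the orthogonal complement of the kernel. Throughout I would work with $\tilde E := \iu E$, which is skew-Hermitian, matching the form $(\iu E)_{(ij)}$ that appears in the statement.

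For the first claim, the identification of $\Xi^\perp$ proceeds as follows. Since $\Xi$ is the $\R$-span of the vectors $\tfrac{1}{\sqrt 2}(\ket{2i-1,2i}-\ket{2i,2i-1})$ and the inner product on states is $\Re\braket{\cdot|\cdot}$, computing $\braket{\qdiag(\xi)|\psi}$ for a skew-symmetric $\psi$ and using $\psi_{2i,2i-1}=-\psi_{2i-1,2i}$ produces a real multiple of $\sum_i \xi_i\Re(\psi_{2i-1,2i})$. Vanishing for all $\xi\in\R^{\floor{d/2}}$ is equivalent to $\psi_{2i-1,2i}\in\iu\R$ for every $i$; in block form this says precisely that $A_{(ii)}=a_iJ$ with $a_i\in\iu\R$.

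Next, using $\tilde E=\iu E$ one has $A=\tilde\xi\overline{\tilde E}-\tilde E\tilde\xi$, and since $\tilde\xi$ is block diagonal with $(ii)$-block proportional to $J$, the equation reduces to the block-wise identity
$$
A_{(ij)} \;\propto\; \xi_i\,J\,\overline{\tilde E}_{(ij)} \;-\; \xi_j\,\tilde E_{(ij)}\,J
$$
(plus simpler scalar equations for the last row and column when $d$ is odd). For off-diagonal blocks I would take the complex conjugate of this identity (noting $\overline J=J$), right-multiply the original by $J$ and left-multiply the conjugated one by $J$, and invoke $J^2=-\id$. The combination $\xi_j A_{(ij)}J+\xi_i J\overline{A}_{(ij)}$ is rigged so that the $J\overline{\tilde E}_{(ij)}J$ cross terms cancel and there remains a multiple of $(\xi_j^2-\xi_i^2)\tilde E_{(ij)}$, which is invertible exactly by the regularity assumption $\xi_i^2\neq\xi_j^2$. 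This yields the stated off-diagonal formula and simultaneously shows that the restriction of $\ad_\xi^a$ to the off-diagonal block space is bijective.

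For the diagonal blocks, a direct inspection on the real four-dimensional space of skew-Hermitian $2\times 2$ matrices shows that the map $\tilde E_{(ii)}\mapsto J\overline{\tilde E}_{(ii)}-\tilde E_{(ii)}J$ equals $-\tr(\tilde E_{(ii)})\,J$; its kernel is therefore the trace-zero skew-Hermitian matrices, and its orthocomplement in the skew-Hermitian $2\times 2$ blocks is the one-dimensional line $\iu\R\cdot\id_2$. Solving the resulting scalar equation on this line produces $(\iu E)_{(ii)}=-\tfrac{a_i}{2\xi_i}\id_2$. In the odd-dimensional case the last row and column of $\tilde\xi$ vanish, so the entries $(\iu E)_{d,2i-1}$ and $(\iu E)_{d,2i}$ satisfy scalar linear equations inverted with factor $1/\xi_i$, while $(\iu E)_{dd}$ lies entirely in the kernel and is chosen to vanish. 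I expect the main obstacle to be purely the bookkeeping of $2\times 2$ block arithmetic---tracking the interplay of skew-Hermiticity of $\tilde E$, skew-symmetry of $A$, the reality of $J$, and the $\Xi^\perp$ constraint---rather than any substantive algebraic difficulty; regularity enters only to ensure that the denominators $\xi_j^2-\xi_i^2$ and $\xi_i$ are nonzero.
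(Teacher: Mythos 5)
Your proposal is correct and follows essentially the same route as the paper's own proof: work with $\tilde E = \iu E$, decompose into $2\times2$ blocks against the block\-/structure of $\tilde\xi$, form the rigged combination $\xi_j A_{(ij)}J + \xi_i J\overline A_{(ij)}$ to isolate $(\xi_j^2-\xi_i^2)\tilde E_{(ij)}$, and handle the diagonal blocks and odd\-/dimensional extra row/column separately. The only difference is that you spell out two details the paper takes for granted — the inner\-/product computation that identifies $\Xi^\perp$, and the verification that $\tilde E_{(ii)}\mapsto J\overline{\tilde E}_{(ii)}-\tilde E_{(ii)}J = -\tr(\tilde E_{(ii)})J$ on skew\-/Hermitian $2\times2$ blocks, whence the kernel is the traceless ones and the orthocomplement is $\iu\R\cdot\id_2$ — whereas the paper simply asserts the $\Xi^\perp$ formula and justifies the diagonal\-/block form by pointing to the Lie\-/algebraic orthocomplement $\mf k_\xi^\perp$; both justifications lead to the same conclusion.
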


\begin{proof}
First consider the even-dimensional case. 
Let $A:=\ad_\xi^a(\iu E\otimes\id + \id\otimes\iu E)=-\iu(E\tilde\xi+\tilde\xi\overline E)$.
For $1\leq i,j\leq \floor{d/2}$ we compute the bocks $A_{(ii)}=-\iu(E_{(ii)}\xi_iJ+\xi_iJ\overline E_{(ii)})$ as well as
\begin{align*}
A_{(ij)}= -\iu(E_{(ij)}\xi_jJ+\xi_iJ\overline E_{(ij)}), \quad \overline A_{(ij)}= +\iu(\overline E_{(ij)}\xi_iJ+\xi_jJE_{(ij)}).
\end{align*}
It follows that
$\xi_i J \overline A_{(ij)} + \xi_j A_{(ij)} J=\iu(\xi_j^2-\xi_i^2)E_{(ij)}$,
and hence we find that
$$
\iu E_{(ii)} = \frac{-a_i}{2\xi_i}\id_2, \quad 
\iu E_{(ij)} = \frac{\xi_i J \overline A_{(ij)} + \xi_j A_{(ij)} J}{\xi_j^2-\xi_i^2}.
$$
Here we used that the for $\jmath_\star(\iu E\otimes\id+\id\otimes\iu E)$ to lie in $\mf k_\xi^\perp$ the diagonal blocks $E_{(ii)}$ must be real multiples of the identity.
If $d$ is odd there is an additional column at the bottom and row on the right of $\iu E$ to be determined.
We find that
$$
A_{d,2i-1} = -\iu E_{d,2i}\xi_i, \quad
A_{d,2i} = +\iu E_{d,2i-1}\xi_i .
$$
The claimed results follow immediately.
\end{proof}

\bibliographystyle{acm}
\bibliography{./general.bib}

\end{document}